\newtheorem{theorem}{Theorem}
\newtheorem{lemma}{Lemma}
\newtheorem{proposition}{Proposition}
\newtheorem{definition}{Definition}
\title{Spread of Influence and Content in Mobile Opportunistic Networks}
\begin{document}

\author{Srinivasan~Venkatramanan,~\IEEEmembership{Student Member,~IEEE,}
        Anurag~Kumar,~\IEEEmembership{Fellow,~IEEE,}\protect\\
        Department of Electrical Communication Engineering, Indian Institute of Science, \protect\\ Bangalore - 560012, India.\protect\\
        E-mail: vsrini,anurag@ece.iisc.ernet.in
\IEEEcompsocitemizethanks{\IEEEcompsocthanksitem Some of the content of this paper was presented in the IEEE Infocom 2012 mini-conference \cite{srini-kumar12coevol-mp2p}. This work was supported by
IFCPAR (Indo-French Centre for the Promotion of Advanced Research) under Project 4000-ITA, and by the Department of Science and Technology (DST), Govt. of India, through a J.C. Bose Fellowship. 
}
\thanks{}}


\IEEEcompsoctitleabstractindextext{
\begin{abstract}
We consider a setting in which a single item of content (such as a song or a video clip) is disseminated in a population of mobile nodes by opportunistic copying when pairs of nodes come in radio contact. We propose and study models that capture the joint evolution of the population of nodes interested in the content (referred to as \emph{destinations}), and the population of nodes that possess the content. The evolution of interest in the content is captured using an influence spread model and the content spread occurs via epidemic copying. Nodes not yet interested in the content are called relays; the influence spread process converts relays into destinations. We consider the decentralized setting, where interest in the content and the spread of the content evolve by pairwise interactions between the mobiles. We derive fluid limits for the joint evolution models and obtain optimal policies for copying to relay nodes in order to deliver content to a desired fraction of destinations. We prove that a time-threshold policy is optimal while copying to relays. We then provide insights into the effects of various system parameters on the co-evolution model through simulations.
\end{abstract}
\begin{keywords}
 Opportunistic Networks, Influence spread, Content dissemination
\end{keywords}}
\maketitle
\section{Introduction}
Due to the ubiquity of cellular networks, there has been a proliferation of hand-held mobile devices. The idea of mobile opportunistic networking is to exploit the mobility of users carrying such devices to transfer messages directly to each other during chance meetings. This is enabled by low-power radio interfaces on these devices (such as Bluetooth), and provides the opportunity for creating a multi-hopping communication network completely bypassing the cellular infrastructure. Since such a scheme cannot meet delay guarantees, applications that utilize mobile opportunistic networking must be \emph{delay tolerant}. On the other hand, such a peer-to-peer (P2P) content delivery is scalable \cite{repantis-kalogeraki04data-dissem-mp2p}, as the rate of service scales in proportion to the number of peers in the system with little additional cost to the system planner. Such a combination of an application and opportunistic transport is also called \emph{delay tolerant networking (DTN).} 

In this paper, we consider such a networking paradigm, and study the problem of dissemination of an item of content among the population of mobile nodes. The content could be a video (news footage, sports highlights, movie teaser, etc.) or an audio file (a recent song, a popular ringtone, etc.) The individuals carrying the mobile devices also interact socially (thus forming a \emph{social network}) and can \emph{influence} each other to become interested in the content. In such a situation, the system objective could be to facilitate the spread of the content to as many interested nodes as possible. In doing this, nodes that are not (yet) interested in the content could be used to relay the content, thus making the content more available. The overall objective would then be to spread the content among those who are interested, while limiting the number of copies to those nodes who are not yet interested in the content.

\noindent\emph{Literature Survey:} In this paper, we refer to nodes who are interested in the content as \emph{destinations} and those who are not yet interested in the content as \emph{relays}.  In the delay tolerant networking context, there is prior literature (see \cite{singh-etal11dtn-multi-destination} and \cite{khouzani-etal10patch-dissemination} and the references therein) on the optimal opportunistic copying of content in order to optimize delivery delay and/or wasteful copying to relays. 

An interesting aspect, that is largely unexplored in prior literature, is the evolution of interest in the content. For recently ``released'' content (such as a new video clip), the fraction of nodes interested in it (destinations), will not be fixed, but will evolve based on the influence of nodes that are already destinations. Such influence could be mediated via a centralized server, which uses a low bit-rate channel to broadcast the current popularity of the content, or by interactions between the mobile nodes themselves. Thus, it may be necessary to deliver the content to destinations while keeping track of the demand evolution. 

A recent work that explores this aspect \cite{shakkottai-etal10demand-aware-content-spread}, uses epidemic models to characterize demand evolution and aims to obtain a hybrid of P2P and client-server architecture to efficiently meet the demand. While \cite{shakkottai-etal10demand-aware-content-spread} begins by assuming the Bass model for interest evolution, in our work, we analyze the interest evolution in its own right, adopting the Linear Threshold (LT) model \cite{kempe-etal03max-spread-infl} from the viral marketing literature and deriving its fluid limit. Also, while in \cite{shakkottai-etal10demand-aware-content-spread} the peer-to-peer file dissemination occurs only among the nodes interested in the content, we consider the notion of relays (as yet uninterested in the content) aiding the spread, thus leading to a more general dynamics of the co-evolution model.

\noindent \emph{Contributions:} In this paper we consider a population of $N$ mobile nodes, and model their pair-wise meetings by independent Poisson point processes. The item of content is provided to a subset of the initially interested set of nodes. The LT model is used to model the spread of interest between nodes. A controlled epidemic spread model is used for the opportunistic copying of the content between nodes. In this framework, we make the following contributions:

\begin{enumerate}
 \item We consider the \emph{homogeneous influence linear threshold model (HILT)} model which is a special case of the LT model introduced in Kempe et al.\ \cite{kempe-etal03max-spread-infl}. The population processes in the HILT model are modeled as a Markov chain, and a fluid limit is developed under a certain scaling of the spread dynamics as $N \to \infty$. The fluid limit nicely captures the influence threshold distribution via its hazard function. The well known SIR epidemic model\cite{kermack-mckendrick27SIR-epidemics} emerges as a special case when the threshold distribution is exponential. It is this SIR model that is then used in the remainder of the paper for modeling the spread of interest in the content. 

 \item For the case in which influence is spread by interaction between the devices, and a controlled epidemic spread of the content, we obtain the SIR-SI model, for which we obtain the fluid limit for a fixed probability, $\sigma$, of copying to a relay node (i.e., an uninterested node). 

 \item When the copying probability, $\sigma$, can vary with time, we obtain a controlled o.d.e., for which we obtain the optimal control by direct arguments using certain monotonicity properties. This results in a time-threshold structure of the optimal control. We provide an extensive numerical study of this model, thus providing additional interesting insights. 

\end{enumerate}

\noindent\emph{Outline:} In Section~\ref{sec:HILT-model}, we study the HILT model for evolution of interest, derive its fluid limit for general influence threshold distributions and discuss the effect of threshold distribution. In Section~\ref{sec:sir-si} we study the decentralized model (SIR-SI model) for co-evolution of content popularity and availability and establish the optimality of a time-threshold policy, for copying to relays. Finally, in Section~\ref{sec:numerical}, we first solve certain optimization problems concerning the evolution of interest. We then numerically compute optimal policies for the decentralized model, and study the effect of system parameters on the time threshold. 

\section{Interest Evolution: The HILT Model}
\label{sec:HILT-model}
In this section we introduce the homogeneous influence linear threshold (HILT) model used to model the evolution of interest in the content. In the original Linear Threshold (LT) model \cite{kempe-etal03max-spread-infl}, the individuals are modeled as nodes of a weighted directed graph $\mathcal{G}=(\mathcal{N},\mathcal{E})$, where $\mathcal{E} \subseteq \mathcal{N} \times \mathcal{N}$. With each ${i,j} \in \mathcal{E}$, there is associated a weight $w_{i,j}$ which gives a measure of \emph{influence} of node $i$ on node $j$, normalized such that the total weight into any node is at most 1, i.e., $\sum_{i} w_{i,j} \leq 1$. The Homogeneous Influence Linear Threshold (HILT model) \cite{srini-kumar11LT-model-ncc} is a special case of the LT model where the network graph is complete and all influence weights are equal. Hence, we have a mesh network on the population $\mathcal{N}$ containing $N=|\mathcal{N}|$ nodes with each edge carrying the same influence weight $\gamma_N=\frac{\Gamma}{N-1}$ and $\Gamma \leq 1$ (see Figure~\ref{fig:hilt}). 

\begin{figure}
\centering
\includegraphics[scale=0.3]{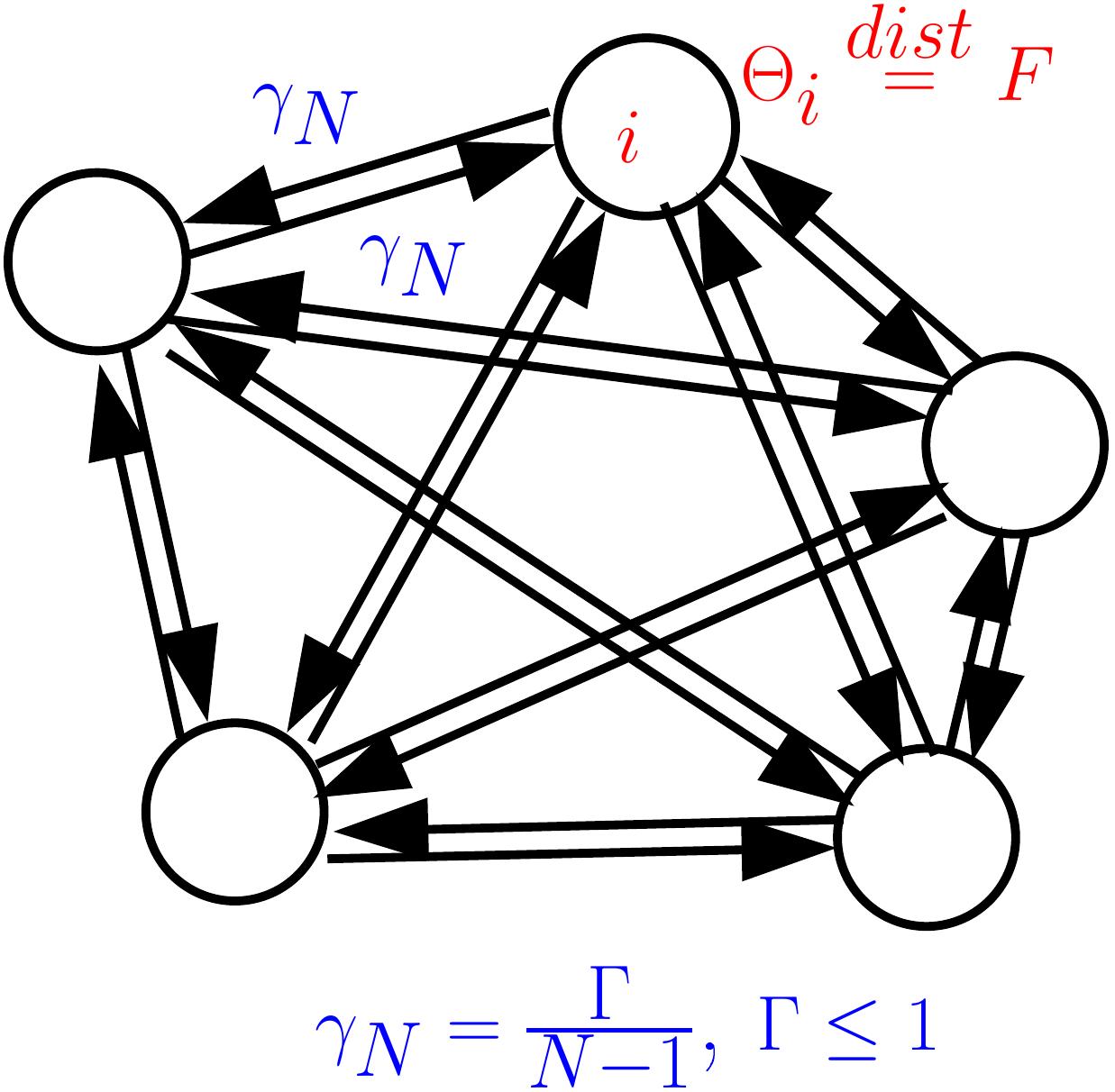}
\vspace{-0.3cm}
\caption{The graph for the HILT influence model}
\label{fig:hilt}
\vspace{-0.3cm}
\end{figure}
In the context of our content spread problem, at any step of influence spread, the population is partitioned into \emph{destination} nodes (i.e., the nodes that \emph{want} the item of content), and \emph{relays} (i.e., the nodes that, as yet, \emph{do not want} the content).  Given an initial set of destinations, $\mathcal A(0)$, we denote by $\mathcal{A}(k), k \in \{0, 1, 2, \cdots \}$ the process of the set of destinations. Each node $j \in \mathcal{N}$ independently chooses a random threshold $\Theta_{j} \geq 0$ from a given distribution $F(\cdot)$ \emph{at the beginning} and stays at this value thereafter \cite{kempe-etal03max-spread-infl}.  In the HILT model, the net influence of a set of destinations on any relay is $\gamma_N$ times the size of the destination set. 
Given the initial set of destinations and the thresholds sampled by all the relays, a relay gets converted to a destination when the total influence on it exceeds its threshold. In the distributed setting of a mobile opportunistic network, the influence can be exerted in one of two ways: 
\begin{itemize}
 \item Centralized spread: The system broadcasts the number of destinations over a low bit rate channel, thereby causing influence to be exerted on all the relays.
\item Distributed spread: Influence is exerted during the random meetings between destinations and relays.
\end{itemize}

In this section we develop a fluid limit for the HILT influence spread process, which will then motivate the distributed influence spread model that we will use in the remainder of the paper for our study of joint spread of influence and content. In the process of obtaining this model, we also provide the general fluid limit for the HILT process, and insights into the effect of the threshold distribution. 

We assume the \emph{progressive case}, i.e., conversion into destinations is an irreversible process, i.e., a relay $j \notin \mathcal{A}(k-1)$ gets influenced in step $k$ if, 
\begin{eqnarray}
\label{eqn:HILT-activation}
 \gamma_N | \mathcal{A}(k-1) |  \geq \Theta_{j} 
\end{eqnarray}
thereby converting to a destination, and then stays in that state from then on. The influence on a relay node can be viewed as building up cumulatively over time. The initial set $\mathcal{A}(0)$ is viewed as being \emph{infectious} and results in some of the relay nodes ``tipping'' over their thresholds and getting converted to destinations in the first period. A node $j$ that remains a relay  has the level of cumulative influence on it raised to $\gamma_N |\mathcal{A}(0)| (< \Theta_j)$. The nodes in $\mathcal{A}(0)$ are now viewed as being non-infectious, and the newly infected nodes are denoted by $\mathcal{D}(1)$, with $\mathcal{A}(1) = \mathcal{A}(0) + \mathcal{D}(1)$; see Figure~\ref{fig:spread_influence}. Thus, at the end of each period the population will contain three types of nodes: the set of destinations $\mathcal{A}(k)$, the set of newly infected destinations $\mathcal{D}(k)\subseteq \mathcal{A}(k)$ and the set of relays, $\mathcal{S}(k)$. Evidently, the activation process stops at a random step $T$ when there are no more infectious destinations, i.e., $\mathcal{D}(T) = \emptyset$, at this step the \emph{terminal set} $\mathcal{A}(T)$ is reached. 
\begin{figure} 
\centering
\includegraphics[scale=0.45]{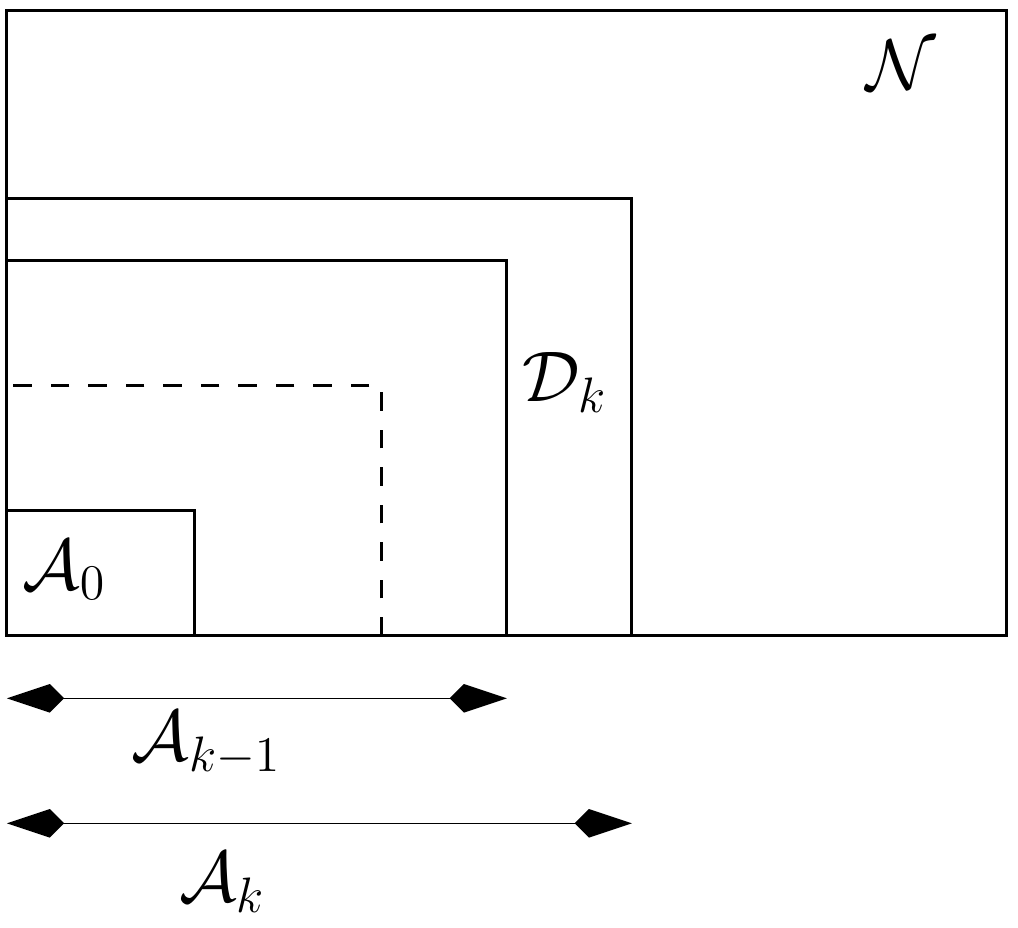}
\vspace{-0.3cm}
\caption{Spread of Influence in the HILT Model}
\label{fig:spread_influence}
\vspace{-0.5cm}
\end{figure}

\noindent\emph{Need for fluid limit:} From \cite{kempe-etal03max-spread-infl} it is known that the problem of identifying the most influential nodes in a social network under the Linear Threshold influence model is NP-hard. This is due to the difficulty in computing the expected influence (size of the terminal set) for a given seed set, which involves listing down all the path probabilities in the graph. In \cite{srini-kumar11LT-model-ncc}, we have provided an analytical expression to characterize the expected influence of a set. Using this for the HILT-model we can provide, a closed form expression for the expected size of the \emph{terminal destination set} $\mathcal{A}(T)$. But in order to exploit the interest evolution better, we would be interested in knowing the expected trajectory of the size of the destination set $\mathcal{A}(k)$, and we use fluid limits to characterize this expected evolution of the process. 

\subsection{O.D.E.\ Model for Interest Evolution}
\label{sec:HILT-fluid}
In this section we will use the convergence of a scaled discrete time Markov chain (DTMC) to a deterministic limit described by an ordinary differential equation (o.d.e.) ( see Kurtz \cite{kurtz70ode-markov-jump-processes}, Darling \cite{darling02limits-purejump-markov}), to develop deterministic (or, so called, fluid limit) approximations to the  HILT model for interest evolution.

Recall that $\mathcal{A}(k), k \geq 1$, is the set of destinations and $\mathcal{D}(k)$ the set of \emph{newly added} destinations (infectious) at the end of period $k$ with $\mathcal{D}(0) = \mathcal{A}(0)$. Define $\mathcal{B}(k) = \mathcal{A}(k-1)$ with $\mathcal{B}(0) = \emptyset$. Thus, for $k \geq 1$, $\mathcal{B}(k) = \cup_{0 \leq i \leq k-1} \mathcal{D}_i$. We will work with sets $\mathcal{B}(k)$ and $\mathcal{D}(k)$ to derive the fluid limits of the HILT process. Also, since the nodes are homogeneous in the HILT model, it suffices to record the sizes of the respective sets, not the exact membership of the sets themselves.  Let $A(k)$, $D(k)$, and $B(k)$ be the sizes of the sets $\mathcal{A}(k)$, $\mathcal{D}(k)$, and $\mathcal{B}(k)$, respectively. 
 
We can show that the original HILT process $(B(k), D(k))$ is a Markov chain (see Appendix~\ref{app:HILT-markov}). In order to obtain an approximating o.d.e., we work with an appropriately scaled Markov process $(B^{N}(t), D^{N}(t))$, which evolves on a time scale $N$ times faster than that of the original system. We can visualize this process as evolving over ``minislots'' of duration $1/N$, whereas the original process evolves at the epochs $k=0,1,2,\cdots$. The minislots are indexed by $t=0,1,2,\cdots$. Since this new process runs on a faster time scale, we scale the amount of change it undergoes in each ``minislot'' by adopting an approach taken in \cite{benaim-leboudec08mean-field-models}. In the present context, the scaling approach can be interpreted as follows. In each minislot, each infectious destination in $D^{N}(t)$ is permitted to influence the relays with probability $\frac{1}{N}$ and its influence is deferred with probability $1-\frac{1}{N}$. In the former case, it contributes its influence of $\gamma_N$ and then moves to the set $B^{N}(t+1)$, otherwise it stays in the $D^{N}(t+1)$ set. In the original process, the influence of all the newly infected destinations will be taken into account when determining the popularity level of the content in the next time step, whereas, in the scaled process, only those infectious destinations that choose to use their influence will be considered.\footnote{The scaling is done in order to obtain an analysis of the system as $N$ scales. Since the number of pairs of infectious destinations and susceptible relays scales as $N^2$, we need to further slow down the dynamics by a factor of $N$ and hence the probabilistic scaling.}. Define by $C^N(t) \subseteq D^N(t)$ the set of infectious destinations who use their influence at time $t$ and thereby become non-infectious. Then, 
\[ C^{N}(t) = \frac{D^{N}(t)}{N} + Z^{N}_{b}(t+1)\]
\begin{eqnarray*}
 B^{N}(t+1) &=& B^{N}(t) + C^{N}(t) \\
            &=& B^{N}(t) + \frac{D^{N}(t)}{N} + Z^{N}_{b}(t+1)\\
\end{eqnarray*}
 
and by applying Equation~\ref{eqn:HILT-activation} to the relay nodes $j \in \mathcal{N} \backslash \mathcal{A}(t)$, 
\begin{eqnarray*}
\lefteqn{D^{N}(t+1) = D^{N}(t) - C^{N}(t)}  \\
\hspace{-2cm} &+& \mathbb{E} \bigg[ \frac{F(\gamma_N(B^{N}(t) + C^N(t)))- F(\gamma_N B^{N}(t))}{1-F(\gamma_N B^{N}(t))} \bigg] \times \\
& & \bigg(N-B^{N}(t) -D^{N}(t)\bigg) + Z^{N}_{d}(t+1)
\end{eqnarray*}
where $Z^{N}_b(t+1)$ and $Z^{N}_d(t+1)$ are zero mean random variables conditioned on the history of the process $(B^{N}(0), D^{N}(0), \cdots B^{N}(k), D^{N}(k))$ representing the noise terms, and the expectation in the expression for $D^{N}(t+1)$ is with respect to $C^N(t)$. Define $\Tilde{B}^{N}(t) = \frac{B^{N}(t)}{N}$ and similarly $\Tilde{C}^{N}(t)$ and $\Tilde{D}^{N}(t)$ as fractions of the total population. $(\Tilde{B}^{N}(t), \Tilde{D}^{N}(t))$, $t=0,1,2,\cdots$ is a DTMC on the state space $[0, \frac{1}{N}, \frac{2}{N}, \cdots 1] \times [0, \frac{1}{N}, \frac{2}{N},\cdots 1]$ with $\Tilde{B}^{N}(t)+\Tilde{D}^{N}(t)\leq 1$. 

Then we can state the following theorem on the convergence of the scaled process to a deterministic limit. We assume that the threshold distribution $F(\cdot)$ has density $f(\cdot)$. 
\begin{theorem}
\label{thm:kurtz-theorem-hilt}
Given the scaled HILT interest evolution Markov process $( \Tilde{B}^{N}(t), \Tilde{D}^{N}(t) )$, with bounded first derivative of the density ${f}(\cdot)$ of the threshold distribution,  we have for each $T > 0$ and each $\epsilon > 0$,
\begin{eqnarray*}
\hspace{-0.4cm}\mathbb{P} \bigg( \sup_{0 \leq u \leq T} \big| \big| \big( \Tilde{B}^{N}( \lfloor Nu \rfloor ),\Tilde{D}^{N}( \lfloor Nu \rfloor ) \big) - \big( b(u),d(u) \big) \big| \big| > \epsilon \bigg)\\
\stackrel{N\rightarrow \infty}{\rightarrow} 0
\end{eqnarray*}
where $(b(u),d(u))$ is the unique solution to the o.d.e.,
\[ \dot{b} = d\]
\[ \dot{d} = \frac{f(\Gamma b)\Gamma d }{1- F(\Gamma b)} (1-b-d) - d \]
with initial condition $(b(0)=0,d(0)=a(0))$.
\end{theorem}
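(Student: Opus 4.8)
The plan is to invoke the standard machinery for the convergence of a scaled discrete-time Markov chain to the solution of an o.d.e.\ (Kurtz \cite{kurtz70ode-markov-jump-processes}, Darling \cite{darling02limits-purejump-markov}, and the discrete-time mean-field formulation of \cite{benaim-leboudec08mean-field-models}). Writing the state as $X^N(t)=(\tilde B^N(t),\tilde D^N(t))$, I would first decompose each one-step increment into a conditional drift plus a martingale difference, $X^N(t+1)-X^N(t)=\beta^N(X^N(t))+M^N(t+1)$, where $\beta^N(x)=\mathbb{E}[X^N(t+1)-X^N(t)\mid X^N(t)=x]$ and $M^N(t+1)$ collects the centred noise built from $Z^N_b$ and $Z^N_d$. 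The goal is to show (i) the scaled drift $N\beta^N(\cdot)$ converges uniformly on the simplex $\Delta=\{(b,d):b,d\geq 0,\ b+d\leq 1\}$ to the field $F(b,d)=\big(d,\ \frac{f(\Gamma b)\Gamma d}{1-F(\Gamma b)}(1-b-d)-d\big)$ on the right of the o.d.e., and (ii) the accumulated martingale noise over $\{0,\dots,\lfloor Nu\rfloor\}$ vanishes; a Gronwall argument then delivers the stated uniform convergence.

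For the drift, the $\tilde B$-component is immediate: since $\mathbb{E}[C^N(t)\mid\mathcal F_t]=D^N(t)/N$, the per-minislot increment is $\tilde D^N(t)/N$, so $N\beta^N_1=\tilde D^N(t)\to d$, matching $\dot b=d$. The substance is the $\tilde D$-component. Here I would Taylor-expand the conversion kernel about $\gamma_N B^N(t)$: conditionally on $C^N(t)$, $F(\gamma_N(B^N+C^N))-F(\gamma_N B^N)=f(\gamma_N B^N)\gamma_N C^N+\tfrac12 f'(\xi)(\gamma_N C^N)^2$ for some intermediate $\xi$. Taking expectations over $C^N(t)$ (which is Binomial-like with mean $D^N/N$ and variance $O(\tilde D^N)$), the leading term gives $f(\gamma_N B^N)\gamma_N(D^N/N)$, while the remainder is bounded by $\tfrac12\|f'\|_\infty\gamma_N^2\,\mathbb{E}[(C^N)^2]=O(N^{-2})$ — this is exactly where the hypothesis of a bounded first derivative of $f$ enters. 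Dividing by $1-F(\gamma_N B^N)$, multiplying by $(N-B^N-D^N)=N(1-\tilde B^N-\tilde D^N)$, adding the contribution $-\mathbb{E}[C^N\mid\mathcal F_t]=-\tilde D^N$ from the outflow of infectious nodes, and multiplying by the time-rescaling factor $N$, I would obtain $N\beta^N_2\to\frac{f(\Gamma b)\Gamma d}{1-F(\Gamma b)}(1-b-d)-d$, using $\gamma_N=\Gamma/(N-1)\to 0$ with $\gamma_N B^N\to\Gamma b$ and the continuity of $f,F$; the $O(N^{-2})$ remainder, multiplied by the $O(N)$ relay count and the $O(N)$ rescaling, is still $O(N^{-1})\to 0$ uniformly on $\Delta$.

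For the noise, I would bound the conditional second moments of the increments. Both $C^N(t)$ and the number of newly converted relays are sums of conditionally independent Bernoulli variables with success probabilities $O(1/N)$, so each unscaled increment has conditional variance $O(1)$; after the $1/N$ space-scaling, $\mathbb{E}[\|M^N(t+1)\|^2\mid\mathcal F_t]=O(N^{-2})$. Summing over the $\lfloor Nu\rfloor$ minislots up to any $u\leq T$ gives an accumulated quadratic variation of order $N^{-1}$, so by Doob's maximal inequality $\sup_{0\leq u\leq T}\big\|\sum_{t<\lfloor Nu\rfloor}M^N(t+1)\big\|\to 0$ in probability.

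The final step closes the argument. The field $F$ is Lipschitz on $\Delta$: the first component is linear, and the second is a product of the Lipschitz maps $b\mapsto f(\Gamma b)$ (Lipschitz since $f'$ is bounded) and $b\mapsto 1/(1-F(\Gamma b))$ with the smooth factors $d$ and $1-b-d$, provided the hazard $f(\Gamma b)/(1-F(\Gamma b))$ stays finite along the trajectory, i.e.\ $1-F(\Gamma b)$ is bounded below — which holds on $\Delta$ whenever $F(\Gamma)<1$ (note $\Delta$ is forward-invariant since $\dot b=d\geq 0$ and $\frac{d}{du}(b+d)\geq 0$ on $\Delta$). Lipschitz-ness yields existence and uniqueness of $(b(u),d(u))$ and lets me compare the interpolated chain with the o.d.e.\ solution through a discrete Gronwall inequality, the gap being controlled by the uniform drift error of step two plus the vanishing martingale term of step three, giving the claimed probability bound as $N\to\infty$. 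I expect the main obstacle to be step two: handling the expectation of the nonlinear conversion kernel over the random $C^N(t)$ and verifying that the hazard-rate singularity at $F(\Gamma b)=1$ is never reached, so that $N\beta^N$ converges uniformly and $F$ is genuinely Lipschitz on the region visited by the solution.
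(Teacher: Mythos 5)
Your proposal is correct and follows essentially the same route as the paper's own (appendix) proof: the paper likewise verifies the Kurtz/Darling conditions by (i) Taylor-expanding the conversion kernel about $\gamma_N B^N(t)$ and using the bounded-$f'$ hypothesis together with $\mathbb{E}[(C^N(t))^2]=O(1)$, $\gamma_N^2=O(N^{-2})$ to kill the second-order remainder and obtain uniform convergence of the scaled drift to $\big(d,\ \frac{f(\Gamma b)\Gamma d}{1-F(\Gamma b)}(1-b-d)-d\big)$, (ii) bounding the conditional noise variances (explicitly, $(1-\frac{1}{N})\frac{1}{N}D^N(k)$ for the $b$-component and a binomial sum for the $d$-component, both $O(1)$ unscaled hence $O(N^{-2})$ scaled), (iii) checking Lipschitz continuity of the limit field via its Jacobian, and (iv) matching initial conditions, before invoking Kurtz's theorem. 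The only differences are cosmetic — you unpack the black-box citation into an explicit martingale/Doob/Gronwall argument and you make explicit the paper's implicit requirement that $1-F(\Gamma b)$ stay bounded away from zero so the hazard is Lipschitz on the relevant region — plus one harmless bookkeeping slip: the $O(N^{-2})$ kernel remainder times the $O(N)$ relay count must also absorb the $1/N$ space scaling together with the $N$ time rescaling (which cancel) to give the claimed $O(N^{-1})$; as literally written, $O(N^{-2})\cdot O(N)\cdot O(N)$ would be $O(1)$.
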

\begin{proof}
This is essentially an instance of Kurtz's Theorem \cite{kurtz70ode-markov-jump-processes}. See Appendix~\ref{app:hilt-fluidlimit} for the steps involved in deriving the fluid limit and a detailed proof verifying the necessary conditions for Kurtz's theorem for the HILT model. Since the o.d.e. drift equations are Lipschitz, they satisfy the Cauchy-Lipschitz condition and the system has a unique solution once the initial condition is fixed.
\end{proof}
\textit{Remark:}
The hazard function \cite{cox61renewal-theory} for the cdf $F(x)$ is given by $h_F(x) = \frac{f(x)}{1-F(x)}$ where $f(x)$ is the corresponding probability density function. Hence the o.d.e. becomes
\begin{eqnarray} 
\label{eqn:ODE_F_1}
\dot{b} &=& d  \\
\label{eqn:ODE_F_2}
 \dot{d} &=& h_F(\Gamma b) \Gamma d  (1-b-d) - d 
\end{eqnarray}
Here $h_F(\Gamma b) \Gamma d$ can be interpreted as the rate of conversion for the relays, where $(1-b-d)$ is the current fraction of relays in the population.
\subsection{Accuracy of the O.D.E.\ Approximation}
Consider the HILT process of interest evolution, under the special case of uniform distribution of influence thresholds, i.e., $\Theta_i \sim U[0,1]$. The hazard function corresponding to uniform distribution is given by, for $0 \leq x \leq 1$,  \[h_F(x) = \frac{1}{1-x}\] The corresponding o.d.e.s then become:
\begin{eqnarray}
\label{eqn:ODE_unif_1}
\dot{b} &=& d\\
\label{eqn:ODE_unif_2}
\dot{d} &=& \frac{\Gamma d}{1-\Gamma b}  (1-b-d) - d 
\end{eqnarray}
Figure~\ref{plot:hilt-convergence} illustrates the convergence of the scaled HILT process $(\Tilde{B}^N(\lfloor Nt \rfloor), \Tilde{D}^N(\lfloor Nt \rfloor))$ to the solution of the above o.d.e. $(b(t), d(t))$ for increasing values of $N$ (50,100,500,1000). $\Gamma$ and $d_0 $ were chosen to be $0.9$ and $0.2$ respectively. We see that for  $N=1000$ the o.d.e. approximates the scaled HILT Markov chain very well, as expected from Theorem~\ref{thm:kurtz-theorem-hilt}.

As we pointed out earlier, the scaled HILT process $(B^N(t), D^N(t))$ correctly captures the average dynamics of the original HILT process. Hence it can be expected that the o.d.e. solution evaluated at the ends of the original slots (indexed by $k \geq 1$), will be a good approximation to the average of the process $(B(k), D(k))$. This is illustrated in Figure~\ref{plot:kurtz-errorbar}, where multiple sample paths of the original discrete time HILT process, $(B(k), D(k))$, are superimposed on the o.d.e. solution, $(b(t), d(t))$. The good match confirms that the fluid limit obtained from the particular scaling of the Markov chain, retains the average behaviour of the original influence spread process. 

\begin{figure}
\centerline{\includegraphics[width=8cm, height=4cm]{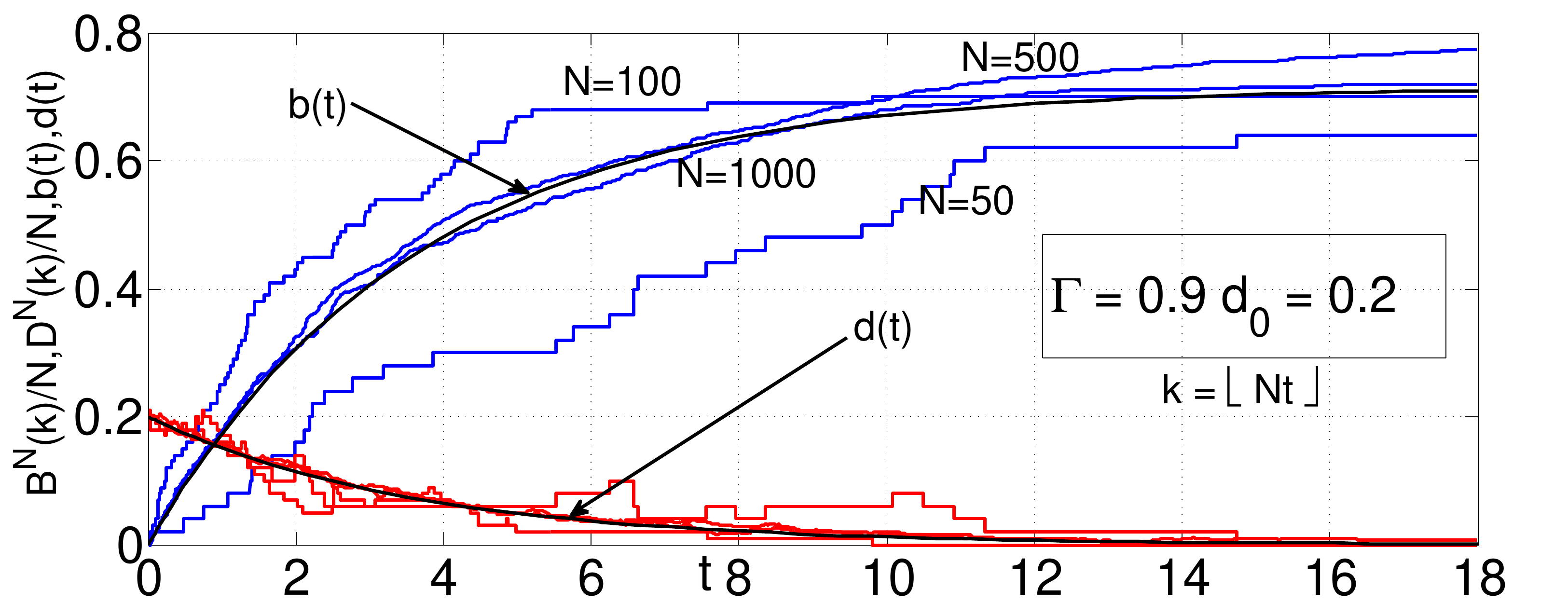}}
\vspace{-0.3cm}
\caption{Convergence of the scaled HILT Markov chain $(\Tilde{B}^{N}(k), \Tilde{D}^{N}(k))$ to the o.d.e. limit $(b(t), d(t))$. The o.d.e. solution $(b(t),d(t))$ plotted along with sample paths of the scaled process $(\Tilde{B}^{N}(k), \Tilde{D}^{N}(k))$ for $N=50,100,500,1000$.}
\label{plot:hilt-convergence}
\vspace{-0.5cm}
\end{figure}
\begin{figure}
\centerline{\includegraphics[width=8cm, height=4cm]{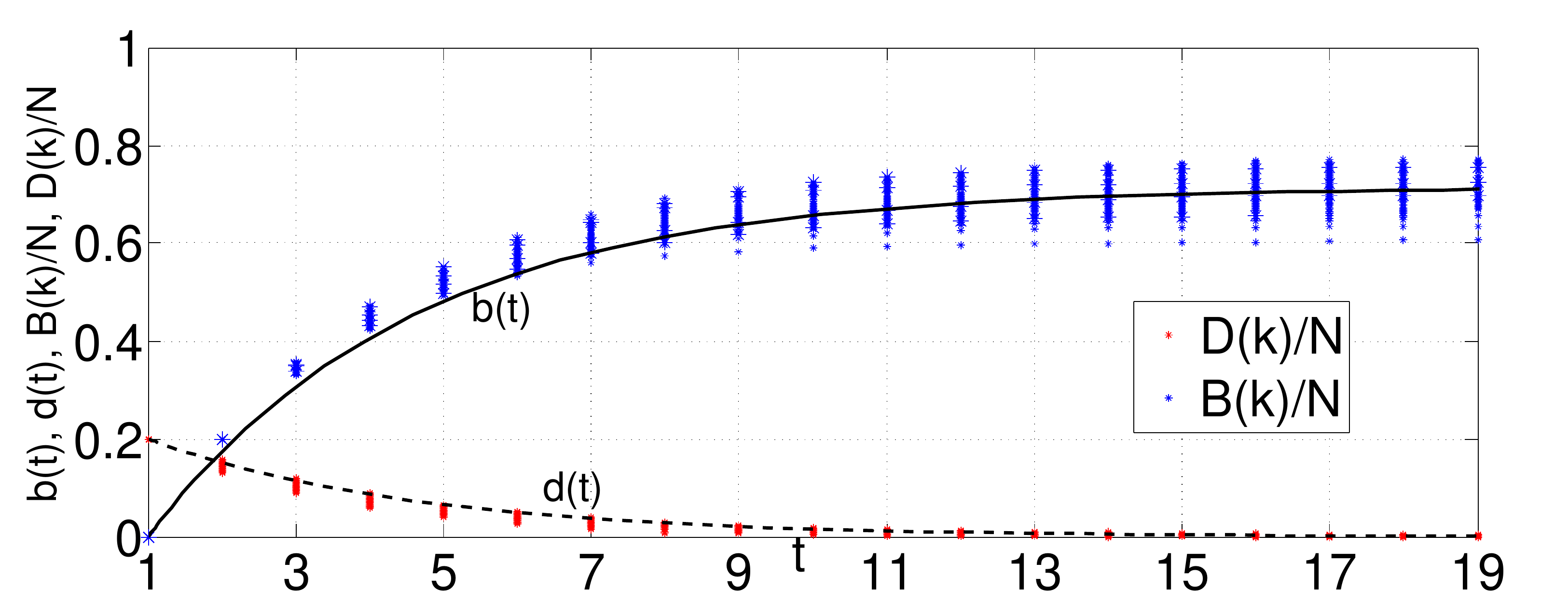}}
\vspace{-0.3cm}
\caption{Multiple sample paths of the unscaled HILT process $(B(k),D(k))$ for $N=3000$ is plotted along with the o.d.e. solution $(b(t),d(t))$ for $\Gamma = 0.9$ and $d_0=0.2$}
\label{plot:kurtz-errorbar}
\vspace{-0.5cm}
\end{figure}

\subsection{Effect of the Threshold distribution}
\label{sec:HILT-distribution}
In the HILT model, while $\Gamma$ is indicative of the total level of influence each individual can receive from the others, the threshold distribution $F(\cdot)$ captures the variation among the individuals' susceptance levels for getting interested the content. An empirical analysis on the effect of threshold distributions on collective behavior was provided in \cite{granovetter78threshold-models}. Having established the o.d.e. limit and studied its ability to approximate the evolution of interest process, we can now exploit the o.d.e. limit to study the effect of the threshold distribution.

\subsubsection{Uniform Distribution}
For the HILT model of interest evolution under the uniform threshold distribution, we can state the following theorem.

\begin{theorem}
Given the initial fraction of destinations $d_0$, in an HILT network with influence weight $\Gamma$ under the uniform threshold distribution, define $r = 1 - \Gamma + \Gamma d_0$. Then,
\begin{enumerate}
 \item The fluid limit for the evolution of interest is given by,
\begin{eqnarray}
\label{eqn:unif_explicit_1}
b(t) &=& \frac{d_0}{r} - \frac{d_0}{r} e^{-rt} \\
\label{eqn:unif_explicit_2}
d(t) &=& d_0 e^{-rt} 
\end{eqnarray}
\item The final fraction of destinations is given by 
\begin{eqnarray}
\label{eqn:d_0_b_infty}
b_\infty=\frac{d_0}{1-\Gamma+\Gamma d_0}
\end{eqnarray}
\end{enumerate}
\end{theorem}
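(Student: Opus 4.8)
The plan is to integrate the o.d.e.\ system (\ref{eqn:ODE_unif_1})--(\ref{eqn:ODE_unif_2}) in closed form and then read off the limiting value, rather than guessing the answer and verifying it (though the latter, combined with the uniqueness asserted in Theorem~\ref{thm:kurtz-theorem-hilt}, would also suffice). The starting observation is that the system is autonomous and that $\dot b = d \geq 0$, so $b$ is nondecreasing along trajectories; as long as $d > 0$ we may eliminate time and regard $d$ as a function of $b$. Writing $d'(b)$ for the resulting $b$-derivative, the chain rule gives $d'(b) = \dot d/\dot b = \dot d/d$, and substituting (\ref{eqn:ODE_unif_2}) yields $d'(b) = \frac{\Gamma(1-b-d)}{1-\Gamma b} - 1$.

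The crux of the argument is the simplification of this phase-plane equation. Putting the two terms over the common denominator $1-\Gamma b$, the $b$-terms in the numerator cancel --- this cancellation is precisely what the uniform hazard $h_F(\Gamma b) = 1/(1-\Gamma b)$ produces, and is the reason the uniform case admits an elementary closed form --- leaving the \emph{linear} equation $d'(b) = \frac{-(1-\Gamma) - \Gamma d}{1-\Gamma b}$. I would solve this with the integrating factor $\mu(b) = (1-\Gamma b)^{-1}$; imposing the initial datum $d = d_0$ at $b = 0$ collapses the general solution to the affine relation $d = d_0 - rb$, equivalently the invariant $d + rb \equiv d_0$ with $r = 1 - \Gamma + \Gamma d_0$.

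With this invariant in hand the remainder is routine. Substituting $d = d_0 - rb$ into $\dot b = d$ gives the scalar linear o.d.e.\ $\dot b + rb = d_0$, whose solution with $b(0)=0$ is $b(t) = \frac{d_0}{r}(1 - e^{-rt})$, which is (\ref{eqn:unif_explicit_1}); then $d(t) = d_0 - rb(t) = d_0 e^{-rt}$ recovers (\ref{eqn:unif_explicit_2}). For the final fraction I would let $t \to \infty$: since $d_0 > 0$ and $\Gamma \leq 1$ give $r = 1 - \Gamma(1-d_0) > 0$, the exponential vanishes and $b_\infty = d_0/r = d_0/(1 - \Gamma + \Gamma d_0)$, which is (\ref{eqn:d_0_b_infty}).

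I do not expect any step to be a serious obstacle; the one point requiring care is that the reduction to $d(b)$ is valid only while $d > 0$, so I would note that $d(t) = d_0 e^{-rt}$ remains strictly positive for every finite $t$ and the terminal value is attained only in the limit, which legitimizes the reparametrization throughout. The genuine content is simply the observation that the uniform hazard linearizes the phase-plane equation and yields the affine invariant $d + rb = d_0$; everything downstream is elementary integration.
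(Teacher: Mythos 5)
Your proposal is correct and takes essentially the same approach as the paper: the paper's proof is a one-line assertion that Equations~(\ref{eqn:ODE_unif_1})--(\ref{eqn:ODE_unif_2}) are "explicitly solved" with $b(0)=0$, $d(0)=d_0$, and your phase-plane elimination (yielding the affine invariant $d + rb = d_0$, then integrating $\dot b = d_0 - rb$) is a complete, correct way of carrying out exactly that computation, with the cancellation in the numerator and the integrating factor both checking out. Your care about the $d>0$ reparametrization and the observation that $r = 1-\Gamma(1-d_0) > 0$ justifies the limit $b_\infty = d_0/r$ are sound additions that the paper leaves implicit.
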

\begin{proof}
The first part of the theorem is obtained by explicitly solving the o.d.e.s for the HILT model under uniform distribution (Equations~\ref{eqn:ODE_unif_1} and \ref{eqn:ODE_unif_2}) for the initial conditions $b(0)=0,d(0)=d_0$.
\end{proof} 

The linear threshold model under the uniform threshold distribution has been studied in discrete setting for general networks in \cite{kempe-etal03max-spread-infl}. Since in the HILT model all nodes are homogeneous, we will be interested in the influence of a set of size $k$. Consider the HILT network with $N$ nodes and influence weight $\gamma_N$. Let $I_{\gamma_N}(k)$ be the expected size of the \emph{terminal set of destinations} $\mathcal{A}(T)$, starting with $\mathcal{A}_0$ of size $k$ as the initial set of destinations. By using results from \cite{srini-kumar11LT-model-ncc}, we can show that, 
\[I_{\gamma_N}(k) = k[1 + (N-k)\gamma_N [ 1 + (N-k-1)\gamma_N[ 1 + \cdots \]

In the expression for $I_{\gamma_N}(k)$, noting $\lim_{N \rightarrow \infty} N \gamma_N = \Gamma$ and $d_0 = \frac{k}{N}$, we can show that as $N \rightarrow \infty$, $\frac{I_{\gamma_N}(k)}{N} \rightarrow b_\infty = \frac{d_0}{r}$. This reconfirms the fact that the fluid model is consistent with the discrete formulation. Also, while \cite{srini-kumar11LT-model-ncc} allows us to compute only the final fraction of destinations, our current work provides a good approximation of the actual trajectories of the influence process.

\begin{figure}[t]
\centerline{\includegraphics[scale=0.3]{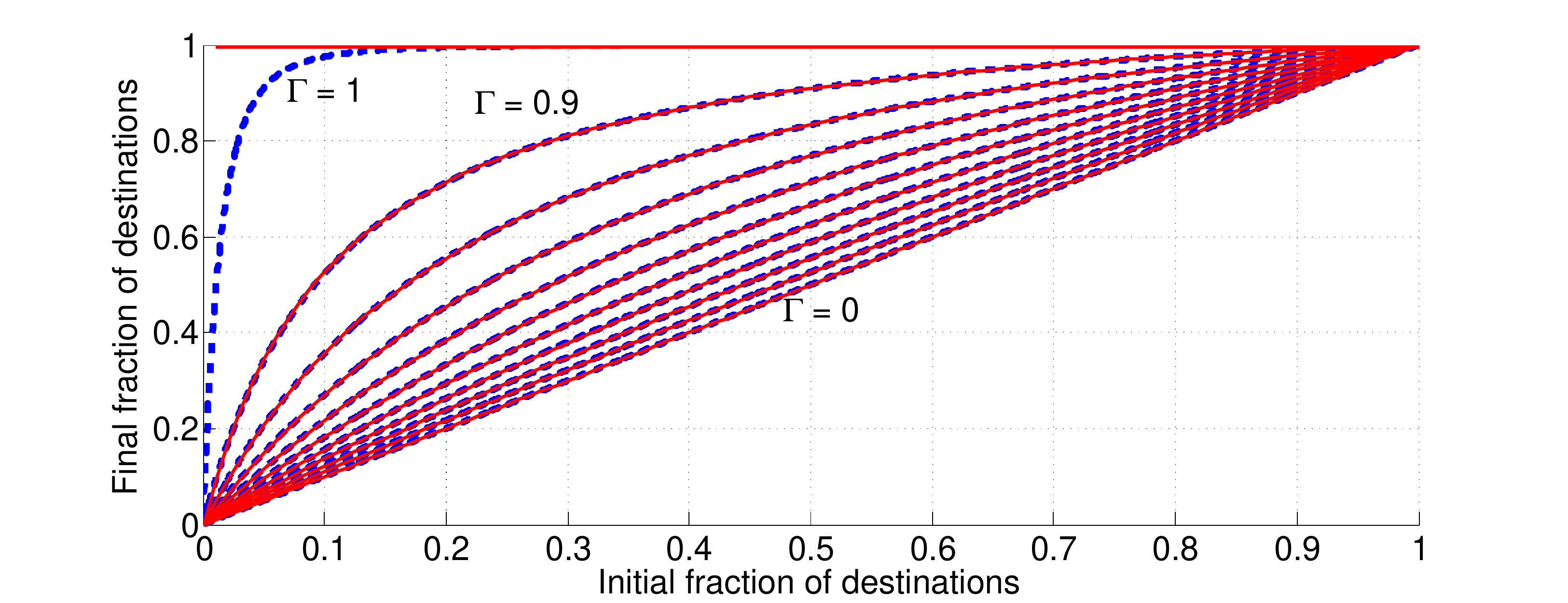}}
\vspace{-0.3cm}
\caption{Final fraction of destinations as predicted by the fluid limit $b_\infty$ (bold lines) and by the discrete formulation $\frac{I_{\gamma_N}(k)}{N}$ (dashed lines)plotted for various values of $\Gamma$ and $d_0$ and $N=3000$. The gap seen between $b_\infty$ and $\frac{I_{\gamma_N}(k)}{N}$ decreases to zero as $N \to \infty$. }
\label{plot:hilt_k_3000}
\vspace{-0.5cm}
\end{figure}

\textit{Remarks:} Figure~\ref{plot:hilt_k_3000} shows the behaviour of $b_\infty$ for various of $\Gamma$ and $d_0$. Plotted alongside are the values of $\frac{I_{\gamma_N}(k)}{N}$ for the corresponding values of $\gamma_N$ and $k$ for $N=3000$ and it can be seen that the two solutions match very well. It is easy to note that when $\Gamma=0$, the HILT process does not have any effect on the number of destinations, i.e., the fraction of destinations and relays in the population remains a constant, as, for example, in \cite{singh-etal11dtn-multi-destination}. We observe from the fluid limit that, as long as $\Gamma < 1$ we cannot influence the entire population (i.e., $b_\infty < 1$) unless we start off with the entire population to be destinations (i.e., $d_0 =1$). But if $\Gamma = 1$, then the influence of the destinations are at the maximum, and we can ultimately convert all nodes into destinations, given we start with a non-zero fraction of destinations, i.e., $b_\infty =1$ provided $d_0 > 0$. 

\subsubsection{Exponential Distribution}
\label{sec:exp-dist-hilt}
For the uniform distribution the hazard rate function is $h_F(x) = \frac{1}{1-x}$ which is monotonically increasing in $x$. Such a hazard rate implies that the population consists of relay nodes that are more susceptible to getting interested in the content as the total number of destinations increases (i.e., total accumulated influence over the past increases). The exponential distribution with parameter $\lambda$ yields $h_F(x)= \lambda$, a constant hazard rate function. This implies a \emph{memoryless} property for the influence process, i.e., the relay nodes are equally likely to get influenced at a given time instant, irrespective of the net accumulated influence in the past. We can then state the following theorem.

\begin{theorem}
 Given the initial fraction of destinations $d_0$, in an HILT network with influence weight $\Gamma$ under exponential distribution with parameter $\lambda$,
\begin{enumerate}
 \item The fluid limit of the evolution of interest is given by the solution to the o.d.e.,
\begin{eqnarray*}
 \dot{b} &=& d\\
 \dot{d} &=& \lambda \Gamma d(1-b-d) - d 
\end{eqnarray*}
 \item The final fraction of destinations is the solution to the transcendental equation, 
\[ b_\infty =  1 - (1-d_0) e^{-\lambda \Gamma b_\infty}\]
\end{enumerate}
\end{theorem}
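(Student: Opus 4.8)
The plan is to treat the two parts separately, with the first being an immediate specialization and the second requiring a phase-plane reduction. For part (1), I would simply substitute the exponential hazard function into the general fluid-limit o.d.e.\ established in Theorem~\ref{thm:kurtz-theorem-hilt} (in its hazard-function form $\dot b = d$, $\dot d = h_F(\Gamma b)\,\Gamma d\,(1-b-d) - d$). Since the exponential distribution with parameter $\lambda$ has constant hazard rate $h_F(x) = \lambda$, as already noted above, the factor $h_F(\Gamma b)$ collapses to the constant $\lambda$, yielding $\dot d = \lambda\Gamma d(1-b-d) - d$ directly. No further work is needed here.

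For part (2), the key idea is to eliminate time and study the trajectory in the $(b,d)$ phase plane. Since $\dot b = d$, along any trajectory with $d > 0$ I can write
\[
\frac{dd}{db} = \frac{\dot d}{\dot b} = \lambda\Gamma(1-b-d) - 1,
\]
which rearranges to the linear first-order o.d.e.\ $\frac{dd}{db} + \lambda\Gamma\, d = \lambda\Gamma(1-b) - 1$ for $d$ as a function of $b$. Multiplying by the integrating factor $e^{\lambda\Gamma b}$ and integrating (the right-hand side integrates in closed form after a single integration by parts), I expect to obtain $d = (1-b) + Ce^{-\lambda\Gamma b}$. Imposing the initial condition $(b,d) = (0,d_0)$ fixes $C = -(1-d_0)$, giving the explicit relation $d = (1-b) - (1-d_0)e^{-\lambda\Gamma b}$ valid along the whole trajectory.

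The last step is to identify the terminal value. First I would observe that the line $\{d=0\}$ is invariant (since $\dot d = 0$ there), so by uniqueness of solutions a trajectory starting at $d_0 > 0$ keeps $d \ge 0$; consequently $\dot b = d \ge 0$, and $b$ is nondecreasing and bounded above by $1$, hence $b(t) \to b_\infty$ for some $b_\infty \le 1$. Integrating $\dot b = d$ gives $\int_0^\infty d(t)\,dt = b_\infty < \infty$, and since the phase-plane relation shows $d$ converges as $b \to b_\infty$, the only way the integral can be finite is $d(t) \to 0$. Setting $d = 0$ in the explicit relation then yields $0 = (1-b_\infty) - (1-d_0)e^{-\lambda\Gamma b_\infty}$, i.e.\ the transcendental equation $b_\infty = 1 - (1-d_0)e^{-\lambda\Gamma b_\infty}$.

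I expect the main obstacle to be the rigorous justification that $d(t) \to 0$, equivalently that the trajectory terminates on the invariant line $d=0$ rather than stalling at an interior point. The clean argument is the finiteness of $\int_0^\infty d\,dt = b_\infty$ together with the existence of $\lim_{t\to\infty} d(t)$ read off from the explicit phase-plane formula; these force the limit to be zero. As a sanity check one verifies that $g(b) := 1 - (1-d_0)e^{-\lambda\Gamma b} - b$ satisfies $g(0) = d_0 > 0$ and $g(1) < 0$ for $d_0 < 1$, so the transcendental equation indeed has a root $b_\infty \in (0,1)$, confirming that the final fraction is well defined.
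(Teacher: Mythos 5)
Your proposal is correct. Part (1) coincides exactly with the paper's proof: both substitute the constant hazard rate $h_F(x)=\lambda$ of the exponential distribution into the general HILT fluid limit (Equations~(\ref{eqn:ODE_F_1})--(\ref{eqn:ODE_F_2})). For part (2), however, you take a genuinely different and more self-contained route. The paper identifies the system as the classical Kermack--McKendrick SIR model with infection rate $\lambda\Gamma$ and recovery rate $1$ (with $b$ the recovered fraction and $d$ the infected fraction) and then simply quotes the final-size relation from the SIR literature via the basic reproduction number $\lambda\Gamma$. You instead derive that relation from first principles: eliminating time to get $\frac{dd}{db}=\lambda\Gamma(1-b-d)-1$, solving this linear equation to obtain $d=(1-b)-(1-d_0)e^{-\lambda\Gamma b}$ along the trajectory (your relation is equivalent to the standard SIR conservation law $s=(1-d_0)e^{-\lambda\Gamma b}$ for the susceptible fraction $s=1-b-d$), and then showing $d(t)\to 0$ from the finiteness of $\int_0^\infty d\,dt=b_\infty$ together with the existence of $\lim_{t\to\infty}d(t)$ read off the phase-plane formula, so that setting $d=0$ yields the transcendental equation. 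Your approach buys rigor and self-containedness---in particular the explicit argument that the trajectory terminates on the invariant line $\{d=0\}$ rather than stalling, a point the citation-based proof leaves to the literature---at the cost of length; the paper's approach buys brevity and the conceptual link to a well-studied model, which it then exploits in the rest of the paper. The only detail you gloss over is why $b\le 1$: it follows since $\dot s=-\lambda\Gamma d s$ gives $s(t)=(1-d_0)\exp\big(-\lambda\Gamma\int_0^t d\,du\big)\ge 0$, hence $b\le b+d\le 1$; this is a one-line fix.
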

\begin{proof}
The first part is obtained by substituting $h_F(x)= \lambda$ in the HILT o.d.e. (Equations~\ref{eqn:ODE_F_1} and \ref{eqn:ODE_F_2}). This is equivalent to the SIR epidemic model with infection rate $\lambda \Gamma$ and recovery rate of 1 \cite{kermack-mckendrick27SIR-epidemics}. Note that $b(t)$ is then equivalent to the Recovered set (R), and $d(t)$ is equivalent to the Infected set (I). The second part follows from classic SIR literature by observing that the basic reproduction number (expected number of new infections from a single infection) is $\lambda \Gamma$. 
\end{proof}

\textit{Remarks:} We see that an HILT model with exponential distribution with parameter $\lambda$, and edge weights $\gamma_N=\frac{\Gamma}{N-1}$ is equivalent, in fluid limit, to an SIR model with infection rate $\lambda \Gamma$ and recovery rate 1. Alternatively, if we consider the SIR model as modeling epidemic spread over a population of nodes, with pairwise meetings occuring at Poisson points of rate $\lambda$, then in order to get an infection rate of $\lambda \Gamma$, we need to consider a $\Gamma$-thinned version of the Poisson process, i.e. when an infectious node meets a susceptible node, it passes the infection with probability $\Gamma$. This interpretation would be crucial while considering the decentralized version of the influence spread process. 

\section{Joint Spread of Interest and Content: SIR-SI Model}
\label{sec:sir-si}
In this section, we aim to model the joint spread of interest in the content and the content itself. We do this by combining the earlier analysis on interest spread (essentially the SIR model) with a controlled-epidemic copying process for content spread (probabilistic control, similar to the Susceptible-Infective (SI) model \cite{daley-gani99epidemic-modeling}).   

Let $k \in \mathbb{N}$ index the epochs of pairwise meetings or ``recovery'' of a node (from the SIR literature). Pertaining to the content, each node has a \emph{want} state and a \emph{have} state. In order to model the content delivery process, we further classify the nodes depending on whether they have the content (i.e., based on the \emph{have} bit). Recall $\mathcal{A}(k)$, the set of destinations and let $\mathcal{S}(k) = \mathcal{N} \backslash \mathcal{A}(k)$ be the set of relays.  We further classify the destinations as either infectious ($\mathcal{D}(k)$) or non-infectious ($\mathcal{B}(k)$). Let $\mathcal{X}(k) \subseteq \mathcal{A}(k)$ denote the set of destinations that have the content, and $\mathcal{Y}(k) \subseteq \mathcal{S}(k)$ denote the set of relays that have the content. Let $\mathcal{X}_b(k)$ and $\mathcal{X}_d(k)$ respectively be the intersection of $\mathcal{X}(k)$ with the sets $\mathcal{B}(k)$ and $\mathcal{D}(k)$. 

Pairs of nodes meet at the points of a Poisson process with rate $\lambda_N = \frac{\lambda}{N}$, and a relay node gets converted into a destination, if it meets an infectious destination, and the infectious node succeeds in transmitting the influence (modeled by the influence weight $\Gamma$ resulting from the HILT model). We also include the recovery rate $\beta_N$, the rate at which infectious destinations become non-infectious. For the SIR model derived from the HILT model, $\beta_N=1$, but in this case we wish to work with a generic $\beta_N$. For the evolution of $(\mathcal{X}(k), \mathcal{Y}(k))$, we model a content copying process based on the Susceptible-Infective (SI model). Whenever a node that has the content meets a node that does not, content transfer takes place based on a controlled copying process. We always copy to a destination that does not have the content, whereas copying to a relay is controlled by a probability $\sigma \in [0,1]$.  We wish to obtain the fluid dynamics of this model.  

In the combined model, underlying the content delivery process (the SI model), the influence process (the SIR model) converts relay nodes into destinations. Thus, in our setup, the fraction of destinations is time-varying (unlike \cite{singh-etal11dtn-multi-destination}). Also, the content spread is dependent on the interest evolution but not vice versa. An interesting feature of these co-evolution models is the importance of copying to a relay node. As a content provider, we might be interested in delivering only to the destinations (interested in the content). But, there are two advantages of copying to a relay. First, copy to a relay promotes the further spread of the content even to destinations; this is the aspect explored in a controlled Markov process setting in \cite{singh-etal11dtn-multi-destination}. Second, the relay we copy to now might later get influenced to become a destination.

\subsection{System Evolution}
The set of all possible transitions among the different states of nodes is shown in Figure~\ref{fig:sir-si}. Note that the process evolves at epochs $t_k$ indexed by $k$ which are either pairwise meetings  (occuring at rate $\lambda_N |\mathcal{N}|(|\mathcal{N}|-1)$) or instances of recovery of an infectious destination (occuring at rate $\beta \mathcal{D}(k)$).  The system state is represented by the tuple $Z(k) = (B(k), D(k), X_b(k), X_d(k), Y(k))$, the sizes of the respective sets. The dependence on $N$ is implied, and not explicitly indicated in the notation. $Z(k)$ is a continuous time Markov chain; in Table \ref{table:SIRSI-rates} we show its transition structure. The type of the epoch is given by the membership of the node(s) which are involved in the pairwise meeting or recovery, and determines the state update $\delta_k$ at time $t_k$. Also note that, when a node from $\mathcal{X}_d(k)$ meets a node from $\mathcal{S}(k) \backslash \mathcal{Y}(k)$, there is a possibility of both influence spread and content copy. In such cases, we assume that the attempt to influence spread precedes content copy. If the influence succeeds (occurs with probability $\Gamma$), the relay node becomes a destination and is immediately given the content. If the influence fails, then it is treated as a relay, and the content is copied with the probability $\sigma$ (control). Thus, the state updates depend on whether a potential influence succeeded (occurs with probability $\Gamma$) and whether a potential relay node received the content (occurs with probability $\sigma$). The rate of various epochs and the corresponding state updates are listed in Table \ref{table:SIRSI-rates}. The system state does not change, for any other pairwise interaction, and hence the corresponding $\delta_k = (0,0,0,0,0)$. 

\begin{figure}
\begin{center}
\includegraphics[scale=0.35]{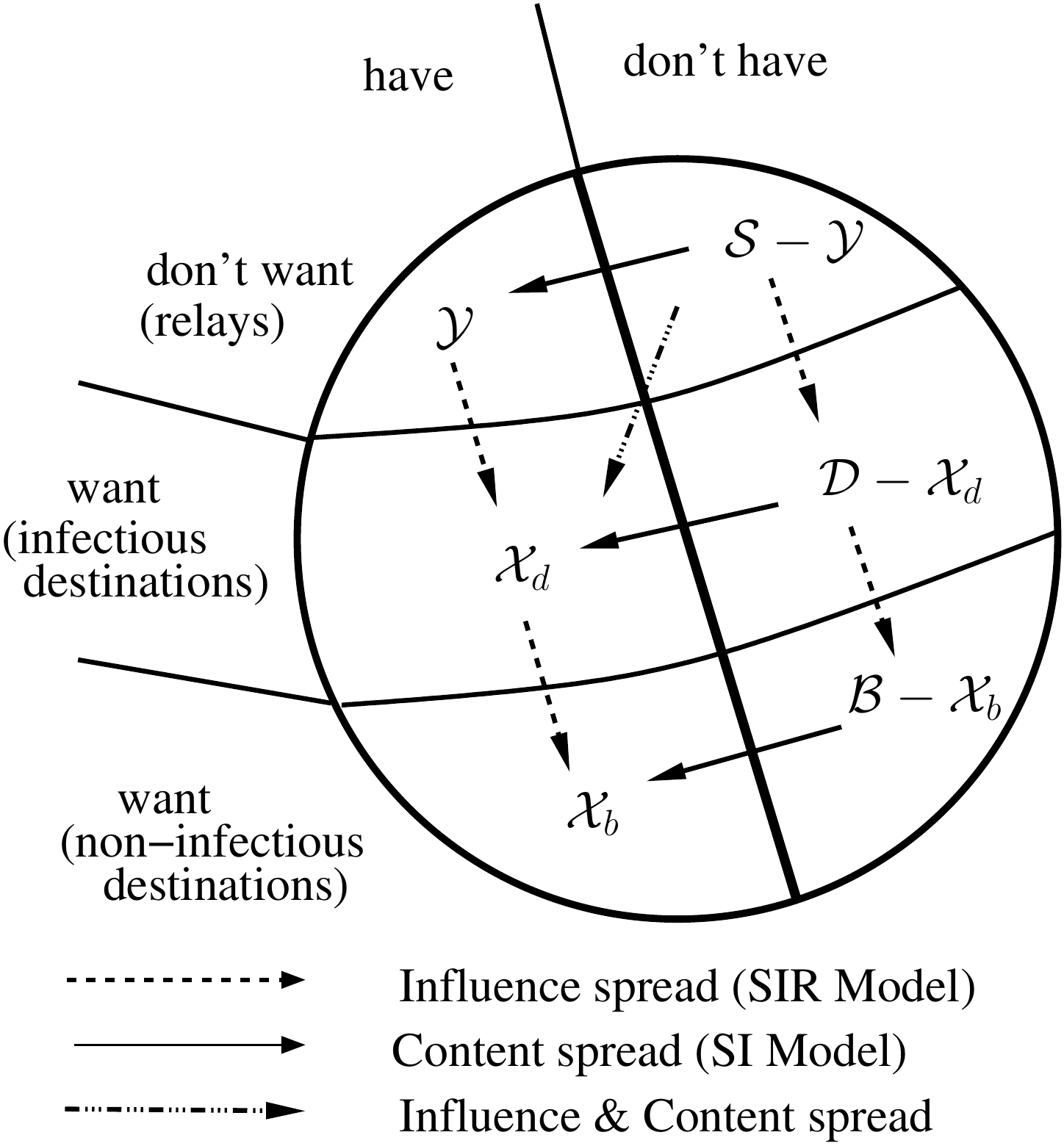} 
\end{center}
\caption{Possible transitions between the states in the SIR-SI model. Observe that, in addition to the transitions due to SIR model and SI model separately, there are instances where both can occur simultaneously. For instance, when an infectious destination node $i \in \mathcal{X}_d$ meets a relay node $j \in \mathcal{S} - \mathcal{Y}$, the relay might get converted to a destination and also receive the content, as indicated by the diagonal transition. }
\label{fig:sir-si}
\end{figure}

\begin{figure}
\begin{center}
\resizebox{8.5cm}{!}
{
\begin{tabular}{|l|l|l|l|p{5cm}}
\hline
\textbf{Epoch type $i$} & \textbf{Rate $R_k(i)$} & \textbf{State update $\delta_k(i)$}  \\
\hline
$\mathcal{D}-\mathcal{X}_d$ recovers & $\beta_N (D(k)-X_d(k))$ & (1,-1,0,0,0) \\\hline
$\mathcal{X}_d$ recovers & $\beta_N X_d(k) $ & (1,-1,1,-1,0)\\ \hline
$\mathcal{B}-\mathcal{X}_b$ meets $\mathcal{X}+\mathcal{Y}$ & $\lambda_N (B(k) - X_b(k))(X(k) + Y(k))$ & (0,0,1,0,0) \\\hline
$\mathcal{D}-\mathcal{X}_d$ meets $\mathcal{Y}$ & $\lambda_N (D(k) - X_d(k))Y(k)$ & \parbox{4cm}{(0,0,0,1,0)\\ + (0,1,0,1,-1) w.p. $\Gamma$} \\\hline
$\mathcal{D}-\mathcal{X}_d$ meets $\mathcal{X}$ & $\lambda_N (D(k) - X_d(k))X(k)$ & (0,0,0,1,0) \\\hline
$\mathcal{X}_d$ meets $\mathcal{Y}$ & $\lambda_N X_d(k)Y(k)$ & (0,1,0,1,-1) w.p $\Gamma$ \\\hline
$\mathcal{S}-\mathcal{Y}$ meets $\mathcal{X}_b + \mathcal{Y}$ & $\lambda_N (X_b(k) + Y(k))(S(k) - Y(k))$ & (0,0,0,0,1) w.p. $\sigma$\\\hline
$\mathcal{S}-\mathcal{Y}$ meets $\mathcal{D}-\mathcal{X}_d$ & $\lambda_N (D(k)-X_d(k))(S(k) - Y(k))$ & (0,1,0,0,0) w.p. $\Gamma$ \\\hline
$\mathcal{S}-\mathcal{Y}$ meets $\mathcal{X}_d$ & $\lambda_N (X_d(k))(S(k) - Y(k))$ & \parbox{4cm}{(0,1,0,1,0) w.p. $\Gamma$ \\ (0,0,0,0,1) w.p$(1-\Gamma)\sigma$}\\
\hline
\end{tabular}
}
\end{center}
\caption{Transition Rates and state updates for various possible epochs}
\label{table:SIRSI-rates} 
\end{figure}

\subsection{Drift Equations}
The evolution of system state can be given by $Z(k+1) = Z(k) + \delta_k$, where $\delta_k$ depends upon the type of the epoch. We can express the expected drift of the system by appropriately using the rates of various epochs. Define $\Tilde{Z}(k) = Z(k)/N$ and define the mean drift rate as follows:
\[ F^N_z(k) = \sum_{i \in \mathcal{E}} R_k(i) \delta_k(i) \]
where $i \in \mathcal{E}$ indexes the epoch type, $R_k(i)$ and $\delta_k(i)$ as given in Table~\ref{table:SIRSI-rates}. See Appendix~\ref{app:kurtz-applied-sirsi} for the exact expressions for the mean drift rates.

Consider the o.d.e.s given below. 
\begin{eqnarray}
\label{eqn:system-begin}
\dot{b} &=& \beta d\\
\dot{d} &=& - \beta d + \lambda \Gamma d s \\
\label{eqn:system-mid}
\dot{x_b} &=& \beta x_d + \lambda (b - x_b)(x+y) \\
\dot{x_d} &=& \Gamma \lambda (d - x_d)y + \Gamma \lambda x_d s + \\ \nonumber
&& \hspace{2cm} \lambda (d-x_d)(x+y) - \beta x_d \\
\dot{y} &=& -\Gamma \lambda d y + \lambda \sigma (s-y)(x_b +y +(1-\Gamma)x_d) 
\label{eqn:system-end}
\end{eqnarray}
We can then state the following result.

\begin{theorem}
\label{thm:kurtz-theorem-sirsi}
Given the coevolution Markov process $ \Tilde{Z}^N(t) = (\Tilde{B}^{N}(t), \Tilde{D}^{N}(t),\Tilde{X}_b^N(t), \Tilde{X}_d^N(t), \Tilde{Y}^N(t))$ we have for each $T > 0$ and each $\epsilon > 0$,
\begin{eqnarray*}
\mathbb{P} \bigg( \sup_{0 \leq u \leq T}  \big| \big| \Tilde{Z}^{N}(\lfloor Nu \rfloor) - z(u) \big| \big| > \epsilon \bigg) \stackrel{N\rightarrow \infty}{\rightarrow} 0
\end{eqnarray*}
where $z(u) = (b(u),d(u), x_b(u), x_d(u), y(u))$ is the unique solution to the o.d.e. system given by Equations~(\ref{eqn:system-begin})-(\ref{eqn:system-end}) with $z(0) = (0,d(0),0,x_d(0),0)$.
\end{theorem}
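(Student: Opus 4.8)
The plan is to recognize Theorem~\ref{thm:kurtz-theorem-sirsi} as a second application of Kurtz's theorem \cite{kurtz70ode-markov-jump-processes}, exactly parallel to the argument behind Theorem~\ref{thm:kurtz-theorem-hilt}, but now carried out for the five-dimensional density-dependent jump Markov chain $Z(k) = (B(k), D(k), X_b(k), X_d(k), Y(k))$ whose transition structure is tabulated in Table~\ref{table:SIRSI-rates}. First I would observe that, because $\lambda_N = \lambda/N$ while the pairwise-meeting rates are a constant times a product of two state coordinates and the recovery rates are $\beta_N$ times a single coordinate, the total event rate is $O(N)$ whenever the coordinates are $O(N)$. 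Hence the family $Z^N$ is of the density-dependent type to which Kurtz's functional law of large numbers applies, after passing to the fractions $\tilde Z^N = Z^N/N$ and accelerating time by $N$ — equivalently, evaluating at $\lfloor Nu\rfloor$, using the same minislot time-scaling device employed in Section~\ref{sec:HILT-fluid}.

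The core of the proof is then to assemble the mean drift $F^N_z = \sum_{i\in\mathcal E} R_k(i)\,\delta_k(i)$ from Table~\ref{table:SIRSI-rates} and to check that its normalized limit $F^N_z/N$, written in the fractions $\tilde z$, equals the vector field on the right-hand side of Equations~(\ref{eqn:system-begin})--(\ref{eqn:system-end}). This is bookkeeping, but it carries the real content: each coordinate of the drift is obtained by summing the contributions of every epoch type, weighting the \emph{diagonal} epochs (those in which influence and content copying may both fire) by the probabilities $\Gamma$, $\sigma$, and $(1-\Gamma)\sigma$ as listed in the table. For example, the $-\Gamma\lambda d y$ term in $\dot y$ collects the loss of content-carrying relays whose conversion to destinations is triggered on meeting an infectious destination, while the $\lambda\sigma(s-y)(x_b+y+(1-\Gamma)x_d)$ term collects all controlled copies into relays; one verifies each of the five coordinates against the table in this manner. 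These details are deferred to Appendix~\ref{app:kurtz-applied-sirsi}.

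Finally, I would verify the hypotheses of Kurtz's theorem. The jump vectors $\delta_k(i)$ all have entries in $\{-1,0,1\}$ and there are finitely many epoch types, so the jumps are uniformly bounded and the integrability condition is immediate. Since every drift component is a polynomial (a finite sum of products of at most two bounded coordinates) restricted to the compact state region cut out by $x_b\le b$, $x_d\le d$, $y\le s := 1-b-d$ together with nonnegativity, the limiting vector field $F(z)$ is Lipschitz on that compact domain. The Lipschitz property then yields, via Cauchy--Lipschitz, a unique solution $z(u)$ to the initial value problem with $z(0) = (0,d(0),0,x_d(0),0)$, and Kurtz's theorem delivers the stated uniform-on-$[0,T]$ convergence in probability. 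I expect the only genuinely delicate point to be the drift assembly in the second step — in particular, correctly accounting for the convention that influence is attempted before content is copied on the diagonal transitions — since an error there would corrupt precisely the coupling terms in $\dot x_d$ and $\dot y$ that distinguish this model from a decoupled product of independent SIR and SI dynamics; the remaining regularity checks are routine given the polynomial, bounded-jump structure.
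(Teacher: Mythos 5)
Your proposal is correct and follows essentially the same route as the paper: the paper's proof likewise reduces the theorem to a verification of Kurtz's conditions for the five-dimensional process, assembling the normalized mean drift $\sum_i R_k(i)\delta_k(i)/N$ from Table~\ref{table:SIRSI-rates} into the vector field of Equations~(\ref{eqn:system-begin})--(\ref{eqn:system-end}), establishing the Lipschitz property via bounded partial derivatives on the compact state region, controlling the noise through the $O(N)$ total jump rate (the ``hydrodynamic scaling'' you identify via $\lambda_N = \lambda/N$, $\beta_N \to \beta$), matching initial conditions, and invoking Cauchy--Lipschitz for uniqueness. Your emphasis on the influence-before-copy convention in the diagonal transitions correctly pinpoints where the coupling terms in $\dot{x}_d$ and $\dot{y}$ come from, which is exactly the bookkeeping carried out in Appendix~\ref{app:kurtz-applied-sirsi}.
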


\begin{proof}
 The proof involves verifying the conditions for applying Kurtz's theorem to the SIR-SI process (see Appendix~\ref{app:kurtz-applied-sirsi}). Since the drift equations are Lipschitz, the uniqueness of the solution is guaranteed once the initial condition is fixed, by the Cauchy-Lipschitz condition.
\end{proof}

\subsection{Accuracy of the O.D.E.\ Approximation}
\label{accuracy-sirsi}
Figures~\ref{plot:sirsi-convergence-sigma100} and \ref{plot:sirsi-convergence-sigma30} illustrate the convergence of the scaled coevolution process $\Tilde{Z}^N(t)$ to the o.d.e. solutions $z(t)$ for increasing values of $N$  for $\sigma =1$ and $\sigma = 0.3$. We plot $a(t) = b(t) + d(t)$, $x(t) = x_b(t) + x_d(t)$ and $y(t)$ for clarity. Observe that as $N$ increases, the approximation of the original process by the o.d.e. becomes better. Observe that when $\sigma$ is increased from $0.3$ to $1$, there is significant increase in $y(t)$ but the contribution to $x(t)$ is not significant. Also, for static control, $x(t)$ asymptotically reaches $a(t)$ irrespective of $\sigma$ and $y(t)$ asymptotically reaches $s(t)=1-a(t)$ provided $\sigma >0$. Note that from the o.d.e. equations, it is clear that, $x(t)$ is monotonically increasing, whereas in general $y(t)$ need not be monotonic. 

\begin{figure}
\centerline{\includegraphics[width=9cm, height=4cm]{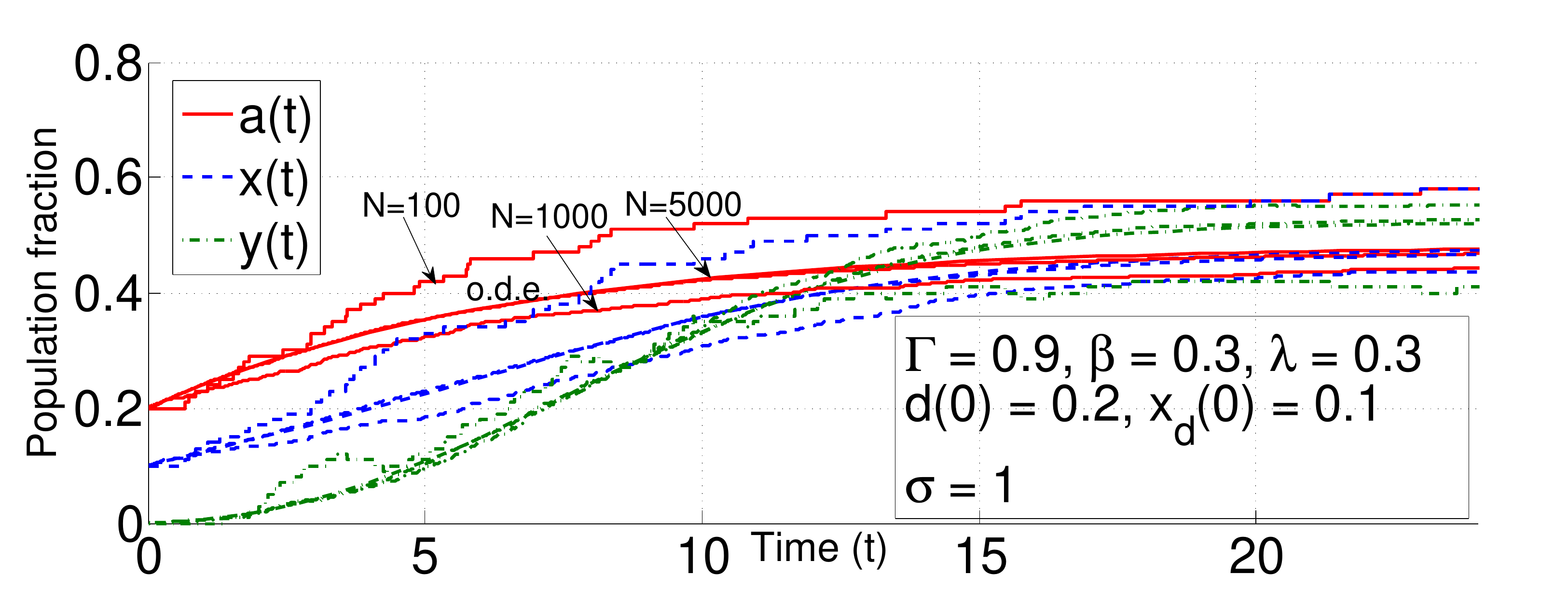}}
\vspace{-0.3cm}
\caption{Convergence of the scaled coevolution process $\Tilde{Z}^N(t)$ to the o.d.e. solutions $z(t)$ for increasing values of $N$ with $\sigma=1$}
\label{plot:sirsi-convergence-sigma100}
\vspace{-0.5cm}
\end{figure}

\begin{figure}
\centerline{\includegraphics[width=9cm, height=4cm]{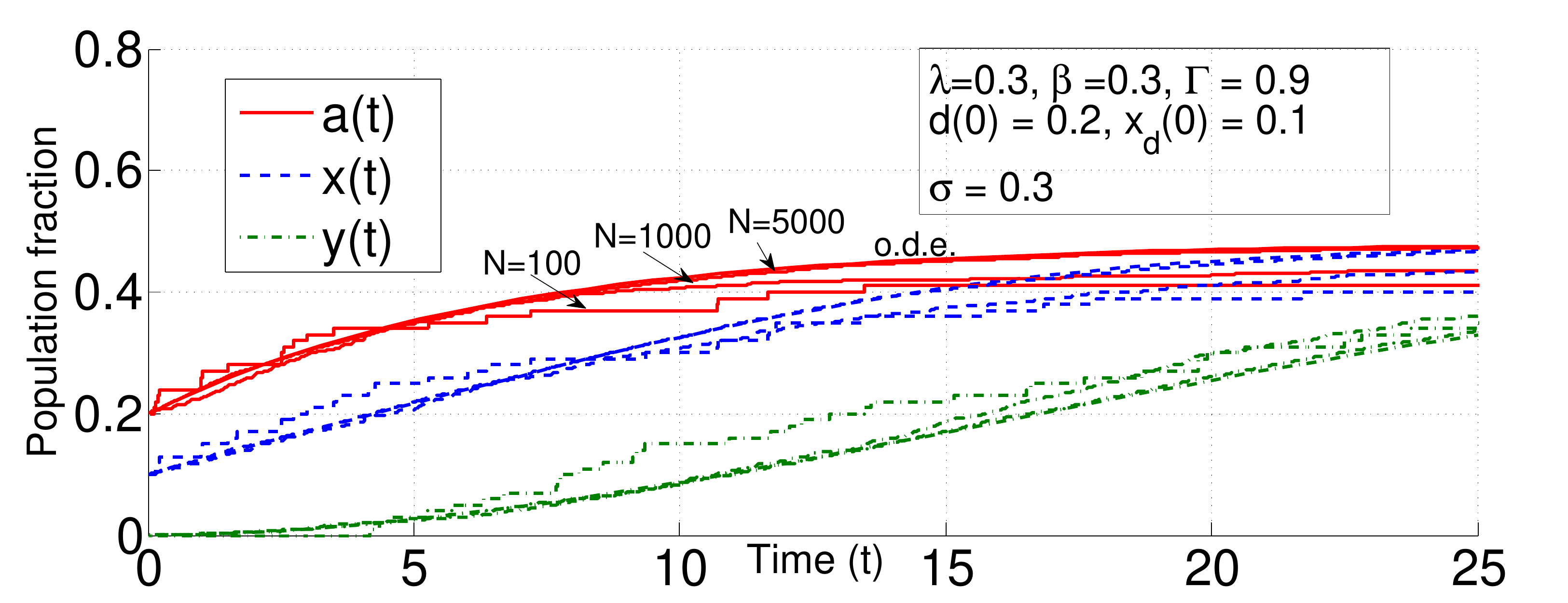}}
\vspace{-0.3cm}
\caption{Convergence of the scaled coevolution process $\Tilde{Z}^N(t)$ to the o.d.e. solutions $z(t)$ for increasing values of $N$ with $\sigma=0.3$}
\label{plot:sirsi-convergence-sigma30}
\vspace{-0.5cm}
\end{figure}

\subsection{Copying to Relays: Static vs. Dynamic Control}
In the above section we considered $\sigma$ to be a static control, i.e. $\sigma(t) = \sigma$, $\forall t$. We could instead consider a dynamic copying control $\sigma(t) \in [0,1]$.  Then for each $N$ we have a controlled continuous time Markov chain. The optimal control appears difficult to obtain in the finite size problem. We instead replace the constant copying probability in the ODE limit by $\sigma(t)$ which yields a controlled ODE. We can then obtain the optimal (deterministic) control for the controlled ODE (as in \ref{sec:optimal-control}). In certain situations it can be shown that this control is asymptotically optimal for the finite size problem as $N$ increases \cite{gast10mean-field-mdp}. The proof given in \cite{gast10mean-field-mdp}, for discrete time MDPs with finite time horizon, does not directly apply here, and needs to be extended to accommodate our setting. We do not prove this, and henceforth treat $\sigma(t)$ as a heuristic, and obtain the optimal deterministic control. 

In order to justify this assumption, the convergence of the co-evolution process to its o.d.e. limit is numerically demonstrated in Figure~\ref{plot:sirsi-convergence-thresh4} for a time-threshold type control. In this case,  $\sigma(t) = 1$, $t < 4$ and $\sigma(t) = 0$ for $t \geq 4$. Note that when $\sigma(t)=0$, $\dot{y} \leq 0$ from Equation~(\ref{eqn:system-end}). Observe that the trajectory of $x(t)$ is very similar to the one obtained when $\sigma = 0.3$, but the value of $y(t)$ for large $t$ is considerably lesser. This indicates that this threshold policy ($\tau = 4$) will perform better than the static policy $\sigma=0.3$, in the context of the objective function to be posed in the next section.

\begin{figure}
\centerline{\includegraphics[width=9cm, height=4cm]{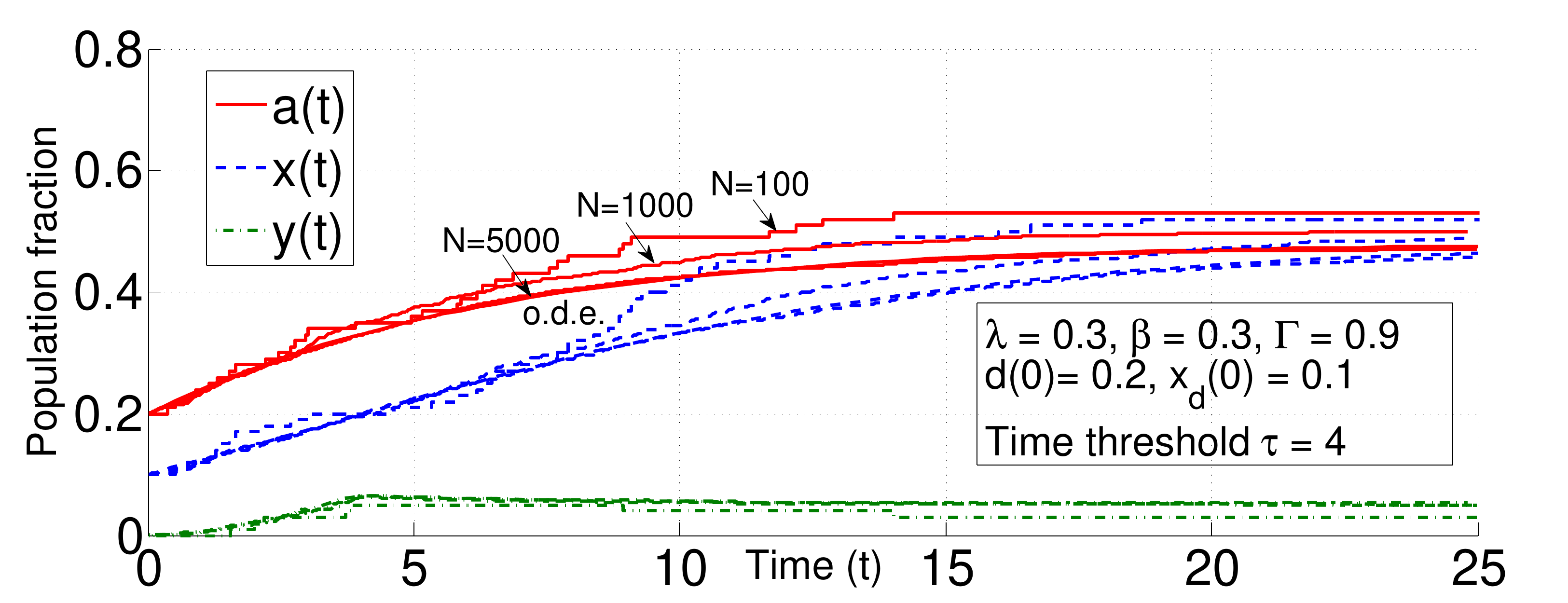}}
\vspace{-0.3cm}
\caption{Convergence of the scaled co-evolution process $\Tilde{Z}^N(t)$ to the o.d.e. solutions $z(t)$ for increasing values of $N$ with dynamic  $\sigma(t)$, a time-threshold function in this case, $\sigma(t) = 1$, $t < 4$ and $\sigma(t) = 0$ for $t \geq 4$.}
\label{plot:sirsi-convergence-thresh4}
\vspace{-0.5cm}
\end{figure}

\subsection{The Optimization Problem}
\label{subsec:optim-prob}
In \cite{singh-etal11dtn-multi-destination}, since the fraction of destinations was constant, it was suitable to choose the time of delivery to a fraction $\alpha$ of destinations as the objective to be minimized. In our setup, since the fraction of destinations is time-varying, we define the \emph{target time},

\[ T_\sigma = \inf \{t: x_\sigma(t) \geq \alpha a(\infty)\} \]
where $a(\infty)$ is the terminal fraction of destinations as given by the SIR model for interest evolution $(b(t), d(t))$ and $x_\sigma(t)$ is the fraction of destinations that have the content at time $t$. Note that since $d(\infty)=0$ and $a(t)=b(t)+d(t)$, $a(\infty) = b(\infty)$.

The cost we are interested in optimizing is of the form 
\begin{eqnarray}
\label{eqn:obj-function}
C_\sigma =  \psi y_\sigma(T_\sigma) + T_\sigma = \psi y_\sigma(T_\sigma) + \int_{0}^{T_\sigma} 1 dt
\end{eqnarray}
Here $y_\sigma(T_\sigma)$ denotes the number of relays that have the content at time $T_\sigma$ and signifies the number of wasted copies, and $\psi$ is the tradeoff parameter. 

Even though the copying cost is distributed across the nodes, we treat the number of wasted copies at the target time $T_\sigma$ as part of the system objective. This can be motivated by considering an incentive mechanism for the relay nodes which still hold the content but are not converted into destinations by the target time $T_\sigma$. Consider an example, where the content is a discount coupon for some service/product. The service provider, initially provides these discount coupons to a subset of its customers, who are then encouraged to spread replicas of the coupon. Nodes that are interested in the service, and are in possession of the coupon immediately claim the service. The service provider stops accepting discount coupons once a pre-defined number (here $\alpha a(\infty)$) of coupons have been used. At this time, the relay nodes that are still in possession of the coupon will need to be provided a reimbursement for helping the spread of the coupon. 

\subsection{Optimal Control}
\label{sec:optimal-control}
In this section we will establish the optimality of a time-threshold control for the objective given by the Equation~(\ref{eqn:obj-function}). We will use definitions and lemmas provided in in order to prove the following theorem.  

\begin{theorem}
\label{thm:optimal-control}
For the o.d.e system given by Equations~(\ref{eqn:system-begin})-(\ref{eqn:system-end}) there exists an optimal control of the form, 
\begin{eqnarray}
\label{eqn:optimal-control}
 \sigma_\tau(t) = \left\{\begin{array}{cl} 
                              1, & 0 < t < \tau \\
			      0, & t \geq \tau
			      \end{array}\right.  
\end{eqnarray}
which optimizes the cost in Equation~(\ref{eqn:obj-function}).
\end{theorem}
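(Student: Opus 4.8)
The plan is to exploit the hierarchical structure of the system and reduce the problem to a monotone comparison argument. First I would note the crucial \emph{decoupling}: the interest dynamics $(b,d)$ in Equations~(\ref{eqn:system-begin}) do not involve $\sigma$ or any of $x_b,x_d,y$, so $a(\infty)=b(\infty)$ and $s(t)=1-b(t)-d(t)$ are fixed exogenous functions and the target level $\alpha a(\infty)$ is a \emph{constant} independent of the control. Since $\dot x=\dot{x_b}+\dot{x_d}=\lambda(a-x)(x+y)+\Gamma\lambda(d-x_d)y+\Gamma\lambda x_d s\ge 0$ for every admissible $\sigma$ in the physical region $x\le a$, $x_d\le d$, $y\le s$, the quantity $x$ is nondecreasing and rises to $a(\infty)\ge\alpha a(\infty)$; hence $T_\sigma$ is a well-defined, finite, \emph{unique} hitting time for every control. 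The problem is thus to choose $\sigma:[0,\infty)\to[0,1]$ minimizing $\psi\,y_\sigma(T_\sigma)+T_\sigma$, where $\sigma$ enters \emph{only} through $\dot y$ in Equation~(\ref{eqn:system-end}).

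The central tool I would establish is a \emph{monotone comparison lemma}. One checks that the controlled subsystem $(x_b,x_d,y)$ is cooperative: in the physical region all off-diagonal Jacobian entries $\partial\dot{x_b}/\partial x_d$, $\partial\dot{x_b}/\partial y$, $\partial\dot{x_d}/\partial x_b$, $\partial\dot{x_d}/\partial y$, $\partial\dot y/\partial x_b$, $\partial\dot y/\partial x_d$ are nonnegative, and $\sigma$ multiplies the nonnegative factor $\lambda(s-y)(x_b+y+(1-\Gamma)x_d)$ in $\dot y$. By the comparison theorem for monotone systems, $\sigma_1(t)\ge\sigma_2(t)$ for all $t$ implies $x_{\sigma_1}(t)\ge x_{\sigma_2}(t)$ and $y_{\sigma_1}(t)\ge y_{\sigma_2}(t)$ for all $t$, so in particular $T_{\sigma_1}\le T_{\sigma_2}$. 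This makes precise the tradeoff the cost must balance: more copying always delivers content to destinations sooner (smaller $T_\sigma$) but inflates the relay copies (larger $y_\sigma(T_\sigma)$).

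For the threshold structure itself I would first argue \emph{bang-bang}: since $\sigma$ enters the dynamics affinely and the cost carries no running penalty on $\sigma$, the coefficient of $\sigma$ (the switching function $p_y\cdot\lambda(s-y)(x_b+y+(1-\Gamma)x_d)$ under Pontryagin's principle) has the sign of the adjoint $p_y$, the positive factor never vanishing on the interior; ruling out a singular arc ($p_y\equiv 0$ on an interval, which forces inconsistent adjoint relations) gives $\sigma^\ast\in\{0,1\}$ a.e. For the \emph{single switch} from $1$ to $0$, the transversality condition for the free terminal time with terminal constraint $x_b(T)+x_d(T)=\alpha a(\infty)$ and terminal cost $\psi y(T)$ gives $p_{x_b}(T)=p_{x_d}(T)=\nu$ and $p_y(T)=\psi>0$, so $\sigma^\ast=0$ near $T_\sigma$; I would then show $p_y$ crosses zero at most once, and upward, by verifying that at any zero of $p_y$ one has $\dot p_y=-\lambda[p_{x_b}(b-x_b)+p_{x_d}(1+\Gamma)(d-x_d)]>0$, which follows once $p_{x_b},p_{x_d}<0$ are established throughout from their own adjoint equations and terminal values (reflecting that increasing the ``progress'' variable $x$ lowers the cost-to-go). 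A single upward crossing of $p_y$ is exactly the threshold control $\sigma_\tau$ in Equation~(\ref{eqn:optimal-control}).

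The hard part is this single-switch step. The heuristic is robust — a relay copied \emph{early} both spreads content to destinations over a longer horizon and has more time to be drained into the destination pool via the $-\Gamma\lambda d\,y$ term, so early copying advances $x$ more while polluting $y(T_\sigma)$ less — but turning it into a proof is delicate, because an interchange argument that moves copying effort earlier is \emph{not} a pointwise comparison of controls, so the monotone comparison lemma does not apply directly, and the nonlinear feedback of $y$ into $\dot x$ (and back) must be controlled. I would therefore either push the adjoint-sign analysis above to show $p_y$ is single-crossing, or prove directly that the Pareto frontier of attainable pairs $(T_\sigma,y_\sigma(T_\sigma))$ is traced out by threshold controls and invoke monotonicity of the cost in both arguments. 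Finally, continuity of $\tau\mapsto C_{\sigma_\tau}$ on $[0,\infty]$, together with its limiting values at $\tau=0$ and $\tau=\infty$, yields a minimizing $\tau$ and hence existence of an optimal threshold control.
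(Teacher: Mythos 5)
Your setup coincides with the paper's own machinery up to a point: the decoupling of $(b,d)$ from the control, the finiteness of the hitting time, and the cooperative (Kamke) structure of the $(x_b,x_d,y)$ subsystem yielding order preservation ($\sigma^{(1)}\geq\sigma^{(2)}$ pointwise implies $(x_b,x_d,y)^{(1)}(t)\geq(x_b,x_d,y)^{(2)}(t)$ for all $t$) are precisely Definition~\ref{defn:control-domination} and Lemmas~\ref{lemma:order-preserving}--\ref{lemma:order-preserving-SIRSI} of the paper. The divergence is in how the threshold structure is extracted. The paper never invokes Pontryagin's principle; it proves that threshold policies trace out the entire attainable frontier: Lemma~\ref{lemma:tau-rho-continuity} shows $\tau\mapsto y_\tau(T_\tau)$ is continuous, so its range is an interval $[0,\rho_{\max}]$; Lemma~\ref{lemma:threshold-dominates-samerho} shows, via a crossing argument comparing $y_\tau(\cdot)$ and $y_\sigma(\cdot)$ after their last intersection, that any policy $\sigma$ with $y_\sigma(T_\sigma)=\rho\leq\rho_{\max}$ is weakly dominated by the threshold policy achieving the same $\rho$ (which has $T_\tau\leq T_\sigma$); and Lemma~\ref{lemma:rho-greater-than-rhomax} disposes of $\rho>\rho_{\max}$. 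This is exactly the ``Pareto frontier'' argument you mention in a single sentence as a fallback --- but in the paper it is the entire proof, and the real work (continuity of $T_\tau$ and of $y_\tau(T_\tau)$ in $\tau$, and the comparison after the crossing time) is not developed in your proposal.

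The PMP route you do develop has a genuine gap at its decisive step. Both the exclusion of singular arcs and the claim that every zero of $p_y$ is an upward crossing hinge on $p_{x_b},p_{x_d}<0$ throughout, which you justify by ``increasing the progress variable $x$ lowers the cost-to-go.'' That intuition fails for this problem: raising $x$ accelerates the hitting of the level $\alpha a(\infty)$, and stopping earlier truncates the drain term $-\Gamma\lambda d y$, leaving a larger terminal penalty $\psi y(T_\sigma)$; so the cost-to-go need not be decreasing in $x_b,x_d$. Concretely, since $p_y(T)=\psi>0$ forces $\sigma=0$ near $T$, the free-time condition $H(T)=0$ gives $\nu=p_{x_b}(T)=p_{x_d}(T)=-\bigl(1-\psi\Gamma\lambda d(T)y(T)\bigr)/\dot{x}(T)$, which is \emph{positive} whenever $\psi\Gamma\lambda d(T)y(T)>1$ --- precisely the regime where the wasted-copy penalty matters. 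So the terminal sign pattern you need can fail; moreover, along any $\sigma=1$ arc the positive $p_y$ feeds into the backward dynamics of $p_{x_b},p_{x_d}$ through nonnegative off-diagonal Jacobian entries, so no orthant-invariance argument rescues the claimed signs. A repaired PMP proof would need a case analysis on the sign of $\nu$ and a backward bootstrap along bang arcs, none of which is in the proposal. Since the single-switch property is the crux of the theorem, the proposal as written is not a proof; the cleanest completion is to flesh out your fallback, which is exactly the paper's Lemmas~\ref{lemma:tau-rho-continuity}--\ref{lemma:rho-greater-than-rhomax}.
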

\begin{proof}
Recall $y_\sigma(T_\sigma)$ is the amount of wasted copies, at the target time $T_\sigma$. When $\sigma(t) = \sigma_{\tau}(t)$, a time-threshold policy (Equation~(\ref{eqn:optimal-control})), we will denote this by $y_\tau(T_\tau)$, where $\tau$ is the time threshold associated with $\sigma_\tau(t)$.

The sketch of the proof is as follows. We first establish that, the set of values taken by $y_\tau(T_\tau)$ form an interval of $[0,\rho_{\max}]$. Then, given any policy $\sigma(t)$, we show that:
\begin{itemize}
 \item If $y_\sigma(T_\sigma) = \rho \leq \rho_{\max}$, then $\exists$ a time-threshold policy $\sigma_\tau(t)$ such that  $y_\tau(T_\tau)=\rho$ and $T_\tau \leq T_\sigma$.  
 \item If $y_\sigma(T_\sigma) = \rho > \rho_{\max}$, we can find a time-threshold policy $\sigma_\tau(t)$ which has a smaller total cost. 
\end{itemize}
Thus in either case, we have a time-threshold policy, which performs at least as good as the given policy, which proves the optimality of the time-threshold policy. See Lemmas~\ref{lemma:tau-rho-continuity}, \ref{lemma:threshold-dominates-samerho} and \ref{lemma:rho-greater-than-rhomax} of Appendix~\ref{app:coop-system} for proofs of the above claims. 

\end{proof}

Figure~\ref{plot:optimal_control_tau_sweep} shows the variation of $T_\tau$, $y_\tau(T_\tau)$ and $C_\tau$ as a function of $\tau$ for fixed system parameters. It can be seen that as $\tau$ is increased, $T_\tau$ decreases monotonically, and we see an increase in the value of $y_\tau(T_\tau)$. The optimal threshold $\tau^\star$ minimizes the total cost $C_\tau$, by balancing the two component costs, taking the tradeoff parameter $\psi$ into account.

\begin{figure}
\centerline{\includegraphics[width = 8cm, height = 4cm]{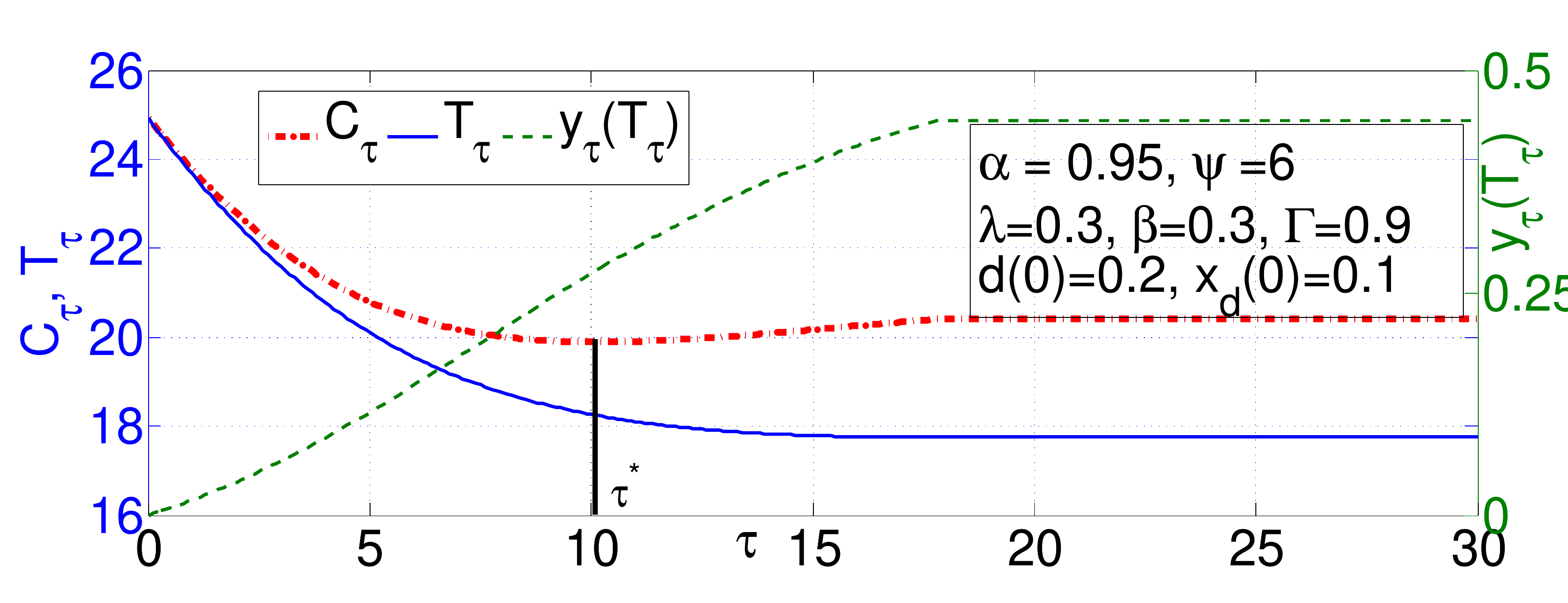}}
\caption{Evaluation of optimal $\tau$ for the time-threshold policy. The plot shows the behaviour of $T_\tau$ and $y_\tau(T_\tau)$ for various values of $\tau$, for fixed system parameters.}
\label{plot:optimal_control_tau_sweep}
\vspace{-0.5cm}
\end{figure}

\section{Numerical results}
\label{sec:numerical}
In Sections \ref{sec:HILT-model} and \ref{sec:sir-si} we have studied Markov models for influence spread and for the co-evolution of interest in an item of content and the spread of the content in a mobile opportunistic network. The spread of content can be controlled by the probability of copying the content to uninterested nodes. We established fluid limits for these Markov models and illustrated their accuracy via simulations. In this section, we provide an extensive parametric numerical study of the fluid models, and report insights that can be obtained. While the first part of the section deals with optimizing for the interest evolution process considered in isolation, the latter part deals with optimizations for the co-evolution process. We will study the interest evolution under the uniform threshold distribution, since the SIR model (which arises out of exponential threshold distribution) is well studied in the literature. 

\subsection{Interest Evolution}
Content creators are often interested in understanding the evolution of popularity of their content, and would wish to maximize the level of popularity achieved. This, in our model, is equivalent to the final fraction of destinations (nodes that are interested in receiving the content). In most cases, the content creator does not have control over the influence weight $\Gamma$ or the threshold distribution $F(\cdot)$ of the population. In order to increase the spread of interest in the content, the only parameter that can be controlled is $d_0$, the initial fraction of destinations in the population. In Section~\ref{sec:HILT-fluid}, we derived the relation between $d_0$ and $b_\infty$, the final fraction of destinations. We might be interested in choosing the right $d_0$ which can give us the required $b_\infty$, and we see that by rearranging Equation~(\ref{eqn:d_0_b_infty}) we get,
\begin{eqnarray}
\label{eqn:b_infty_d_0}
 d_0 = \frac{b_\infty (1 -\Gamma)}{1 - b_\infty \Gamma} 
\end{eqnarray}
As discussed in Section~\ref{sec:HILT-fluid}, the above equation is applicable for $\Gamma<1$. When $\Gamma=1$, from Equation~(\ref{eqn:d_0_b_infty}) we see that $b_\infty=1$ as long as we start with $d_0 >0$. 

Since the o.d.e. provides a good approximation for the temporal evolution of interest, we can also obtain results for the time taken by the process for the spread of influence.
\begin{theorem}
Given the initial fraction of destinations $d_0$ in an HILT network with parameter $\Gamma$, the time we have to wait to get the final fraction of destinations to be at least $\beta$ ($\beta < \frac{d_0}{r}$) is given by,
\[ T(\beta, d_0, \Gamma) = \frac{1}{r} \ln \bigg( \frac{1-r}{1-\frac{\beta}{d_0}r} \bigg) \]
where $r = 1 - \Gamma + \Gamma d_0$.
\end{theorem}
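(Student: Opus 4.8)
The plan is to read the waiting time off directly from the closed-form fluid solution under the uniform threshold distribution established in the earlier theorem. First I would recall the explicit trajectories $b(t) = \frac{d_0}{r}\left(1 - e^{-rt}\right)$ and $d(t) = d_0 e^{-rt}$, with $r = 1 - \Gamma + \Gamma d_0$. The quantity of interest is the total fraction of destinations at time $t$, namely $a(t) = b(t) + d(t)$, since a node counts as a destination whether it is currently infectious (counted in $d$) or already non-infectious (counted in $b$). Adding the two expressions and simplifying gives $a(t) = \frac{d_0}{r}\left(1 - (1-r) e^{-rt}\right)$, which I would record as the first step; it is precisely the extra $(1-r)$ here that distinguishes the formula from the one for $b(t)$ alone.

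Next I would verify that $a(\cdot)$ is well behaved, so that the first time it reaches the level $\beta$ is unique and finite. Using $\dot a = \dot b + \dot d = \frac{\Gamma d}{1 - \Gamma b}(1 - b - d)$ shows $\dot a \geq 0$, so $a$ is nondecreasing and is strictly increasing while $d(t) > 0$, i.e.\ for all finite $t$. Its range is $[a(0), a(\infty)) = [d_0, \frac{d_0}{r})$, since $a(0) = d_0$ and $d(\infty) = 0$ force $a(\infty) = b_\infty = \frac{d_0}{r}$. Consequently, for any target level $\beta$ with $d_0 \leq \beta < \frac{d_0}{r}$ --- the hypothesis $\beta < \frac{d_0}{r}$ ensuring the level lies strictly below the asymptote and is therefore attained in finite time --- there is a unique $T$ with $a(T) = \beta$.

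The final step is the algebra. Setting $a(T) = \beta$ and solving $\frac{d_0}{r}\left(1 - (1-r)e^{-rT}\right) = \beta$ for $T$ gives $e^{-rT} = \frac{1 - \frac{\beta}{d_0} r}{1 - r}$, and taking logarithms yields $T = \frac{1}{r}\ln\!\left(\frac{1-r}{1 - \frac{\beta}{d_0} r}\right)$, which is the claimed formula. As a sanity check I would note that $\beta = d_0$ returns $T = 0$ and that $\beta \uparrow \frac{d_0}{r}$ sends $T \to \infty$, both consistent with the dynamics.

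There is no serious obstacle here: the result is a direct consequence of the already-solved o.d.e. The only points requiring care are the modeling observation that the relevant quantity is the sum $a = b + d$ rather than $b$ alone, together with the monotonicity argument that makes the waiting time well defined. One should also note the implicit restriction $\beta \geq d_0$, since one cannot ``wait'' to reach a fraction already below the initial $d_0$; this is exactly the regime in which the argument of the logarithm stays at least $1$ and $T$ is nonnegative.
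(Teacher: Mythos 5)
Your proof is correct and follows essentially the same route as the paper's: both use the explicit uniform-threshold fluid solution, identify the relevant quantity as $a(T)=b(T)+d(T)$ (not $b$ alone, since $d(T)\neq 0$ at finite time), set $a(T)=\beta$, and solve the resulting equation for $T$. The additional monotonicity and uniqueness observations you include are sound refinements of the paper's terser argument, not a different method.
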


\begin{proof}
Firstly, note that since $a_\infty = b_\infty = \frac{d_0}{r}$, we cannot reach $\beta > \frac{d_0}{r}$. Since we are observing the process at a finite time $T$, $d(T)$ is not zero. Hence, we consider $a(T)=b(T)+d(T)$ and set it to $\beta$. We get,
\begin{eqnarray}
\label{eqn:hilt_d_0_finite_T}
a(T) = d_0 \big( \frac{1}{r} - ( \frac{1}{r}-1) e^{-rT} \big) = \beta
\end{eqnarray}
Rearranging terms,we get the expression for $T(\beta, d_0, \Gamma)$. 
\end{proof}
\begin{figure}
\centerline{\includegraphics[width = 9cm, height = 4cm]{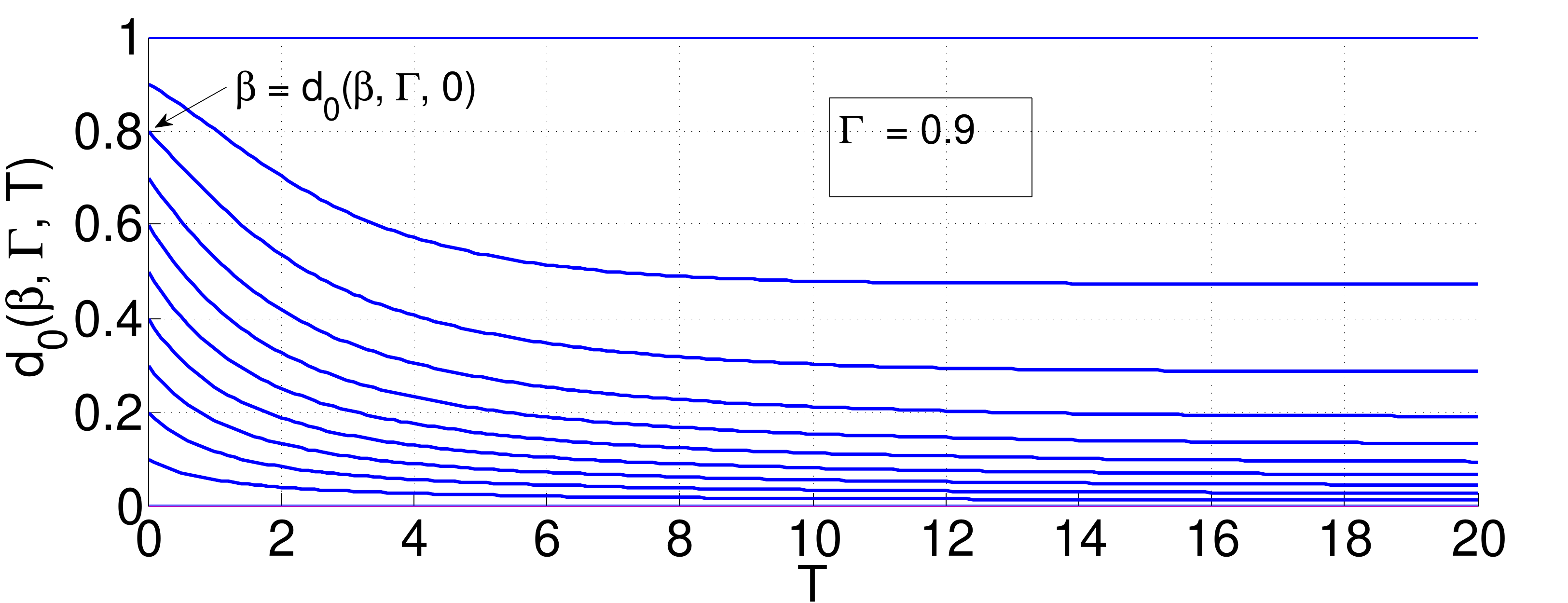}}
\vspace{-0.3cm}
\caption{Variation of $d_0^{\star}$ versus $T$ for various values of the required target $\beta$ with $\Gamma=0.9$. Notice that the value of $\beta$ for each curve is the intercept of the curve with the vertical axis }
\label{plot:d_0_fixed_Gamma_varying_beta_versus_T}
\vspace{-0.5cm}
\end{figure}

\begin{figure}
\centerline{\includegraphics[width = 9cm, height = 4cm]{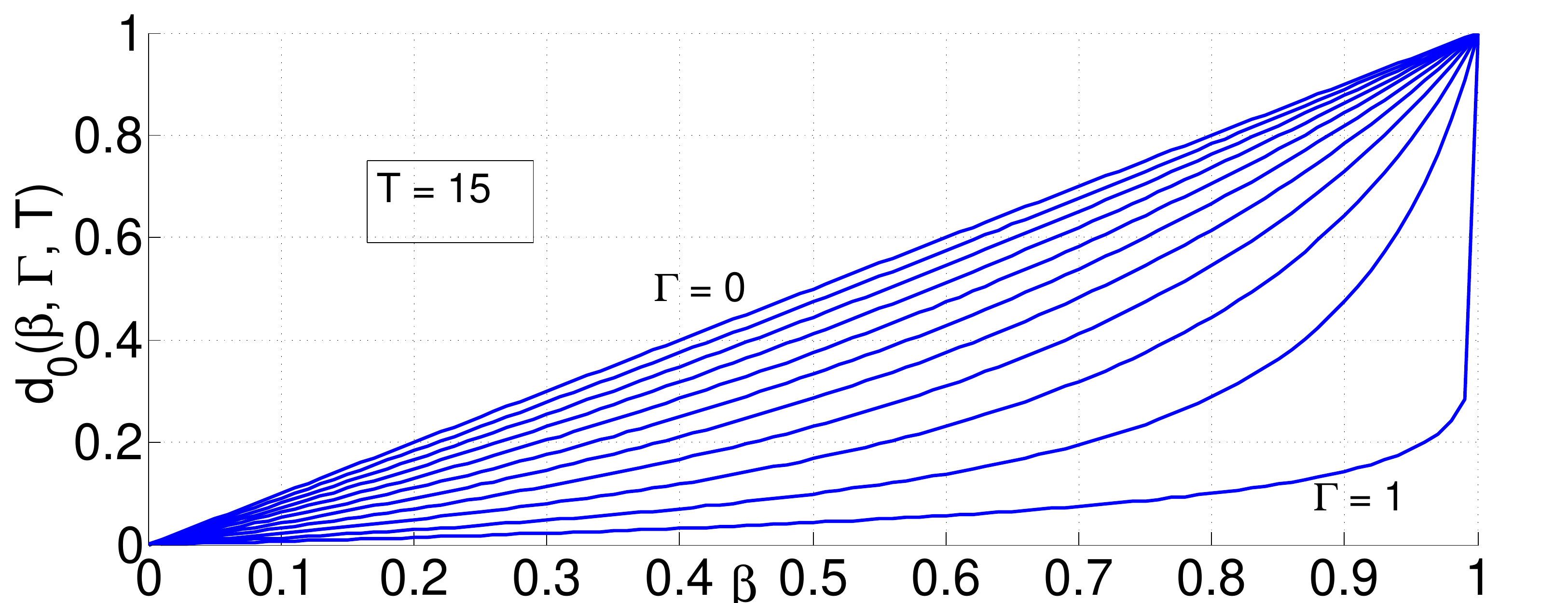}}
\vspace{-0.3cm}
\caption{Variation of $d_0^{\star}$ versus the required target $\beta$ for various values of $\Gamma$ at $T=15$}
\label{plot:d_0_fixed_T_varying_Gamma_versus_beta}
\vspace{-0.5cm}
\end{figure}

Another interesting question would be to determine the initial fraction $d_0$ to be chosen so that by time $T$ we will have at least a fraction $\beta$ of the nodes in the destination set, in the HILT network with parameter $\Gamma$. This can be solved numerically using the following equation obtained from Equation~(\ref{eqn:hilt_d_0_finite_T}).
\[ e^{-rT} = \frac{1-\frac{\beta}{d_0} r}{1-r} \]
Substituting for $r=1-\Gamma - \Gamma d_0$, 
\[ e^{-(1-\Gamma)T} e^{-\Gamma d_0} = \frac{1-\frac{\beta (1-\Gamma)}{d_0} - \Gamma \beta}{\Gamma (1-d_0)}\]

Let $H(d_0)$ and $G(d_0)$ denote the LHS and RHS of the above equation. We know that $d_0^{\star}$ that satisfies the above equality will lie in $[\frac{\beta (1 -\Gamma)}{1 - \beta \Gamma}, 1]$ and that the solution is unique, since $a(T)$ is a monotonic function in $d_0$. Also observe that $H(d_0)$ is decreasing in $d_0$ and $G(d_0)$ is increasing in $d_0$. Hence for $d_0 < d_0^{\star}$, $H(d_0) < G(d_0)$ and for $d_0 > d_0^{\star}$, $H(d_0) > G(d_0)$. Under the above conditions, we can use the iterative bisection method that will converge to $d_0^{\star}$. 

The variation of $d_0^{\star}$ with respect to the parameters $\beta$, $\Gamma$ and $T$ can be seen in Figures~\ref{plot:d_0_fixed_Gamma_varying_beta_versus_T} and \ref{plot:d_0_fixed_T_varying_Gamma_versus_beta}. As expected, for fixed $\Gamma$ and $T$, a higher value of $\beta$ requires a higher $d_0$ (Figure~\ref{plot:d_0_fixed_T_varying_Gamma_versus_beta}) and if the network has a high $\Gamma$, then it is sufficient to start with a smaller $d_0$ to reach a given $\beta$ by $T$ (Figure~\ref{plot:d_0_fixed_T_varying_Gamma_versus_beta}). Observe that, as $T$ is increased (the time constraint is relaxed), the required initial $d_0$ decreases, from $\beta$ for $T=0$ and asymptotically approaches $d_0 = \frac{\beta(1-\Gamma)}{1-\Gamma \beta}$ (obtained by setting $b_\infty = \beta$ in Equation~(\ref{eqn:b_infty_d_0})) (Figure~\ref{plot:d_0_fixed_Gamma_varying_beta_versus_T}).  

\subsection{Decentralized Influence Spread}
Having observed that the optimal control is of the form \ref{eqn:optimal-control}, we can now numerically compute the optimal time threshold, which we shall refer to as $\tau^\star$. 

Recall the cost function is of the form
\begin{eqnarray*}
C_\sigma = T_\sigma + \psi y_\sigma(T_\sigma) = \psi y_\sigma(T_\sigma) + \int_{0}^{T_\sigma} 1 dt
\end{eqnarray*}
where $T_\sigma$ is the target time to reach a fraction $\alpha$ of the terminal fraction of destinations under the policy $\sigma(t)$,i.e.,
\[ T_\sigma = \inf \{t: x_\sigma(t) \geq \alpha a_\infty\} \]
Since in this case, $\sigma(t) = \sigma_{\tau^\star}(t)$, we shall denote the total cost by $C_{\tau^\star}$, and the two components of the cost by $T_{\tau^\star}$ and $y_{\tau^\star}(T_{\tau^\star})$. In the following discussion we shall numerically study the effect of various cost parameters, system parameters and initial conditions on $\tau^\star$,$C_{\tau^\star}$, $T_{\tau^\star}$ and $y_{\tau^\star}(T_{\tau^\star})$.

\subsubsection{Effect of initial conditions}
Figure~(\ref{fig:d_0_sweep}) shows the effect of $d_0$ on the optimal $\tau^\star$. We see that as we increase $d_0$ keeping the fraction of $\frac{x_{d_0}}{d_0}$ constant, there are more destinations that have the content. Further, there are fewer relays in the population, and with fixed $\Gamma$ and $\beta$, there is less chance for them to get converted into destinations. This prevents us copying to more relays and hence there is a decrease in the value of $\tau^\star$. Note that when $d_0 = 1$, the entire population consists of only destinations, and hence the optimal $\tau = 0$.

Figure~(\ref{fig:x_d_0_sweep}) shows the effect of $x_{d_0}$ on the optimal $\tau^\star$. As the fraction $\frac{x_{d_0}}{d_0}$ is increased, since a larger number of destinations have the content, the newly infected relays also obtain the content. Observe that $\frac{x_{d_0}}{d_0} = 1$ implies that all the initial destinations are given the content. This implies that any destination converted in the future, will automatically receive the content. This alleviates the need to copy to any of the relays, since the main purpose of copying to relays is to serve the destinations without the content (either because they were destinations not given the content initially or were converted by other destinations without the content at a later time). 

\begin{figure}[t]
 \includegraphics[width = 8cm, height = 4cm]{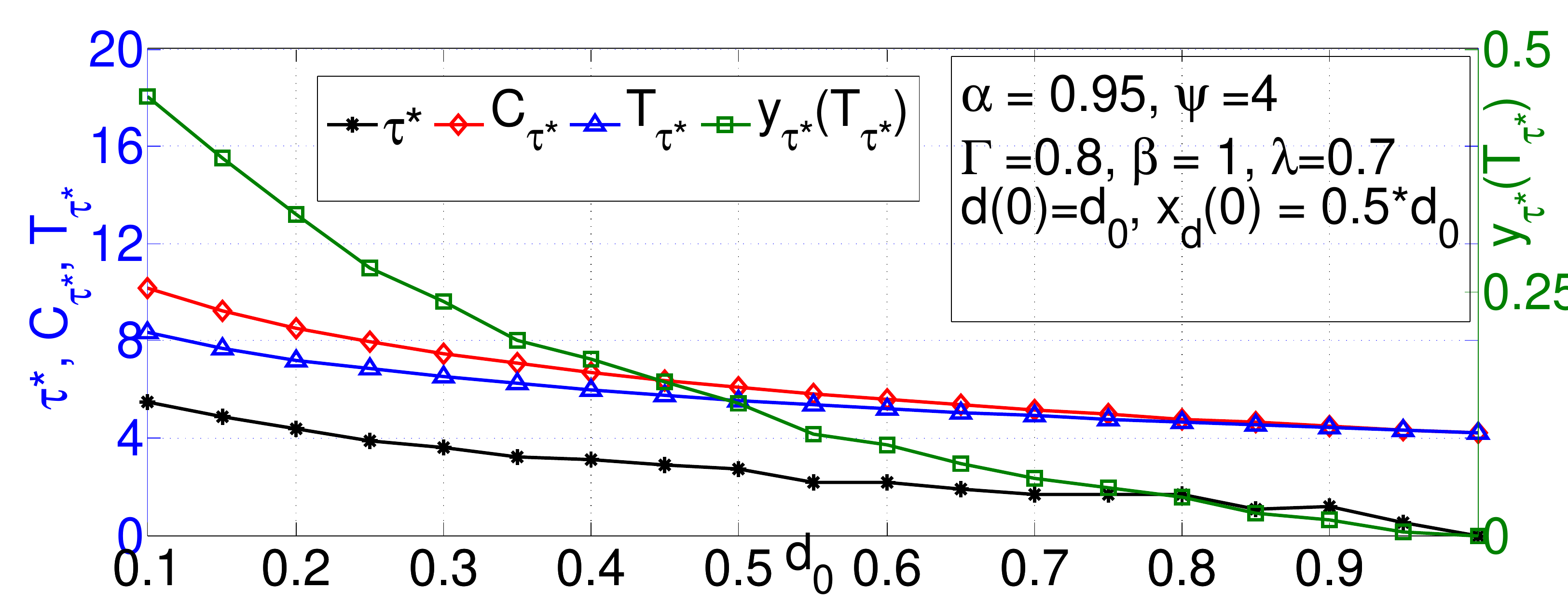}

 \caption{Decentralized Influence Spread: Effect of $d_0$ on the optimal cost and optimal $\tau$}
 \label{fig:d_0_sweep}
\end{figure}
\begin{figure}[t]
 \includegraphics[width = 8cm, height = 4cm]{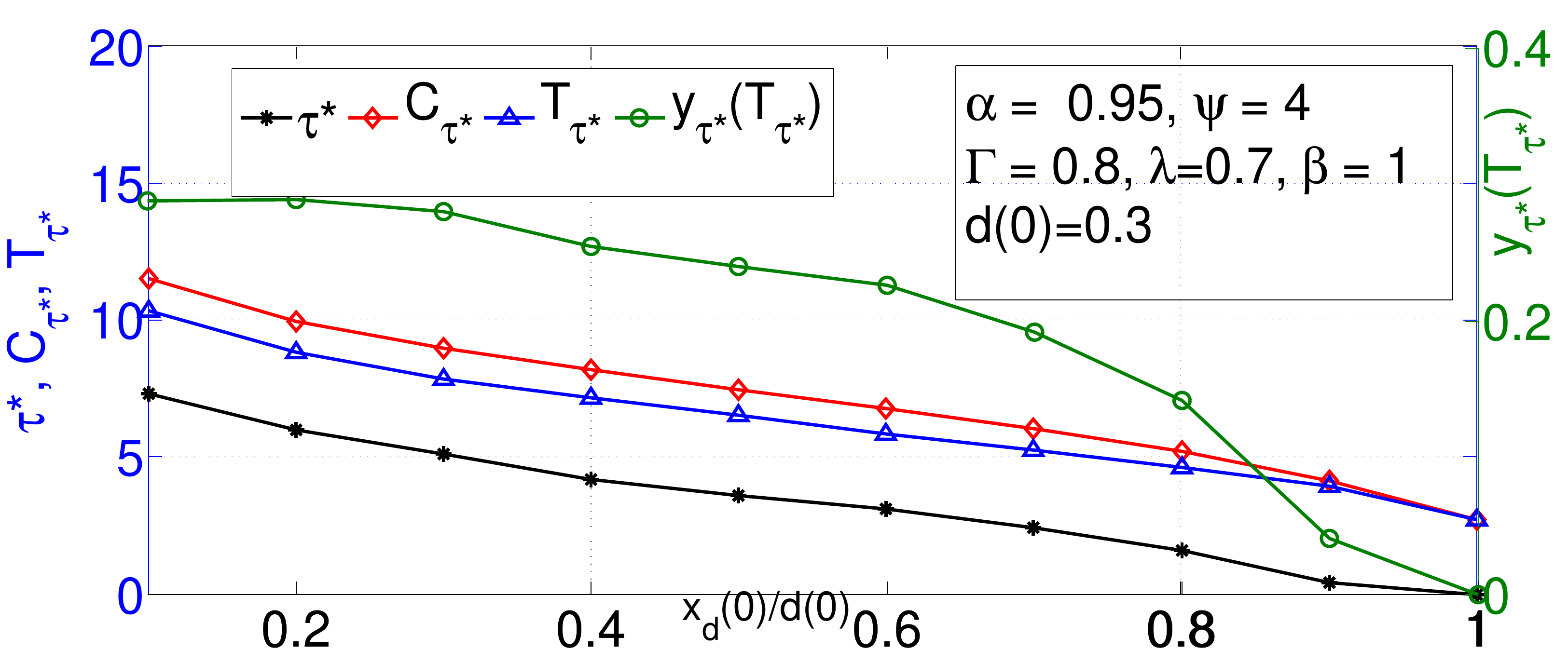}
 
 \caption{Decentralized Influence Spread: Effect of $\frac{x_{d_0}}{d_0}$ on the optimal cost and optimal $\tau$}
\label{fig:x_d_0_sweep}
\end{figure}

\subsubsection{Effect of system parameters}
Figure~(\ref{fig:gamma_sweep}) shows the effect of $\Gamma$ on $\tau^\star$.  We see that as $\Gamma$ increases, it increases the rate at which relays get converted into destinations, without affecting the rate at which content copying occurs (dependent on the meeting rate $\lambda$ and copying probability $\sigma(t)$). We do not see much effect on $\tau^\star$, and $T_{\tau^\star}$. This is because, though increasing $\Gamma$ increases the terminal set of destinations (and thus the target $\alpha a(\infty)$), it also helps in achieving the target by the conversion of relays with the contents into destinations with the content. Also due to this conversion, we observe a decrease in $y_\tau(T_\tau)$, further indicated by the fact that $y(t) \leq s(t)$ and the fraction of relays $s(t)$ is decreasing rapidly.

Figure~(\ref{fig:beta_sweep}) shows the effect of $\beta$ on the optimal $\tau^\star$. We observe that when $\beta$ is very low, destinations remain infectious for a longer duration before recovery. This leads to a faster decay in $s(t)$ the size of relays and also an increase in the target $\alpha a(\infty)$, and hence allowing/requiring a more aggressive copying policy (larger $\tau$). Also for very large $\beta$, the total fraction of destinations carrying the content would be less, and hence to make the content more available, we need to copy to more relays. Also, since in this case, the rate of conversion to destinations is much lower, we note that a relay with the content has much lesser chance of getting converted into a destination. This explains the increase in $y_\sigma(T_\sigma)$. 

Figure~(\ref{fig:lambda_sweep}) shows the effect of $\lambda$ on the optimal $\tau^\star$. As $\lambda$ increases, the rate at which content spread increases, and thus a destination in need of the content is highly likely to obtain it from another destination. This results in a more passive policy of copying to relays, leading to a decrease in the optimal $\tau^\star$ and the corresponding costs.

\begin{figure}[t]
 \includegraphics[width = 8cm, height = 4cm]{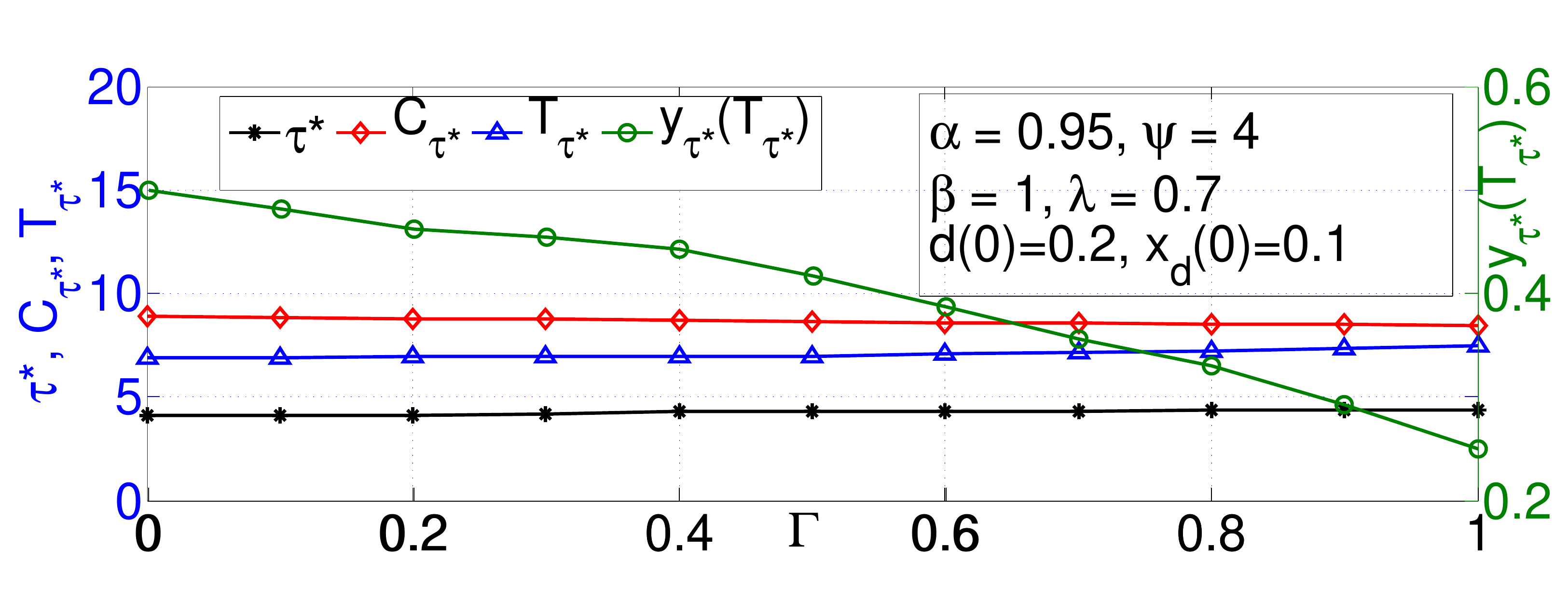}
 
 \caption{Decentralized Influence Spread: Effect of $\Gamma$ on the optimal cost and optimal $\tau$}
\label{fig:gamma_sweep}
\end{figure}

\begin{figure}[t]
 \includegraphics[width = 8cm, height = 4cm]{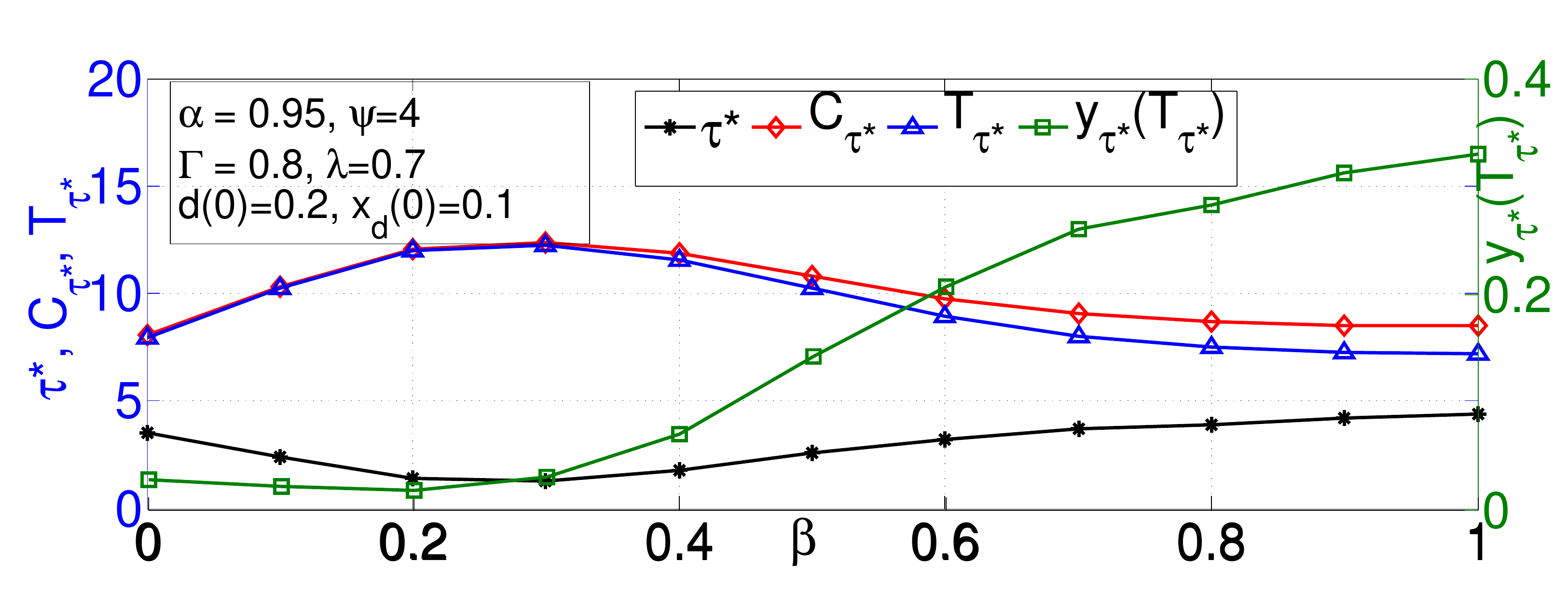}
 \caption{Decentralized Influence Spread: Effect of $\beta$ on the optimal cost and optimal $\tau$}
 \label{fig:beta_sweep}
\end{figure}

\begin{figure}[t]
 \includegraphics[width = 8cm, height = 4cm]{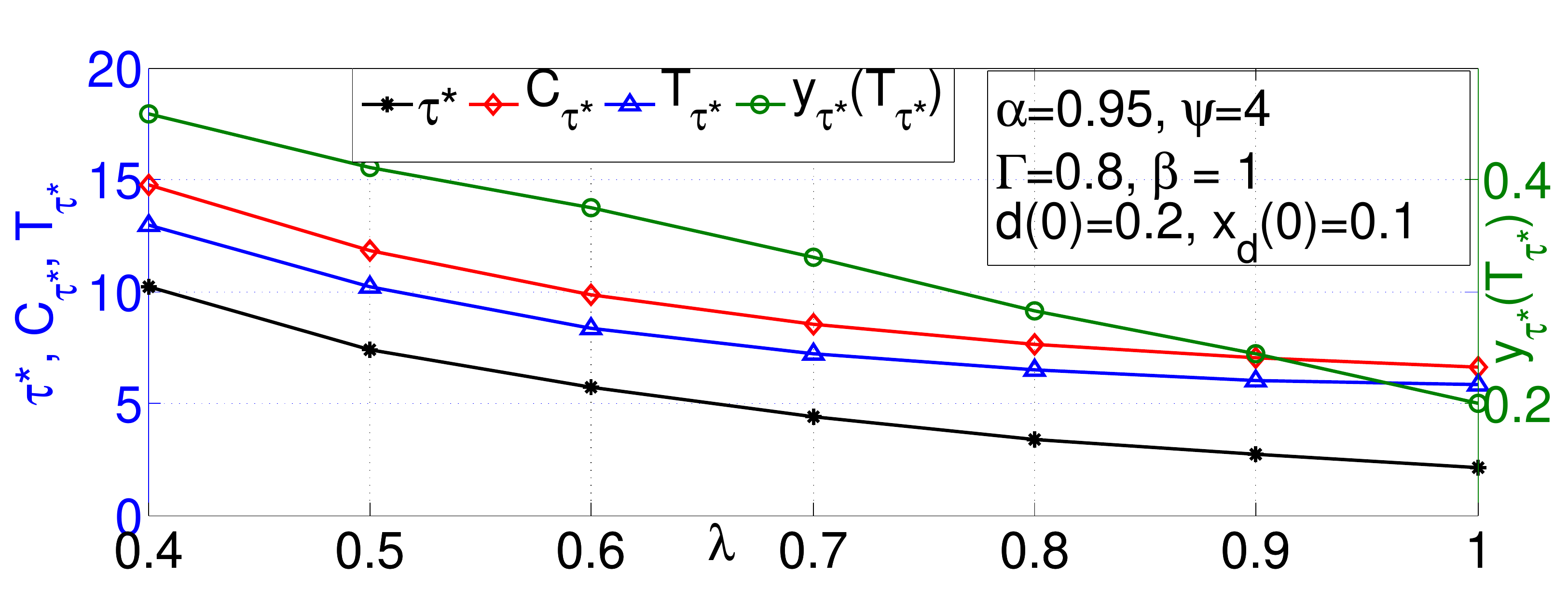}
 \caption{Decentralized Influence Spread: Effect of $\lambda$ on the optimal cost and optimal $\tau$}
 \label{fig:lambda_sweep}
\end{figure}

\subsubsection{Effect of cost parameters}
Figure~(\ref{fig:alpha_sweep}) shows the effect of $\alpha$ on $\tau^\star$.  For the given system parameters, $a_\infty=0.4352$  and hence it is meaningful to start from $\alpha=0.3$. As $\alpha$ increases, we see that $\tau^\star$ increases and in turn increases the costs, since we need to continue copying to relays for a longer duration. 
Figure~(\ref{fig:psi_sweep}) shows the effect of $\psi$ on $\tau^\star$. As $\psi$ increases, we see that the emphasis on the wasted copies, $y(T)$, increases, and hence the control needs to be \emph{less aggressive}. Thus we see a decrease in $\tau^\star$ as $\psi$ increases, and this leads to a decrease in the wasted copies ($y_{\tau^\star}(T_{\tau^\star})$) and an increase in the delay of reaching the target ($T_{\tau^\star}$). We see an increase in the total cost $C_{\tau^\star}$ since both the terms in the cost function ($T_{\tau^\star}$ and $\psi y_{\tau^\star}(T_{\tau^\star})$) are increasing. 

\begin{figure}[t]
 \includegraphics[width = 8cm, height = 4cm]{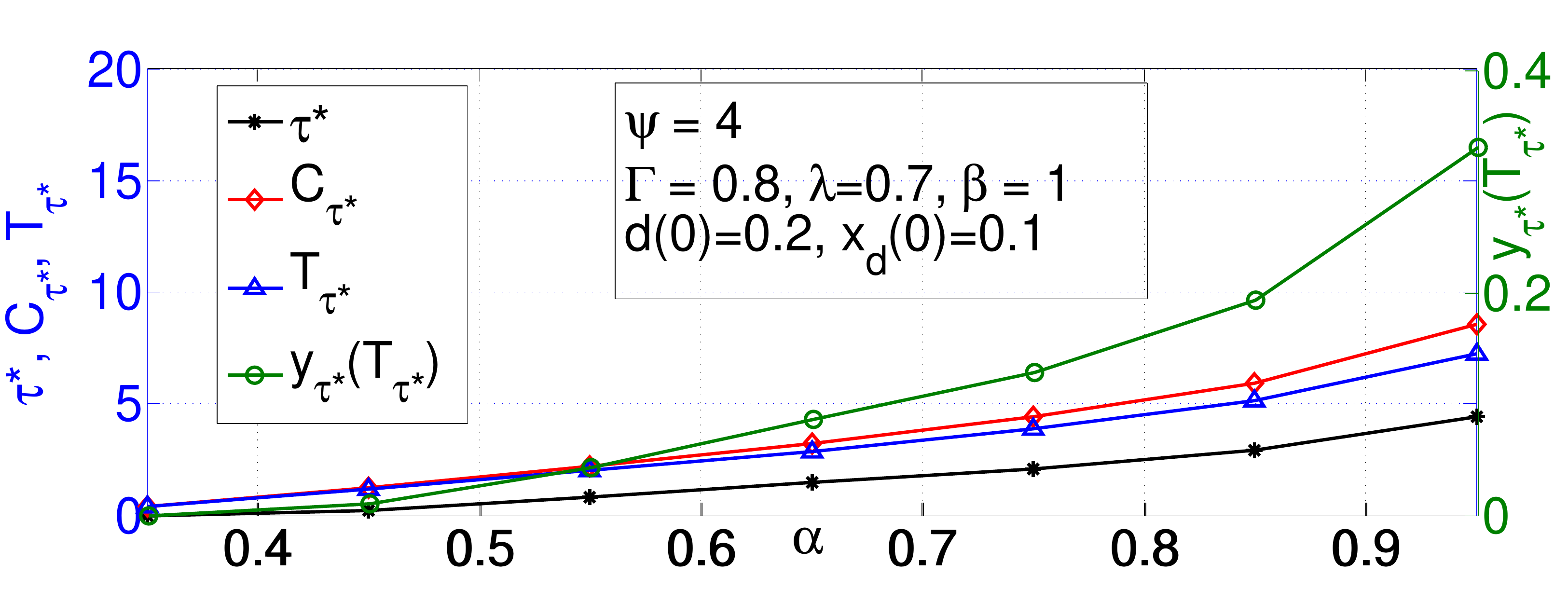}
 
 \caption{Decentralized Influence Spread: Effect of $\alpha$ on the optimal cost and optimal $\tau$}
\label{fig:alpha_sweep}
\end{figure}

\begin{figure}[t]
 \includegraphics[width = 8cm, height = 4cm]{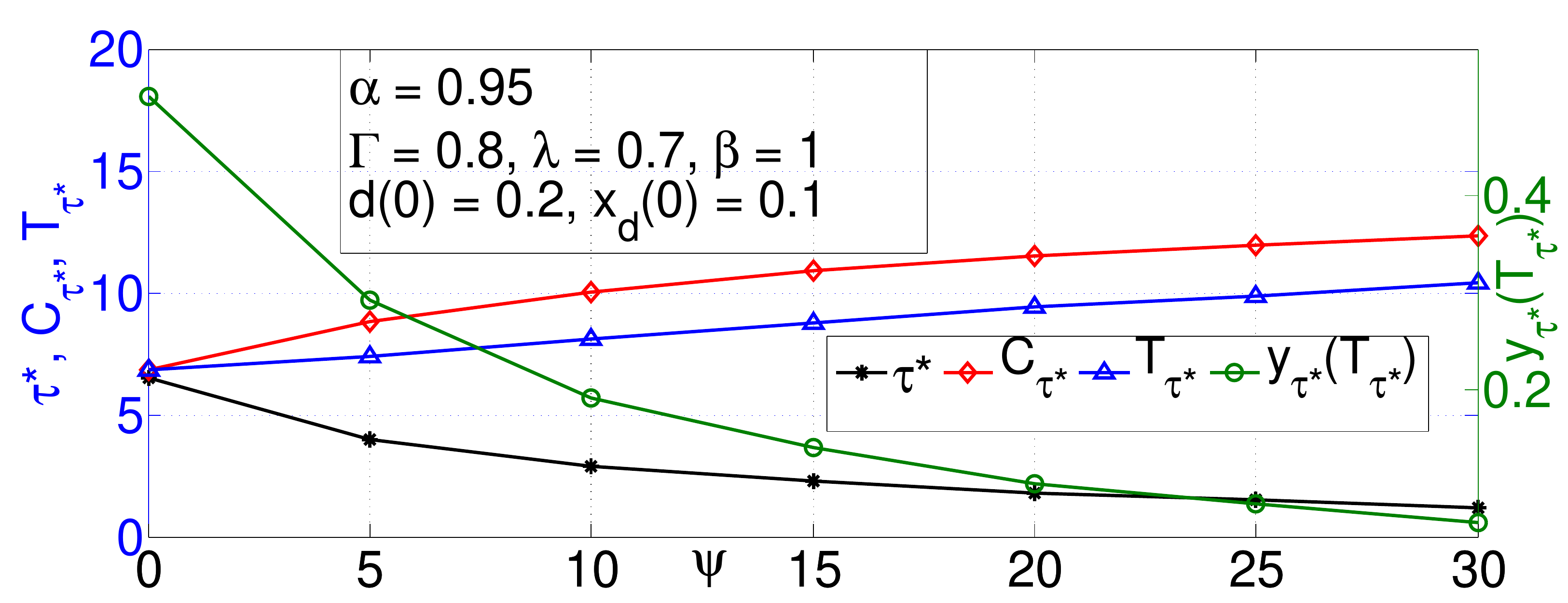}
 
 \caption{Decentralized Influence Spread: Effect of $\psi$ on the optimal cost and optimal $\tau$}
\label{fig:psi_sweep}
\end{figure}

\section{Conclusions}
In this paper we studied the joint evolution of popularity and spread of content in a mobile opportunistic setting. We studied the evolution of popularity under a threshold based model (HILT model) and derived its fluid limits. We also showed that under the exponential distribution for the threshold, the fluid limit of the HILT model is a special case of the SIR model. We then develop a fluid limit for the combined spread of interest and the content(the SIR-SI model). Finally we showed that a time-threshold policy is an optimal copying policy for the SIR-SI model, to optimize the combined cost of target time and the amount of wasted copies.  We also have reported several interesting insights into the evolution of popularity and the co-evolution of popularity and content delivery, which will help the content producers and distributors understand the interplay of various system parameters. 

\appendices 

\section{}
\label{app:HILT-markov}
\begin{proposition} \label{prop:hilt_dtmc}
  For the HILT model, $(B(k), D(k)), k \geq 0,$ is a discrete time Markov chain (DTMC).
\end{proposition}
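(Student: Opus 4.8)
The plan is to exhibit an explicit one-step transition kernel for $(B(k),D(k))$ that depends on the past only through the current state, which is precisely the Markov property. The organizing observation is that, once $\mathcal{A}(0)$ is fixed, the thresholds $\{\Theta_j\}_{j\notin\mathcal{A}(0)}$ of the relay nodes are i.i.d.\ with common cdf $F$, and the \emph{entire} trajectory is a deterministic function of these thresholds. Hence all randomness in the process lives in these i.i.d.\ thresholds, and the claim reduces to a statement about how information about them accumulates as the process unfolds.

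First I would record the deterministic identities forced by the set definitions. From $\mathcal{A}(k)=\mathcal{A}(k-1)\cup\mathcal{D}(k)$ we get $A(k)=B(k)+D(k)$, and iterating the activation rule~(\ref{eqn:HILT-activation}) gives $\mathcal{A}(k)=\mathcal{A}(0)\cup\{j:\Theta_j\le\gamma_N B(k)\}$, since $|\mathcal{A}(k-1)|=B(k)$. Consequently a relay survives period $k$ iff $\Theta_j>\gamma_N B(k)$, so the number of relays left at the end of period $k$ is $N-B(k)-D(k)$, a function of the current state alone; moreover $\mathcal{D}(k+1)=\{j:\gamma_N B(k)<\Theta_j\le\gamma_N B(k+1)\}$ with $B(k+1)=B(k)+D(k)$. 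Thus $B(k+1)$ is a \emph{deterministic} function of $(B(k),D(k))$, and only $D(k+1)$ carries fresh randomness.

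The heart of the argument is to identify the conditional law of $D(k+1)$ given $\mathcal{F}_k=\sigma\big((B(i),D(i)):0\le i\le k\big)$. I would fix a realizable trajectory $\{B(i)=b_i,\,D(i)=d_i\}_{i\le k}$ and note that, on this event, the boundary levels $\gamma_N b_1<\cdots<\gamma_N b_k$ are fixed numbers and the event records exactly the \emph{counts} $d_i$ of relay thresholds falling in the successive intervals $(\gamma_N b_{i-1},\gamma_N b_i]$. By the standard property of i.i.d.\ samples --- conditioned on the counts in a fixed collection of bins, the points inside any one bin are i.i.d.\ with $F$ restricted and renormalized to that bin, independently across bins --- the $N-b_k-d_k$ surviving relays have thresholds that are i.i.d.\ with cdf $F$ conditioned to $(\gamma_N b_k,\infty)$. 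Therefore $D(k+1)$, the number of these landing in $(\gamma_N b_k,\gamma_N b_{k+1}]$, is $\mathrm{Binomial}\big(N-b_k-d_k,\,p_k\big)$ with
\[ p_k=\frac{F(\gamma_N(b_k+d_k))-F(\gamma_N b_k)}{1-F(\gamma_N b_k)}, \]
which depends on the conditioning only through $(b_k,d_k)$. Since both $B(k+1)$ and the conditional law of $D(k+1)$ depend on the past solely through $(B(k),D(k))$, the process is a DTMC.

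The step I expect to be the main obstacle --- and would write most carefully --- is the conditional-independence claim, because the bin boundaries $\gamma_N b_i$ are not fixed \emph{a priori} but are themselves determined by the revealed counts. The resolution is that they are determined \emph{predictably}: $b_1$ is fixed by $\mathcal{A}(0)$, and each $b_{i+1}=b_i+d_i$ is measurable with respect to the counts revealed through stage $i$. Conditioning on the concrete trajectory therefore freezes every boundary to a deterministic value \emph{before} the last bin's contents are examined, which is exactly what legitimizes the i.i.d.\ bin-occupancy property and guarantees that the surviving thresholds retain their fresh conditional-$F$ law. I would make this rigorous either by conditioning sequentially, one interval at a time, or by conditioning on the full vector of bin counts with the boundaries held at their (now deterministic) values on the chosen trajectory.
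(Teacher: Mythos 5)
Your proposal is correct and follows essentially the same route as the paper: both arguments reduce the Markov property to showing that, conditioned on the observed trajectory, the surviving relays' thresholds are i.i.d.\ with the truncated law $\frac{F(\theta)-F(\gamma_N b)}{1-F(\gamma_N b)}$, whence $D(k+1)$ is $\mathrm{Binomial}\bigl(N-b-d,\,p\bigr)$ with $p=\frac{F(\gamma_N(b+d))-F(\gamma_N b)}{1-F(\gamma_N b)}$ depending only on $(b,d)$. The paper implements this by recursively applying a single-threshold conditioning lemma (its Lemma~\ref{lem:recursion_of_threshold_cdf}), which is exactly your ``sequential conditioning'' rigorization; your explicit discussion of the predictably determined bin boundaries makes precise a point the paper leaves implicit.
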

\proof Since, for each $k \geq 0$, $B(k+1) = B(k) + D(k)$, it suffices to show that, for $k \geq 0$, 
\begin{eqnarray*}
  P(D(k+1) = \ell | (B(0), D(0)), \cdots, ((B(k), D(k))= (b,d)))
\end{eqnarray*}
is a function only of $(b,d)$. We need the following simple lemma (in which the new notation is local to the lemma).

\begin{lemma} \label{lem:recursion_of_threshold_cdf}
  Given $(X_1, X_2, \cdots, X_n)$ i.i.d.\ random variables with common c.d.f.\ $G(x), x \geq a,$ where $a > 0$, and given $b > a$, define
  \begin{eqnarray*}
    \mathcal{Z} = \{i: 1 \leq i \leq n, X_i > b\}, \ \mathrm{and} \ Z = |\mathcal{Z}|.
  \end{eqnarray*}
  Index the indices in $\mathcal{Z}$ as $(i_1, i_2, \cdots, i_Z)$, and  let $Y_1 = X_{i_1}, Y_2 = X_{i_2}, \cdots, Y_Z = X_{i_Z}$ . Then
  \begin{eqnarray*}
    P(Y_i \leq y_i, 1 \leq i \leq Z | Z = m) = \Pi_{i=1}^m \frac{G(y_i) - G(b)}{1 - G(b)}
  \end{eqnarray*}
  for $y_i \geq b, 1 \leq i \leq m.$
\end{lemma}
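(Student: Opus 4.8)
The plan is to compute the conditional probability directly from its definition, $P(\cdot \mid Z = m) = P(\cdot, Z = m)/P(Z = m)$, by decomposing the event $\{Z = m\}$ according to \emph{which} subset of the $n$ indices supplies the exceedances, and then exploiting the i.i.d.\ structure of the $X_i$ together with the symmetry of the construction.

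First I would fix a subset $S = \{j_1 < j_2 < \cdots < j_m\} \subseteq \{1, \ldots, n\}$ of size $m$ and consider the event $\{\mathcal{Z} = S\}$, on which the ordered exceedances are exactly $Y_i = X_{j_i}$. The joint event $\{Y_i \leq y_i,\ 1 \leq i \leq m\} \cap \{\mathcal{Z} = S\}$ then coincides with $\{b < X_{j_i} \leq y_i,\ 1 \leq i \leq m\} \cap \{X_j \leq b,\ j \notin S\}$, and by independence of the $X_i$ its probability factors as $\prod_{i=1}^m \big(G(y_i) - G(b)\big) \cdot G(b)^{n-m}$. The key observation is that this expression does \emph{not} depend on the particular subset $S$: since $Y_i$ is defined as the exceedance occurring at the $i$-th smallest index of $\mathcal{Z}$, the constraint $y_i$ always attaches to whichever variable sits in the $i$-th slot, so each of the $\binom{n}{m}$ admissible subsets contributes the identical factor.

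Next I would sum over all such subsets to obtain $P(Y_i \leq y_i,\ 1 \leq i \leq m,\ Z = m) = \binom{n}{m} \prod_{i=1}^m \big(G(y_i) - G(b)\big) \cdot G(b)^{n-m}$, and then specialize by letting each $y_i \to \infty$ (so that $G(y_i) \to 1$) to read off the normalizing constant $P(Z = m) = \binom{n}{m} (1 - G(b))^m\, G(b)^{n-m}$, which is just the $\mathrm{Binomial}(n, 1 - G(b))$ mass function, as it must be. Dividing the two expressions cancels $\binom{n}{m}$ and $G(b)^{n-m}$ and leaves precisely $\prod_{i=1}^m \frac{G(y_i) - G(b)}{1 - G(b)}$, which is the claim.

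I do not anticipate a serious obstacle; the only point needing care is the bookkeeping that justifies the subset-independence of the factor $\prod_{i=1}^m \big(G(y_i) - G(b)\big)$ — in effect the exchangeability of the i.i.d.\ family, which guarantees that relabeling the exceedances by sorted index does not alter their joint law. I would also remark that the argument uses only the elementary factorizations $P(b < X \leq y) = G(y) - G(b)$ and $P(X \leq b) = G(b)$, so it goes through verbatim whether or not $G$ is continuous; this is exactly what makes the truncated conditional law $\frac{G(y) - G(b)}{1 - G(b)}$ appear, as needed in the Markov argument of Proposition~\ref{prop:hilt_dtmc}.
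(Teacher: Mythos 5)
Your proposal is correct and follows essentially the same route as the paper's proof: the paper simply asserts the joint probability $P(Z=m,\, Y_1 \leq y_1, \ldots, Y_m \leq y_m) = \binom{n}{m} (G(b))^{n-m} \prod_{i=1}^m (G(y_i)-G(b))$ and the binomial mass $P(Z=m)$, then divides, while you supply the underlying subset-decomposition and exchangeability bookkeeping that the paper compresses into ``it is easily seen that.'' Your additional remark that the argument needs no continuity of $G$ is a correct and harmless refinement, not a different method.
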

\proof (of Lemma~\ref{lem:recursion_of_threshold_cdf}) It is easily seen that, for $y_i \geq b, 1 \leq i \leq m,$
\begin{eqnarray*}
\lefteqn{P(Z=m, Y_1 \leq y_1, \cdots, Y_m \leq y_m)} \\
&=& {n \choose m} \  (G(b))^{n-m} \  \Pi_{i=1}^m (G(y_i) - G(b))\\
\end{eqnarray*}
and
\begin{eqnarray*}
  P(Z=m) = {n \choose m} \  (G(b))^{n-m} \  (1 - G(b))^m
\end{eqnarray*}
from which the desired result immediately follows. $\blacksquare$

Returning to the proof of Proposition~\ref{prop:hilt_dtmc}, given $(B(0), D(0)), \cdots, ((B(k), D(k))= (b,d))$, via a recursive application of Lemma~\ref{lem:recursion_of_threshold_cdf}
(starting from the initial i.i.d.\ thresholds $\Theta_i, 1 \leq i \leq N,$ with common c.d.f.\ $F(\theta)$), we conclude that the thresholds of the $m := N - (b+d)$ users in $\mathcal{N} \ \backslash \
(\mathcal{B}(k) \cup \mathcal{D}(k))$ are i.i.d.\ with common c.d.f.\ $\frac{F(\theta) - F(\gamma j)}{1 - F(\gamma j)}$, over the range $\theta \geq \gamma j$. At the end of period $k$, the newly influenced
users in $\mathcal{D}(k)$ will serve as additional influence on the as yet uninfluenced users. Defining, 
\[ p_\gamma(b,d) = \left(\frac{F(\gamma(b+d)) - F(\gamma b)}{1 -  F(\gamma b)}\right)\]

Of the users in $\mathcal{N} \ \backslash \ (\mathcal{B}(k) \cup \mathcal{D}(k))$, $\ell$ will become interested (i.e., $D(k+1) = \ell$) with probability 
\begin{eqnarray*}
  {N - (b+d) \choose \ell} 
  p_\gamma(b,d)^\ell
  (1 - p_\gamma(b,d))^{(N - (b+d)) - \ell}
\end{eqnarray*}
which depends on the ``history'' only via $(b,d)$, and thereby establishes the desired result.  $\blacksquare$

\section{Fluid Limit for the HILT Model}
\label{app:hilt-fluidlimit}
Recalling from Section~\ref{sec:HILT-model}, the evolution of $(\Tilde{B}^{N}(t), \Tilde{D}^{N}(t))$ can be written in terms of its mean ``drift'' at any $t$ as follows.
\[ \Tilde{B}^{N}(t+1) = \Tilde{B}^{N}(t) + \frac{\Tilde{D}^{N}(t)}{N} + \Tilde{Z}^{N}_b(t+1) \]
\begin{eqnarray*}
\lefteqn{\Tilde{D}^{N}(t+1) = \frac{N-1}{N} \Tilde{D}^{N}(t)}\\
\hspace{-2cm} &+& \mathbb{E} \bigg[ \frac{F(\gamma_N N (\Tilde{B}^{N}(t) + \Tilde{C}^{N}(t)))- F(\gamma_N N (\Tilde{B}^{N}(t))}{1-F(\gamma_N N (\Tilde{B}^{N}(t))} \bigg]\times\\
& & (1-\Tilde{B}^{N}(t) -\Tilde{D}^{N}(t)) + \Tilde{Z}^{N}_d(t+1)
\end{eqnarray*}
where we have defined $\Tilde{Z}^{N}_b(t)$ and $\Tilde{Z}^{N}_d(t)$ in an analogous manner. The mean drift rate function, $g^{N}(\Tilde{B}^{N}(k),\Tilde{D}^{N}(k))$ per unit update step size $\frac{1}{N}$ becomes,
\[\frac{g^{N}(\Tilde{B}^{N}(t),\Tilde{D}^{N}(t))}{\frac{1}{N}} = \bigg( g^{N}_1(.), g^{N}_2(.) \bigg) \]
where 
\[g^{N}_1(.)= \Tilde{D}^{N}(t) \]
\begin{eqnarray*}
\lefteqn{g^{N}_2(.)} \\
&=&  N \mathbb{E} \bigg[ \frac{F(\gamma_N N (\Tilde{B}^{N}(t) + \Tilde{C}^{N}(t)))- F(\gamma_N N (\Tilde{B}^{N}(t))}{1-F(\gamma_N N (\Tilde{B}^{N}(t))} \bigg] \times \\ 
& &\bigg( 1-\Tilde{B}^{N}(t)- \Tilde{D}^{N}(t) \bigg) - \Tilde{D}^{N}(t) 
\end{eqnarray*}
Recall $\gamma_N \times (N-1) = \Gamma$ and $f(x)$ the density function of the threshold distribution $F$. We can then show that (see Appendix~\ref{app:limit_expectation}), 
\begin{eqnarray*}
\lim_{N \rightarrow \infty} N \mathbb{E} \bigg[ \frac{F(\gamma_N N (\Tilde{B}^{N}(t) + \Tilde{C}^{N}(t)))- F(\gamma_N N (\Tilde{B}^{N}(t))}{1-F(\gamma_N N (\Tilde{B}^{N}(t))} \bigg]\\
= \frac{f(\Gamma b)\Gamma d}{1- F(\Gamma b)} \\
\end{eqnarray*}
Letting $g_1(b,d)= d$ and $g_2(b,d) =  \frac{f(\Gamma b)\Gamma d}{1- F(\Gamma b)} (1-b-d) - d$ and define,
\[ g(b,d) := \bigg( g_1(b,d),g_2(b,d) \bigg) \]
We can then use these limiting drift functions to obtain the fluid limits corresponding to the HILT model.

\label{app:limit_expectation}
Consider 
\[ lim_{N \rightarrow \infty} N \mathbb{E} \bigg[ \frac{F(\gamma N (\Tilde{B}^{N}(t) + \Tilde{C}^{N}(t)))- F(\gamma N (\Tilde{B}^{N}(t))}{1-F(\gamma N (\Tilde{B}^{N}(t))} \bigg]\] 
where the expectation is with respect to $\Tilde{C}^N(t)$ given $(\Tilde{B}^{N}(t), \Tilde{D}^{N}(t))$. Applying Taylor's expansion to $F(\gamma N (\Tilde{B}^{N}(t) + \Tilde{C}^{N}(t)))$ around $\Tilde{B}^{N}(t)$ we get,

\begin{eqnarray*}
lim_{N \rightarrow \infty}N \mathbb{E} \bigg[ \frac{F(\gamma N (\Tilde{B}^{N}(t) + \Tilde{C}^{N}(t)))- F(\gamma N (\Tilde{B}^{N}(t))}{1-F(\gamma N (\Tilde{B}^{N}(t))} \bigg]\\
= lim_{N \rightarrow \infty} N \mathbb{E} \bigg[ \frac{ \gamma N \Tilde{C}^N(t) f(\gamma N \Tilde{B}^N(t))}{1-F(\gamma N (\Tilde{B}^{N}(t))} \bigg]\\ + lim_{N \rightarrow \infty} N \mathbb{E} \bigg[ \frac{ \gamma N \frac{\Tilde{C}^N(t)^2}{2} \dot{f} (\zeta)}{1-F(\gamma N (\Tilde{B}^{N}(t))} \bigg] \\
\end{eqnarray*}

In the second term since $\Tilde{C}^N(t) = \frac{C^N(t)}{N}$ where $C^N(t) \sim \mathtt{Bin} (D^N(t), \frac{1}{N})$, we have

\begin{eqnarray*}
\mathbb{E}(\Tilde{C}^N(t)^2) &=& \frac{1}{N^2} \mathbb{E} (C^N(t)^2)\\
&=& \frac{1}{N^2} \bigg[ D^N(t) \frac{1}{N} (1-\frac{1}{N}) + (\frac{D^N(t)}{N})^2 \bigg]\\
&=& \frac{1}{N^2} \bigg[ \Tilde{D}^N(t) (1-\frac{1}{N}) + (\Tilde{D}^N(t))^2 \bigg]\\
\end{eqnarray*}

Also $\dot{f} (\zeta)$ is bounded, since we require the drift functions to satisfy the Lipschitz condition. Hence the second term vanishes as $N \to \infty$. In the first term, noting that $\gamma N \to \Gamma$ and $\mathbb{E} (\Tilde{C}^N(t)) = \frac{\Tilde{D}^N(t)}{N}$, given $\Tilde{B}^N(t)=b$, $\Tilde{D}^N(t) =d$, the limit becomes $\frac{\Gamma d f(\Gamma b)}{1-F(\Gamma b)}$.

\label{app:kurtz-applied-hilt}
Kurtz's theorem \cite{kurtz70ode-markov-jump-processes} provides us a way by which we can, subject to certain conditions, approximate the evolution of a pure jump Markov process by the solution of a derived ODE. Along the lines of Theorem 2.8 in \cite{darling02limits-purejump-markov},to prove the theorem \ref{thm:kurtz-theorem-hilt} we need to verify the following four equivalent conditions for the HILT model.
Let $g_1(b,d)= d$ and $g_2(b,d) =  \frac{\mathbf{f}(\Gamma b)\Gamma d}{1- F(\Gamma b)} (1-b-d) - d$ and define,
\[ g(b,d) := \bigg( g_1(b,d), g_2(b,d) \bigg) \]

We require
\begin{itemize}
 \item Lipschitz property of $g(b,d)$
 \item Uniform convergence of $\frac{g^{(N)}(b,d)}{\frac{1}{N}}$ to $g(b,d)$
 \item Bounded Noise variance 
 \item Convergence of initial conditions
\end{itemize}

\begin{itemize}
\item[(i)]\textbf{Lipschitz property}
We have,
\[ \frac{\partial g_1}{\partial b} = 0 ;  \frac{\partial f_1}{\partial d} =  1\]

\[ \frac{\partial g_2}{\partial b} = \Gamma^2 d \dot{h}_F (\Gamma b) (1-b-d) - \Gamma d h_F(\Gamma b)\]
\[ \frac{\partial g_2}{\partial d} = h_f(\Gamma b) \Gamma (1-b-2d) -1 \]
Under the assumption that $\dot{f}(\cdot)$ is bounded, we see that each of the terms above is bounded when $(b,d) \in [0,1]\times[0,1-b]$. Thus the norm of Jacobian $|| Dg(b,d) ||$ is uniformly bounded, and it follows that $g(b,d) = (g_1(b,d),g_2(b,d))$ is Lipschitz.

\item[(ii)]\textbf{Uniform Convergence}
Let $g^N(b,d)$ be as defined in Appendix~\ref{app:hilt-fluidlimit}. By definition, $\lim_{N \rightarrow \infty} N \gamma =\Gamma$ and by the steps in Appendix~\ref{app:limit_expectation}, the uniform convergence of $\frac{g^{N}(b,d)}{1/N}$ to $g(b,d)$ in the domain $(b,d) \in [0,\frac{1}{N},\cdots ,N]\times[0,\frac{1}{N}, \cdots,b]$ is proven. 

\item[(iii)]\textbf{Bounded Noise variance}
We will write the noise variance terms for the scaled processes $B^N(k)$ and $D^N(k)$, and from them derive the noise variance terms for $\Tilde{B}^N(k)$ and $\Tilde{D}^N(k)$ by scaling them by $\frac{1}{N^2}$. 
Let $\mathcal{Z}^{N}_b(k)$ and $\mathcal{Z}^{N}_d(k)$ be the zero mean random variables representing the noise in the drift terms of $B^N(k)$ and $D^N(k)$ respectively. We can then write,
\[ E[ | \mathcal{Z}^{N}_b(k)|^2 | \mathcal{F}_k] = (1-\frac{1}{N})\frac{1}{N} D^N(k) \]
\begin{eqnarray*}
\lefteqn{ E[ | \mathcal{Z}^{N}_d(k)|^2 | \mathcal{F}_k]}\\
 &=& (1-\frac{1}{N})\frac{1}{N} D^N(k)\\
 &+& \sum_{z=0}^{z=D^N(k)} \bigg[ g(z) (1-g(z)) (N- B^N(k) - D^N(k))\\
 & & \times {D^N(k) \choose z} \left(\frac{1}{N}\right)^z \left(1-\frac{1}{N}\right)^{(D^N(k)-z)} \bigg]\\
\end{eqnarray*}
where, 
\[ g(z) = \frac{F(\gamma N (B^N(k)+z)) - F(\gamma N B^N(k))}{1 - F(\gamma N B^N(k))} \]
Note that both these terms can be upper bounded by constants say $c_b$ and $c_d$, observing that  $(b,d) \in [0,1]\times[0,1-b]$ and hence the noise conditions for $\Tilde{B}^N(k)$ and $\Tilde{D}^N(k)$ are satisfied.

\item[(iv)]\textbf{Convergence of initial conditions}
By choice, we have $\Tilde{B}^{N}(0) = b(0)$ and $\Tilde{D}^{N}(0)=d(0)$. 
\end{itemize}
The above four conditions validate the application of Kurtz's theorem and thus proves Theorem \ref{thm:kurtz-theorem-hilt}.
\begin{flushright}
$\blacksquare$
\end{flushright}

\section{Verification of Kurtz's theorem conditions for SIR-SI model}
\label{app:kurtz-applied-sirsi}
The mean drift rates for the SIR-SI model can be given by,
\begin{eqnarray*}
F^N_{b}(k) &=& \beta_N \Tilde{D}(k)\\
F^N_{d}(k) &=& - \beta_N \Tilde{D}(k) + N \lambda_N \Gamma \Tilde{D}(k) \Tilde{S}(k) \\
F^N_{x_b}(k) &=& \beta_N \Tilde{X}_d(k) + \\
&& N \lambda_N (\Tilde{B}(k) - \Tilde{X}_b(k))(\Tilde{X}(k) + \Tilde{Y}(k)) \\
F^N_{x_d}(k) &=& \Gamma N \lambda_N (\Tilde{D}(k) - \Tilde{X}_d(k))\Tilde{Y}(k) - \\
&&  \beta_N \Tilde{X}_d(k)+ \Gamma N \lambda_N \Tilde{X}_d(k) \Tilde{S}(k) + \\ \nonumber
&& N \lambda_N (\Tilde{D}(k) - \Tilde{X}_d(k))(\Tilde{X}(k)+\Tilde{Y}(k))\\
F^N_{y}(k) &=& -\Gamma N \lambda_N \Tilde{D}(k) \Tilde{Y}(k) + \\
&& \hspace{-1cm} N \lambda_N \sigma (\Tilde{S}(k)-\Tilde{Y}(k))(\Tilde{X}_b(k) +\Tilde{Y}(k) +(1-\Gamma)\Tilde{X}_d(k)) 
\end{eqnarray*}
and the corresponding o.d.e. equations, 

\begin{eqnarray}
\dot{b} &=& \beta d\\
\dot{d} &=& - \beta d + \lambda \Gamma d s \\
\dot{x_b} &=& \beta x_d + \lambda (b - x_b)(x+y) \\
\dot{x_d} &=& \Gamma \lambda (d - x_d)y + \Gamma \lambda x_d s + \\ \nonumber
&& \hspace{2cm} \lambda (d-x_d)(x+y) - \beta x_d \\
\dot{y} &=& -\Gamma \lambda d y + \lambda \sigma (s-y)(x_b +y +(1-\Gamma)x_d) 
\end{eqnarray}

Let $f_b, f_d, f_{x_b}, f_{x_d}, f_y$ denote the right hand sides of the above fluid limit equations and let $f := (f_b, f_d, f_{x_b}, f_{x_d}, f_y)$. Also let $F^N(k) := (F^N_b(k), F^N_d(k), F^N_{x_b}(k), F^N_{x_d}(k), F^N_y(k))$.

\begin{itemize}
\item[(i)]\textbf{Lipschitz property}
We see that each of the partial derivatives $\frac{\partial f_u}{\partial v}$ where $u,v \in (b,d,x_b,x_d,y)$ is bounded when $(b,d,x_b,x_d,y) \in [0,1]\times[0,1-b]\times[0,b]\times[0,d]\times[0,1-b-d]$. Thus the norm of Jacobian is uniformly bounded, and it follows that $f(z)$ is Lipschitz.

\item[(ii)]\textbf{Uniform Convergence}
Taking $N \lambda_N \rightarrow \lambda$ and $\beta_N \rightarrow \beta$ we see that the uniform convergence of $F^N(z)$ to $f(z)$ is straightforward. 

\item[(iii)]\textbf{Bounded Noise variance}
Since in the co-evolution system the maximum jump rate out of a state is bounded $\lambda_N N(N-1) + \beta_N N$, and since $N \lambda_N \rightarrow \lambda$ and $\beta_N \rightarrow \beta$, the jump rate is $O(N)$; the increments for the scaled process $\Tilde{Z}^N(.)$ are $O(N^{-1})$. This is referred to as ``hydrodynamic scaling'' \cite{darling02limits-purejump-markov}. This ensures that the noise convergence to zero is ensured.

\item[(iv)]\textbf{Convergence of initial conditions}
The initial conditions are chosen such that $(\Tilde{B}^N(0), \Tilde{D}^N(0), \Tilde{X}_b^N(0), \Tilde{X}_d^N(0), \Tilde{Y}^N(0)) = (b(0), d(0), x_b(0), x_d(0), y(0))$
\end{itemize}

\section{}
\label{app:coop-system}
Let $\mathbf{w}(t)$ be the solution of the o.d.e $\dot{\mathbf{w}}=f(\mathbf{w};z)$ with piecewise Lipschitz continuous control $z(t)$, where $f$ is continuously differentiable and Lipschitz in $\mathbf{w}$ and $z$. Let $\mathbf{w}^{(1)}(t)$ and  $\mathbf{w}^{(2)}(t)$ be the trajectories corresponding to two controls $z^{(1)}(t)$ and $z^{(2)}(t)$ respectively, i.e.,
\begin{eqnarray}
\label{eqn:w_1} 
\dot{\mathbf{w}}^{(1)}&=&f(\mathbf{w}^{(1)};z^{(1)})\\
\label{eqn:w_2}
\dot{\mathbf{w}}^{(2)}&=&f(\mathbf{w}^{(2)};z^{(2)})
\end{eqnarray}

Motivated by the Kamke condition (see \cite{smith08monotone}) we define Kamke-dominance between two controls $z^{(1)}(t)$ and $z^{(2)}(t)$ as follows.

\begin{definition}
\label{defn:control-domination}
We say \emph{ $z^{(1)}(t)$ Kamke-dominates $z^{(2)}(t)$ in the system $f(\mathbf{w};z)$} if for each $t$ where $\mathbf{w}^{(1)}(t_0) \geq \mathbf{w}^{(2)}(t_0)$ and $\exists i$ with $w^{(1)}_{i}(t_0) = w^{(2)}_{i}(t_0)$ then 
\[ f_i(\mathbf{w}^{(1)}(t_0);z^{(1)}(t_0)) \geq f_i(\mathbf{w}^{(2)}(t_0);z^{(2)}(t_0))\]
\end{definition}

\begin{lemma}
\label{lemma:order-preserving}
Let $\mathbf{w}^{(1)}(t)$ and  $\mathbf{w}^{(2)}(t)$ be as in Equations~(\ref{eqn:w_1}) and (\ref{eqn:w_2}), and $z^{(1)}(t)$ Kamke-dominate $z^{(2)}(t)$ in the system $f(\mathbf{w};z)$. If $\mathbf{w}^{(1)}(0) \geq \mathbf{w}^{(2)}(0)$ then $\forall t$, $\mathbf{w}^{(1)}(t) \geq \mathbf{w}^{(2)}(t)$.
\end{lemma}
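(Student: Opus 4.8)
The plan is to prove Lemma~\ref{lemma:order-preserving} by the classical Kamke--M\"uller comparison technique for quasimonotone (off-diagonal monotone) systems: reduce the desired weak inequality to a \emph{strict} one by an $\epsilon$-perturbation of the upper trajectory, establish the strict version by a first-contact-time contradiction, and then recover the weak inequality by letting $\epsilon \to 0$ using continuous dependence on parameters. The Lipschitz hypothesis on $f$ is what makes every step of this scheme go through.

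First I would set up the perturbed comparison trajectory. For $\epsilon > 0$, let $\mathbf{v}_\epsilon(t)$ solve $\dot{\mathbf{v}}_\epsilon = f(\mathbf{v}_\epsilon; z^{(1)}) + \epsilon\, \mathbf{e}$ with $\mathbf{v}_\epsilon(0) = \mathbf{w}^{(1)}(0) + \epsilon\, \mathbf{e}$, where $\mathbf{e} = (1,\dots,1)$. Since $f$ is Lipschitz this solution exists, is unique, and depends continuously on $\epsilon$; moreover $\mathbf{v}_\epsilon(0) > \mathbf{w}^{(2)}(0)$ holds strictly (componentwise) because $\mathbf{w}^{(1)}(0) \geq \mathbf{w}^{(2)}(0)$ by hypothesis. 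The claim I would establish is that $\mathbf{v}_\epsilon(t) > \mathbf{w}^{(2)}(t)$ strictly, componentwise, for every $t$ in a fixed interval $[0,T]$.

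The core step is the contradiction argument. Suppose the strict domination failed and let $t^\star = \inf\{ t : v_{\epsilon,i}(t) = w^{(2)}_i(t) \text{ for some } i \}$ be the first contact time. By continuity $\mathbf{v}_\epsilon(t^\star) \geq \mathbf{w}^{(2)}(t^\star)$ with equality in some coordinate $i$, so Definition~\ref{defn:control-domination} yields $f_i(\mathbf{v}_\epsilon(t^\star); z^{(1)}(t^\star)) \geq f_i(\mathbf{w}^{(2)}(t^\star); z^{(2)}(t^\star)) = \dot{w}^{(2)}_i(t^\star)$. Consequently $\frac{d}{dt}(v_{\epsilon,i} - w^{(2)}_i)\big|_{t^\star} \geq \epsilon > 0$, so the difference is strictly increasing through the value $0$ at $t^\star$ and was therefore negative just before $t^\star$, contradicting the definition of $t^\star$ as the \emph{first} contact. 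This proves the strict claim on $[0,T]$. Finally I would let $\epsilon \to 0$: continuous dependence (again from the Lipschitz hypothesis) gives $\mathbf{v}_\epsilon \to \mathbf{w}^{(1)}$ uniformly on $[0,T]$, so passing to the limit in $\mathbf{v}_\epsilon(t) > \mathbf{w}^{(2)}(t)$ yields $\mathbf{w}^{(1)}(t) \geq \mathbf{w}^{(2)}(t)$; as $T$ is arbitrary this holds for all $t$.

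The main obstacle is precisely the contact case, where a coordinate of $\mathbf{w}^{(1)} - \mathbf{w}^{(2)}$ touches $0$: the non-strict Kamke-dominance inequality by itself does \emph{not} forbid that coordinate from dipping below zero afterwards (a difference behaving like $-t^2$ at the contact has nonnegative derivative there yet immediately goes negative), which is exactly why the strict $\epsilon$-perturbation is indispensable rather than a technical convenience. A secondary subtlety I would flag is that Definition~\ref{defn:control-domination} is phrased along the nominal trajectories $\mathbf{w}^{(1)},\mathbf{w}^{(2)}$, whereas the contradiction step applies it at the perturbed state $\mathbf{v}_\epsilon(t^\star)$; I would therefore either read the definition as the pointwise condition on $f$ over the ordered region $\{\mathbf{a} \geq \mathbf{b}\}$ (its evident intent), or justify its use at $\mathbf{v}_\epsilon$ through the convergence $\mathbf{v}_\epsilon \to \mathbf{w}^{(1)}$ together with continuity of $f_i$.
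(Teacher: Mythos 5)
Your proposal is correct and follows essentially the same route as the paper's proof (itself adapted from Proposition 1.1 of Smith's monotone-systems monograph): perturb both the drift and the initial condition of the upper trajectory by a strict margin (you use $\epsilon\,\mathbf{e}$, the paper uses $\frac{1}{m}\mathbf{e}$), rule out a first contact time via the Kamke-dominance inequality plus the strict perturbation, and recover the weak inequality by continuous dependence as the perturbation vanishes. The subtlety you flag at the end is also resolved the same way in the paper: Definition~\ref{defn:control-domination} is implicitly read as a pointwise condition on $f$ over ordered pairs of states, since the paper likewise applies the dominance inequality at the perturbed state $\phi^{(1),m}_{t_0}$ rather than on the nominal trajectory.
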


\begin{proof}
The proof of Proposition 1.1 in \cite{smith08monotone} can be adopted directly in writing this proof. Since $f$ is continuously differentiable and Lipschitz in $\mathbf{w}$ and $z$, by the Cauchy-Lipschitz condition, we have a unique solution given the initial conditions. Let $\phi^{(1)}_t (\mathbf{w}^{(1)}(0))$ denote the solution of (\ref{eqn:w_1}) with initial condition $\mathbf{w}^{(1)}(0)$, and $\phi^{(2)}_t (\mathbf{w}^{(2)}(0))$ denote the solution of (\ref{eqn:w_2}) with initial condition $\mathbf{w}^{(2)}(0)$. We wish to show that
$\phi^{(1)}_t (\mathbf{w}^{(1)}(0)) \geq \phi^{(2)}_t (\mathbf{w}^{(2)}(0))$.
Consider $m$ an integer, and let  $\phi^{(1),m}_t (\mathbf{w}^{(1)}(0))$  be the solution corresponding to
\[ \dot{\mathbf{w}}^{(1)}=f(\mathbf{w}^{(1)};z^{(1)}) + \big(\frac{1}{m}\big)\mathbf{e} \]
Then, by continuity of o.d.e. solutions with respect to drift and initial conditions \cite[Chap.1, Lemma 3.1]{hale80ode-book},\ $\phi^{(1),m}_s (\mathbf{w}^{(1)}(0) + \frac{\mathbf{e}}{m})$ defined on $0 \leq s \leq t$ for all large m, say $m>M$ and
\begin{equation}
\label{eqn:ode-continuity}
\phi^{(1),m}_s (\mathbf{w}^{(1)}(0) + \frac{\mathbf{e}}{m}) \rightarrow \phi^{(1)}_s (\mathbf{w}^{(1)}(0))  
\end{equation}
as $m \rightarrow \infty$, uniformly in $s \in [0,t]$. We claim that for $0 \leq s \leq t$, for all $m>M$,
\begin{eqnarray}
\label{eqn:strict_inequal_flows}
 \phi^{(1)m}_s (\mathbf{w}^{(1)}(0) + \frac{\mathbf{e}}{m}) >> \phi^{(2)}_s (\mathbf{w}^{(2)}(0))
\end{eqnarray}
where $x >> y$ implies $x_i > y_i$, $\forall i$.
\paragraph*{Proof of claim} Fix $m>M$. We know that $\mathbf{w}^{(1)}(0) + \frac{\mathbf{e}}{m} =\phi^{(1),m}_0 (\mathbf{w}^{(1)}(0) + \frac{\mathbf{e}}{m}) >> \mathbf{w}^{(2)}(0) = \phi^{(2)}_0 (\mathbf{w}^{(2)}(0))$. By continuity of $\phi^{(1)}$ and $\phi^{(2)}$, the claim (\ref{eqn:strict_inequal_flows}) holds for small $s$. If the claim (\ref{eqn:strict_inequal_flows}) were false, $\exists \ 0<t_0 \leq t$ such that (\ref{eqn:strict_inequal_flows}) holds for $0\leq s < t_0$, and an index $i$ such that, 
\[ \big(\phi^{(1),m}_{t_0} (\mathbf{w}^{(1)}(0) + \frac{\mathbf{e}}{m})\big)_i = \big(\phi^{(2)}_{t_0} (\mathbf{w}^{(2)}(0))\big)_i\]
and 
\begin{equation}
\label{eqn:contradiction-flow}
 \frac{\mathrm{d}}{\mathrm{d}s}\bigg|_{s=t_0} \big(\phi^{(1),m}_{s} (\mathbf{w}^{(1)}(0) + \frac{\mathbf{e}}{m})\big)_i \leq \frac{\mathrm{d}}{\mathrm{d}s}\bigg|_{s=t_0}  \big(\phi^{(2)}_{s} (\mathbf{w}^{(2)}(0))\big)_i 
\end{equation}
But since $\forall j\neq i$
\[ \big(\phi^{(1)m}_{t_0} (\mathbf{w}^{(1)}(0) + \frac{\mathbf{e}}{m})\big)_j \geq \big(\phi^{(2)}_{t_0} (\mathbf{w}^{(2)}(0))\big)_j\]
and $z^{(1)}(t)$ Kamke-dominates $z^{(2)}(t)$ in the system $f(\mathbf{w},z)$, we have
\begin{eqnarray*}
 f_i(\phi^{(1),m}_{t_0} (\mathbf{w}^{(1)}(0) + \frac{\mathbf{e}}{m});z^{(1)}(t_0))+\frac{1}{m}\\
 >  f_i(\phi^{(1),m}_{t_0} (\mathbf{w}^{(1)}(0) + \frac{\mathbf{e}}{m});z^{(1)}(t_0)) \\
\geq f_i(\phi^{(2)}_{t_0} (\mathbf{w}^{(2)}(0));z^{(2)}(t_0))
\end{eqnarray*}
 which implies
\begin{equation*}
 \frac{\mathrm{d}}{\mathrm{d}s}\bigg|_{s=t_0} \big(\phi^{(1),m}_{s} (\mathbf{w}^{(1)}(0) + \frac{\mathbf{e}}{m})\big)_i > \frac{\mathrm{d}}{\mathrm{d}s}\bigg|_{s=t_0}  \big(\phi^{(2)}_{s} (\mathbf{w}^{(2)}(0))\big)_i 
\end{equation*}
This contradicts (\ref{eqn:contradiction-flow}) and hence proves the claim (\ref{eqn:strict_inequal_flows}). Applying (\ref{eqn:ode-continuity}) to this claim completes the proof of the lemma. 
\end{proof} 

Consider the system of equations representing the evolution of $(x_b(t), x_d(t), y(t))$ in the SIR-SI process. 
\begin{eqnarray}
\label{eqn:x_b}
\dot{x_b} &=& \beta x_d + \lambda (b - x_b)(x+y) \\
\label{eqn:x_d}
\dot{x_d} &=& \Gamma \lambda (d - x_d)y + \Gamma \lambda x_d s + \\ \nonumber
&& \hspace{2cm} \lambda (d-x_d)(x+y) - \beta x_d \\
\label{eqn:y}
\dot{y} &=& -\Gamma \lambda d y + \lambda \sigma (s-y)(x_b +y +(1-\Gamma)x_d) 
\end{eqnarray}

Let $w^{(1)} = (x_b^{(1)}, x_d^{(1)}, y^{(1)})$ and  $w^{(2)} = (x_b^{(2)}, x_d^{(2)}, y^{(2)})$ be the solutions respectively for the controls $\sigma^{(1)}(t)$ and $\sigma^{(2)}(t)$ for the above system with $\sigma^{(1)}(t) \leq \sigma^{(2)}(t)$. Define 
\[ (u_b, u_d, v) := (x_b^{(1)}, x_d^{(1)}, y^{(1)}) - (x_b^{(2)}, x_d^{(2)}, y^{(2)})\]
and $\Delta \sigma(t) = \sigma^{(1)}(t) - \sigma^{(2)}(t)$. Then we can write, 
\begin{eqnarray}
\label{eqn:u_b}
\dot{u}_b|_{u_b=0} &=& \beta u_d + \lambda (b-x^{(1)}_b)(u_d + v) \\
\label{eqn:u_d}
\dot{u}_d|_{u_d=0} &=& \Gamma \lambda (d-x_d^{(1)})v + \lambda (d-x_d^{(1)})(u_b + v) \\
\label{eqn:v}
\dot{v}|_{v=0} &=& \lambda (s-y^{(1)}) [ \Delta \sigma y^{(1)} + \sigma^{(1)} u_b + \Delta \sigma x_b^{(2)}  \\\nonumber
& & \hspace{2cm}+(1-\Gamma) (\sigma^{(1)} u_d + \Delta \sigma x_d^{(2)})]
\end{eqnarray}

\begin{lemma} 
\label{lemma:domination-SIRSI}
Control Domination
\begin{enumerate}[(a)]
 \item If $\forall t$, $\sigma^{(1)}(t) \geq \sigma^{(2)}(t)$, then $\sigma^{(1)}(t)$ Kamke-dominates $\sigma^{(2)}(t)$ in the system given by equations~(\ref{eqn:x_b})-(\ref{eqn:y}).
 \item If $\forall t$, $y^{(1)}(t) \geq y^{(2)}(t)$, then $y^{(1)}(t)$ Kamke-dominates $y^{(2)}(t)$ in the system given by equations~(\ref{eqn:x_b})-(\ref{eqn:x_d}). 
\end{enumerate}
\end{lemma}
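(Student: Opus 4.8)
The plan is to reduce the Kamke-domination condition of Definition~\ref{defn:control-domination} to a handful of sign checks on the componentwise difference of the two drift fields, and then to read these off directly from the boundary difference equations~(\ref{eqn:u_b})--(\ref{eqn:v}) that have already been assembled. Observe first that for the SIR-SI system the interest evolution $(b,d)$, and hence the relay fraction $s=1-b-d$, is autonomous: it obeys $\dot b=\beta d$ and $\dot d=-\beta d+\lambda\Gamma d s$ and does not depend on $\sigma$ or on $y$. Consequently $b(\cdot)$, $d(\cdot)$ and $s(\cdot)$ are \emph{identical} along both trajectories, which is exactly why the differences $(u_b,u_d,v)$ satisfy the clean expressions in~(\ref{eqn:u_b})--(\ref{eqn:v}) with no superscripts on $b,d,s$. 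The hypothesis $\mathbf{w}^{(1)}(t_0)\ge\mathbf{w}^{(2)}(t_0)$ in Definition~\ref{defn:control-domination} translates into $u_b,u_d,v\ge0$ at $t_0$, and equality of the $i$-th components means the corresponding difference vanishes; the conclusion to be verified is then precisely $\dot u_i|_{u_i=0}\ge0$ for that index.

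Before the sign checks I would record the invariants that make them work, namely $0\le x_b\le b$, $0\le x_d\le d$ and $0\le y\le s$, together with the standing ranges $0\le\Gamma\le1$, $\lambda,\beta\ge0$ and $\sigma(\cdot)\in[0,1]$. These bounds are physically obvious ($x_b$ is the content-holding subset of the $b$ nodes, etc.), but I would justify them rigorously as forward-invariant sets of the o.d.e.\ by examining the drift on each face. For instance, at $x_d=d$ one computes $\frac{d}{dt}(d-x_d)=0$, so $d-x_d$ can never become negative; this then feeds into $\frac{d}{dt}(b-x_b)=\beta(d-x_d)\ge0$ at $x_b=b$, and a similar computation gives $\frac{d}{dt}(s-y)=0$ at $y=s$. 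This invariance argument, though elementary, is the part requiring the most care, and I would treat the coupling between these faces as the main obstacle.

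For part (a) I assume $\Delta\sigma(t)=\sigma^{(1)}(t)-\sigma^{(2)}(t)\ge0$ and check the three faces of the $(x_b,x_d,y)$ system. Along $u_b=0$, equation~(\ref{eqn:u_b}) gives $\dot u_b=\beta u_d+\lambda(b-x_b^{(1)})(u_d+v)$, which is nonnegative since $u_d,v\ge0$ and $b-x_b^{(1)}\ge0$. Along $u_d=0$, equation~(\ref{eqn:u_d}) is a nonnegative combination of $v$ and $u_b+v$ with coefficient $(d-x_d^{(1)})\ge0$. Along $v=0$, the prefactor $\lambda(s-y^{(1)})$ in~(\ref{eqn:v}) is nonnegative, and every summand inside the bracket is a product of nonnegative quantities---this is where $\Delta\sigma\ge0$ and $1-\Gamma\ge0$ enter. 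Hence each $\dot u_i|_{u_i=0}\ge0$, which is exactly Kamke-dominance of $\sigma^{(1)}$ over $\sigma^{(2)}$.

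For part (b) I reinterpret $y$ as the exogenous control driving the two-dimensional $(x_b,x_d)$ subsystem~(\ref{eqn:x_b})--(\ref{eqn:x_d}), so that $v=y^{(1)}-y^{(2)}$ is now the control difference, nonnegative by hypothesis. The key simplification is that neither~(\ref{eqn:u_b}) nor~(\ref{eqn:u_d}) contains $\Delta\sigma$; thus the same two sign checks go through verbatim, using only $u_b,u_d,v\ge0$ and the invariants $x_b\le b$, $x_d\le d$, with no assumption on $\sigma$ needed. This establishes that $y^{(1)}$ Kamke-dominates $y^{(2)}$ in the reduced system and completes the lemma.
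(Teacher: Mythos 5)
Your proposal is correct and follows essentially the same route as the paper: the paper's proof consists exactly of the face-by-face sign checks on the difference equations~(\ref{eqn:u_b})--(\ref{eqn:v}) (using $\Delta\sigma\geq 0$ for part (a) and $v\geq 0$ for part (b)), which is what you do. The only difference is that you explicitly justify what the paper leaves implicit---the autonomy of $(b,d,s)$ across the two trajectories and the forward invariance of $0\leq x_b\leq b$, $0\leq x_d\leq d$, $0\leq y\leq s$ (the paper simply takes this state space as given, cf.\ Appendix~\ref{app:kurtz-applied-sirsi})---which strengthens rather than alters the argument.
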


\begin{proof}
 \begin{enumerate}[(a)]
 \item From equations~(\ref{eqn:u_b})-(\ref{eqn:v}). Since $\Delta \sigma \geq 0$, we find that, 
\begin{itemize}
 \item if $u_d, v \geq 0$ and $u_b=0$, then $\dot{u}_b \geq 0$
 \item if $u_b, v \geq 0$ and $u_d=0$, then $\dot{u}_d \geq 0$ 
 \item if $u_b, u_d \geq 0$ and $v=0$ $\dot{v} \geq 0$
\end{itemize}
This verifies the conditions of Definition~\ref{defn:control-domination}, and thus proves (a).

 \item From equations~(\ref{eqn:u_b})-(\ref{eqn:u_d}). Since $v \geq 0$, we find that, 
\begin{itemize}
 \item if $u_d \geq 0$ and $u_b=0$, then $\dot{u}_b \geq 0$
 \item if $u_b \geq 0$ and $u_d=0$, then $\dot{u}_d \geq 0$ 
\end{itemize}
This verifies the conditions of Definition~\ref{defn:control-domination}, and thus proves (b).
\end{enumerate}
\end{proof}

\begin{lemma} 
\label{lemma:order-preserving-SIRSI}
\begin{enumerate}[(a)]
 \item If $\forall t$, $\sigma^{(1)}(t) \geq \sigma^{(2)}(t)$, and $w^{(1)}(0) \geq w^{(2)}(0)$, then $\forall t$, $w^{(1)}(t) \geq w^{(2)}(t)$ 
 \item If $\forall t$, $y^{(1)}(t) \geq y^{(2)}(t)$, and $(x_b^{(1)}(0), x_d^{(1)}(0)) \geq (x_b^{(2)}(0), x_d^{(2)}(0))$, then $\forall t$, \[(x_b^{(1)}(t), x_d^{(1)}(t)) \geq (x_b^{(2)}(t), x_d^{(2)}(t))\] 
\end{enumerate}
\end{lemma}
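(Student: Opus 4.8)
The plan is to obtain this statement as an immediate consequence of the two preceding results: Lemma~\ref{lemma:domination-SIRSI} supplies the required Kamke-dominance, while Lemma~\ref{lemma:order-preserving} converts Kamke-dominance together with ordered initial data into an ordering of the entire trajectories. The only work is to match the hypotheses of each part to the correct instance of the abstract system $\dot{\mathbf{w}} = f(\mathbf{w};z)$ and to check that the regularity assumptions of Lemma~\ref{lemma:order-preserving} are met.

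For part (a), I would take $\mathbf{w} = (x_b, x_d, y)$ to be the full three-dimensional state governed by Equations~(\ref{eqn:x_b})--(\ref{eqn:y}), with the copying probability $\sigma(t)$ in the role of the control $z(t)$. The hypothesis $\sigma^{(1)}(t) \geq \sigma^{(2)}(t)$ for all $t$ is exactly the premise of Lemma~\ref{lemma:domination-SIRSI}(a), so $\sigma^{(1)}$ Kamke-dominates $\sigma^{(2)}$ in this system in the sense of Definition~\ref{defn:control-domination}. Combining this with the ordered initial condition $w^{(1)}(0) \geq w^{(2)}(0)$ and invoking Lemma~\ref{lemma:order-preserving} then yields $w^{(1)}(t) \geq w^{(2)}(t)$ for all $t$, which is the claim.

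For part (b), I would instead regard only the pair $(x_b, x_d)$ as the state $\mathbf{w}$, governed by the reduced system~(\ref{eqn:x_b})--(\ref{eqn:x_d}), and treat $y(t)$ itself as the exogenous control $z(t)$. The structural observation that makes this legitimate is that the interest-evolution coordinates $b(t)$ and $d(t)$, and hence $s(t) = 1 - b(t) - d(t)$, are common to both trajectories: the equations $\dot{b} = \beta d$ and $\dot{d} = -\beta d + \lambda \Gamma d s$ of Equation~(\ref{eqn:system-begin}) do not depend on the content-copying control, so the only quantity distinguishing the two reduced systems is the driving signal $y$. The hypothesis $y^{(1)}(t) \geq y^{(2)}(t)$ is then precisely the premise of Lemma~\ref{lemma:domination-SIRSI}(b), giving Kamke-dominance of $y^{(1)}$ over $y^{(2)}$, and Lemma~\ref{lemma:order-preserving}, applied with the ordered initial data $(x_b^{(1)}(0), x_d^{(1)}(0)) \geq (x_b^{(2)}(0), x_d^{(2)}(0))$, delivers the conclusion.

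Since the chaining of the two lemmas is mechanical, the part deserving care is verifying that Lemma~\ref{lemma:order-preserving} actually applies: its proof requires $f$ to be continuously differentiable and Lipschitz in both $\mathbf{w}$ and $z$, and the control to be piecewise Lipschitz continuous. The right-hand sides in~(\ref{eqn:x_b})--(\ref{eqn:y}) are polynomials in the state and enter the relevant control ($\sigma$ in part (a), $y$ in part (b)) in a smooth, indeed affine, fashion, so on the bounded invariant region $[0,b]\times[0,d]\times[0,1-b-d]$ these regularity conditions hold. The only caveat is that a control such as the time-threshold policy $\sigma_\tau$ is merely piecewise Lipschitz, which is exactly the generality already permitted by Lemma~\ref{lemma:order-preserving}; hence no new analytic difficulty arises and the statement follows.
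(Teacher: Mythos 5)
Your proposal is correct and is essentially the paper's own proof: the paper also establishes the lemma by applying Lemma~\ref{lemma:order-preserving} to the two Kamke-dominance instances given in Lemma~\ref{lemma:domination-SIRSI}(a) and (b), with $\sigma$ as the control for the full system $(x_b,x_d,y)$ in part (a) and $y$ as the control for the reduced system $(x_b,x_d)$ in part (b). Your added verification of the regularity hypotheses and the observation that $(b,d,s)$ is common to both trajectories are details the paper leaves implicit, but they do not change the argument.
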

\begin{proof}
 The proof follows by applying Lemma~\ref{lemma:order-preserving} to the systems in Lemma~\ref{lemma:domination-SIRSI} (a) and (b).
\end{proof}

Define $\mathbf{w}(t) = (x_b(t), x_d(t), y(t))$
We will now use the above lemmas to prove the claims in Section~\ref{sec:optimal-control}, in order to establish the optimality of a time-threshold policy. Recall that $\alpha$ has been fixed earlier. Recall the definitions of $T_\sigma$ and $y_\sigma(T_\sigma)$. We shall replace them with $T_\tau$ and $y_\tau(T_\tau)$ whenever $\sigma(t) = \sigma_\tau(t)$, a time-threshold policy. 

Consider $\tau$ such that $\tau \geq T_\tau =: \tau^\prime$. Evidently, $T_{\tau^\prime} = \tau^\prime$, and $y_\tau(T_\tau) = y_{\tau^\prime}(T_{\tau^\prime})$, $\forall \tau \geq \tau^\prime$. 

Define $\mathcal{T} = \{ \tau: \tau \geq 0 \}$ and $\underline{\mathcal{T}} = \{ \tau: \tau \leq T_\tau \} \subset \mathcal{T}$. Observe that $\{ \rho: \exists \ \tau \in \mathcal{T},  y_\tau(T_\tau) = \rho\} = \{ \rho: \exists \ \tau \in \underline{\mathcal{T}},  y_\tau(T_\tau) = \rho\}$. Hence we limit our discussion to $\tau \in \underline{\mathcal{T}}$. 

\begin{lemma}
\label{lemma:tau-rho-continuity}
 The set $\{ \rho : \exists \ \tau \in \underline{\mathcal{T}},  y_\tau(T_\tau) = \rho\}$ forms an interval of the form $[0,\rho_{\max}]$.
\end{lemma}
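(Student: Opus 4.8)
The plan is to realize the set in question as the image of the map $\Phi:\tau\mapsto y_\tau(T_\tau)$ restricted to $\underline{\mathcal{T}}$, and then to show that this image is a compact interval whose minimum value is $0$. Concretely, I would establish three facts: (i) $\underline{\mathcal{T}}$ is a compact interval $[0,\tau']$; (ii) $\Phi$ is continuous on $\underline{\mathcal{T}}$; and (iii) $\Phi(0)=0$ while $\Phi(\tau)\ge 0$ for all $\tau$. Granting these, $\Phi(\underline{\mathcal{T}})$ is a compact connected subset of $\mathbb{R}$, i.e.\ a closed bounded interval, whose minimum is $0$; writing $\rho_{\max}=\max_{\tau\in\underline{\mathcal{T}}}\Phi(\tau)$ then gives exactly $[0,\rho_{\max}]$, which is the claim.

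For (i), I would use the order-preservation of Lemma~\ref{lemma:order-preserving-SIRSI}(a). If $\tau_1\ge\tau_2$ then $\sigma_{\tau_1}(t)\ge\sigma_{\tau_2}(t)$ for all $t$, and the two trajectories share the common initial condition, so that lemma gives $\mathbf{w}_{\tau_1}(t)\ge\mathbf{w}_{\tau_2}(t)$ componentwise; in particular $x_{\tau_1}(t)=x_{b,\tau_1}(t)+x_{d,\tau_1}(t)\ge x_{\tau_2}(t)$, so the level $\alpha a(\infty)$ is reached no later under $\tau_1$, i.e.\ $T_{\tau_1}\le T_{\tau_2}$. Hence $\tau\mapsto\tau-T_\tau$ is strictly increasing; it is negative at $\tau=0$ (where the target is still reached in finite time, since with $\sigma\equiv 0$ the content still saturates the destination set, $x_0(\infty)=a(\infty)$) and positive for large $\tau$, so it has a unique zero $\tau'$ satisfying $\tau'=T_{\tau'}$. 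Thus $\underline{\mathcal{T}}=\{\tau:\tau\le T_\tau\}=[0,\tau']$, with closedness following from the continuity of $\tau\mapsto T_\tau$ proved next.

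For (ii), continuity splits into continuity of the trajectory map and of the hitting time. First, since the drift $f$ of the system (\ref{eqn:x_b})--(\ref{eqn:y}) is Lipschitz in $\mathbf{w}$ and the control $\sigma_\tau$ enters only through the location of its single jump, a Gronwall estimate comparing the solutions for two nearby thresholds (the drifts differ only on a time interval of length $|\tau_1-\tau_2|$, and by a bounded amount there) shows that $\tau\mapsto\mathbf{w}_\tau$ is continuous in the sup-norm over any compact horizon $[0,T]$. Second, differentiating $x=x_b+x_d$ along (\ref{eqn:x_b})--(\ref{eqn:x_d}) gives $\dot{x}=\lambda(x+y)(a-x)+\Gamma\lambda(d-x_d)y+\Gamma\lambda x_d s$, a sum of nonnegative terms, so $x_\tau(\cdot)$ is nondecreasing; moreover at the target time the content seed is still present ($x_d>0$, $s>0$, $d>0$), so $\dot{x}_\tau(T_\tau)>0$, i.e.\ $x_\tau$ crosses $\alpha a(\infty)$ transversally. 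Uniform convergence $x_{\tau_n}\to x_\tau$ together with this transversal crossing yields $T_{\tau_n}\to T_\tau$, and then $\Phi(\tau)=y_\tau(T_\tau)$ is continuous as a composition. Finally, for (iii), setting $\sigma\equiv 0$ in (\ref{eqn:y}) gives $\dot{y}=-\Gamma\lambda d\,y$ with $y(0)=0$, hence $y_0\equiv 0$ and $\Phi(0)=0$; since $y$ is a nonnegative population fraction, $\Phi\ge 0$ throughout, so $0$ is the minimum.

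The main obstacle I anticipate is step (ii), specifically making the continuity of the hitting time $T_\tau$ rigorous. Two points require care: establishing the transversality $\dot{x}_\tau(T_\tau)>0$ (which prevents a monotone $T_\tau$ from jumping) rests on the positivity of the seed quantities $x_d,s,d$ at the target time, and so on a short lemma showing these stay bounded away from $0$ up to $T_\tau$; and justifying continuous dependence of the trajectory on $\tau$ even though $\sigma_\tau$ is discontinuous in $t$ with a discontinuity that migrates with $\tau$ — here the Gronwall argument must be arranged so that the contribution of the mismatched interval vanishes as $\tau_1\to\tau_2$. The order-preservation lemmas do most of the structural work, leaving these two analytic continuity estimates as the remaining effort.
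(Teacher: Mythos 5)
Your proposal is correct and follows essentially the same route as the paper: continuity of the trajectory in $\tau$ (handling the migrating jump of $\sigma_\tau$ by a bounded-mismatch/Gronwall estimate), continuity of the hitting time $T_\tau$ via a strictly positive lower bound on $\dot{x}$ at the crossing, the observation that $\tau=0$ gives $y\equiv 0$, and an intermediate-value/connectedness argument to conclude the image is $[0,\rho_{\max}]$. Your step (i), explicitly showing $\underline{\mathcal{T}}=[0,\tau']$ via monotonicity of $T_\tau$, is a small tightening the paper leaves implicit, but it does not change the character of the argument.
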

\begin{proof}
Setting $\tau =0$, i.e., $\sigma(t) = 0$ , $\forall t$, we see from Equation~(\ref{eqn:system-end}), since $y(0)=0$, we have $y(t)=0$, $\forall t$. This establishes that $\rho=0$ belongs to the set. It is also easy to observe that $\rho_{\max}$ is achieved by $\tau$ such that $\tau = T_\tau$. By the intermediate value theorem, it now suffices to show that $y_\tau(T_\tau)$ is a continuous function of $\tau$. 

Consider $\tau, \tau+\delta \in \underline{\mathcal{T}}$, i.e., $\tau \leq T_\tau$, $\tau+\delta \leq T_{\tau+\delta}$. Observe from Lemma~\ref{lemma:order-preserving-SIRSI}, for $\delta >0$, $T_{\tau+\delta} \leq T_\tau$. With the above constraints, evidently the only case to be considered is 
\[\tau < \tau+\delta \leq T_{\tau+\delta} \leq T_\tau\]

Let $\mathbf{w}_\tau(.)$ and $\mathbf{w}_{\tau+\delta}(.)$ be the trajectories corresponding to $\sigma_\tau(t)$ and $\sigma_{\tau+\delta}(t)$, with $\mathbf{w}_\tau(0) = \mathbf{w}_{\tau+\delta}(0)$. Since $\sigma_\tau(t) = \sigma_{\tau+\delta}(t)$, $0  \leq t \leq \tau$, it follows that $\mathbf{w}_\tau(\tau) = \mathbf{w}_{\tau+\delta}(\tau)$. Further, from Equations~(\ref{eqn:system-mid})-(\ref{eqn:system-end}) we observe that, 

$||\mathbf{w}_\tau(\tau+\delta) - \mathbf{w}_{\tau+\delta}(\tau+\delta)||_2 \leq C_1 \delta$, where
\[C_1 = \sqrt{(\beta+\lambda)^2 + (2 \Gamma \lambda + \lambda + \beta)^2 + (\lambda + \Gamma \lambda)^2}\]
$\forall t>\tau+\delta$, $\sigma_\tau(t) = \sigma_{\tau+\delta}(t)$ and hence by continuity of o.d.e. system w.r.t initial conditions at $\tau+\delta$ implied by the Lipschitz nature of $f$ w.r.t $\mathbf{w}$ (see \cite{borkar08stoch-approx}),  $\mathbf{w}_\tau(.)$ is continuous w.r.t $\tau$. 

Recall that $\tau < \tau+\delta \leq T_{\tau+\delta} \leq T_\tau$. By the just observed continuity, for $\epsilon > 0$, we can obtain $\delta > 0$ such that, $ |x_{\tau+\delta} (T_{\tau+\delta}) - x_\tau(T_{\tau+\delta})| \leq \epsilon$ , i.e., 

\[ |\alpha a(\infty) - x_\tau(T_{\tau+\delta})| \leq \epsilon \]

However over the interval $(T_{\tau+\delta}, T_\tau]$, the rate of increase of $x_\tau(.)$ is bounded below by $\lambda a(t)^2 (1-\alpha)\alpha$ (from Equations~(\ref{eqn:system-mid})-(\ref{eqn:system-end})). Hence,

\[ |T_{\tau+\delta} - T_\tau| \leq \frac{\epsilon}{\lambda a(t)^2 (1-\alpha)\alpha} \]
which can be made small by choosing an appropriate $\delta > 0$. We have thus shown that $T_\tau$ continuous w.r.t $\tau$. $\tau \in \underline{\mathcal{T}}$.

To show the continuity of $y_\tau(T_\tau)$, consider,
\begin{eqnarray*}
 \lefteqn{| y_\tau(T_\tau) - y_{\tau+\delta}(T_{\tau+\delta}) |  \leq}  && \\
 & & | y_\tau(T_\tau) - y_{\tau}(T_{\tau+\delta}) | + | y_\tau(T_{\tau+\delta}) - y_{\tau+\delta}(T_{\tau+\delta}) |
\end{eqnarray*}
In the above equation, the first term on the right hand side can be made arbitrarily small by the continuity of $T_\tau$ w.r.t $\tau$ and the fact that $y_\tau(.)$ is a continuous trajectory, and the second term can be made arbitrarily small by the continuity of $\mathbf{w}_\tau(.)$ w.r.t $\tau$. Thus $y_\tau(T_\tau)$ is a continuous function of $\tau$, in the space of threshold policies. And since $y_\tau(T_\tau):\tau \rightarrow \rho$ and $\tau \in [0, \infty)$, we see that $\rho \in [0,\rho_{max}]$.
\end{proof}

\begin{lemma}
\label{lemma:threshold-dominates-samerho}
Let $\sigma(t)$ be a policy such that  $y_\sigma(T_\sigma) < \rho_{\max}$. Then there exists a threshold policy $\sigma_\tau(t)$ whose cost is no worse than that of $\sigma(t)$.
\end{lemma}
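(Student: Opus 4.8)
The plan is to manufacture, for the given policy $\sigma(t)$, a time-threshold policy that wastes exactly the same number of copies $\rho$ at its target time but reaches the target no later; the cost inequality is then immediate. First I would invoke Lemma~\ref{lemma:tau-rho-continuity}: since $\rho = y_\sigma(T_\sigma) < \rho_{\max}$ and the map $\tau \mapsto y_\tau(T_\tau)$ is continuous with range $[0,\rho_{\max}]$, the intermediate value theorem supplies a threshold $\tau \in \underline{\mathcal{T}}$ with $y_\tau(T_\tau) = \rho$. Writing the objective~(\ref{eqn:obj-function}) for both policies gives $C_\sigma = \psi\rho + T_\sigma$ and $C_\tau = \psi\rho + T_\tau$, so the entire lemma collapses to the single inequality $T_\tau \le T_\sigma$.

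To establish $T_\tau \le T_\sigma$ I would show that the threshold policy delivers content to destinations at least as fast, namely $x_\tau(t) \ge x_\sigma(t)$ for all $t \in [0,T_\tau]$; since $T_\sigma = \inf\{t : x_\sigma(t) \ge \alpha a(\infty)\}$ and $x$ is monotonically increasing, this forces $x_\tau$ to cross the level $\alpha a(\infty)$ no later than $x_\sigma$. Two structural facts drive the comparison. First, the interest variables $b,d,s$ are autonomous: they are governed by Equations~(\ref{eqn:system-begin}) and are unaffected by the control, so both runs share the same $b(t),d(t),s(t)$, and the only coupling of $\sigma$ into the $(x_b,x_d)$ subsystem is through $y$. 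Second, Lemma~\ref{lemma:order-preserving-SIRSI}(b) says that a pointwise-larger $y$-trajectory produces a pointwise-larger $(x_b,x_d)$-trajectory. On the initial window $[0,\tau]$ the threshold policy is maximal, $\sigma_\tau \equiv 1 \ge \sigma(\cdot)$, so Lemma~\ref{lemma:order-preserving-SIRSI}(a) already yields $y_\tau(t) \ge y_\sigma(t)$ and $x_\tau(t) \ge x_\sigma(t)$ on that window.

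The hard part is extending the domination of $x$ beyond $\tau$, where $\sigma_\tau$ switches copying off while $\sigma(t)$ may remain positive. There the naive ordering of the $y$-trajectories can fail: $y_\tau$ decays like $-\Gamma\lambda d\,y_\tau$ while $y_\sigma$ keeps receiving copies, so $y_\sigma$ may overtake $y_\tau$ and Lemma~\ref{lemma:order-preserving-SIRSI}(b) is no longer directly applicable. The resolution I would pursue exploits the matched terminal constraint $y_\tau(T_\tau) = y_\sigma(T_\sigma) = \rho$ together with the structure of Equation~(\ref{eqn:y}): the copies $\sigma_\tau$ makes before $\tau$ have strictly more time to be reabsorbed into the destination pool, because the $-\Gamma\lambda d\,y$ term moves relays-with-content into destinations-with-content and thereby feeds $x_d$ and hence $x$. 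Thus front-loading turns a larger fraction of its copies into \emph{delivered} content for the same residual wastage $\rho$. Concretely I would argue by contradiction: assuming $T_\tau > T_\sigma$, if $y_\tau \ge y_\sigma$ persists up to $T_\sigma$ then Lemma~\ref{lemma:order-preserving-SIRSI}(b) gives $x_\tau(T_\sigma) \ge \alpha a(\infty)$ and we are done; otherwise take the first crossing time $t_1 \in (\tau,T_\sigma)$ at which $y_\sigma$ overtakes $y_\tau$, note that $x_\tau \ge x_\sigma$ holds up to $t_1$, and balance the accumulated conversion $\int \Gamma\lambda d\,y$ over the transient against the equal endpoints to conclude $x_\sigma$ cannot reach $\alpha a(\infty)$ strictly before $x_\tau$. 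Closing this crossing argument cleanly — reconciling the phase where $y_\sigma > y_\tau$ with the head start $x_\tau$ already enjoys and the equal terminal wastage — is where essentially all the difficulty lies; everything else is bookkeeping built on Lemmas~\ref{lemma:tau-rho-continuity} and~\ref{lemma:order-preserving-SIRSI}.
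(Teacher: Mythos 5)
Your setup matches the paper's: invoke Lemma~\ref{lemma:tau-rho-continuity} and the intermediate value theorem to find $\tau$ with $y_\tau(T_\tau)=\rho$, reduce the lemma to the single inequality $T_\tau \le T_\sigma$, obtain domination on $[0,\tau]$ from Lemma~\ref{lemma:order-preserving-SIRSI}, and argue by contradiction starting from the first crossing time of the $y$-trajectories. But the step you defer --- ``balance the accumulated conversion $\int \Gamma\lambda d\,y$ over the transient against the equal endpoints to conclude $x_\sigma$ cannot reach $\alpha a(\infty)$ strictly before $x_\tau$'' --- is precisely the step you never supply, and it is not how the argument closes. Your intermediate target, $x_\tau(t)\ge x_\sigma(t)$ up to $T_\tau$, is stronger than what is needed and is not obviously available once $y_\sigma$ has overtaken $y_\tau$: beyond the crossing, Lemma~\ref{lemma:order-preserving-SIRSI}(b) points in the wrong direction, and no bookkeeping of conversion integrals is exhibited that turns the ``head start'' of $x_\tau$ at the crossing into the desired ordering of hitting times. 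As written, the proof has a genuine gap at its central step.

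The missing idea is that the contradiction lives entirely in the $y$ coordinate; the $x$ trajectories never need to be compared after the crossing. Suppose $T_\tau > T_\sigma$, and let $t_0 \in [\tau, T_\sigma)$ be the first time $y_\sigma$ catches up with $y_\tau$. For $t \ge t_0$ we have $\sigma_\tau(t) = 0 \le \sigma(t)$, so at any contact point $s>t_0$ with $y_\tau(s)=y_\sigma(s)$, Equation~(\ref{eqn:system-end}) gives $\dot y_\tau(s) = -\Gamma\lambda d\, y_\tau(s) \le \dot y_\sigma(s)$, because the copying term in $\dot y_\sigma$ is nonnegative regardless of how $x_b$ and $x_d$ differ between the two runs (recall $b,d,s$ are common to both, being uncontrolled). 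Hence the reversed order $y_\tau(s) \le y_\sigma(s)$ persists for all $s \ge t_0$. On the other hand, $y_\tau$ is strictly decreasing on $[\tau, T_\tau]$ with $y_\tau(T_\tau)=\rho$, so $T_\sigma < T_\tau$ forces $y_\tau(T_\sigma) > \rho$, whence $y_\sigma(T_\sigma) \ge y_\tau(T_\sigma) > \rho$, contradicting $y_\sigma(T_\sigma)=\rho$. This short comparison-lock argument is what replaces your unproved balancing step; the rest of your outline coincides with the paper's proof.
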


\begin{proof}
Since  $\rho := y_\sigma(T_\sigma) < \rho_{\max}$, there exists $\tau \geq 0$ such that $y_\tau(T_{\tau}) = \rho$. We will argue that $\sigma_\tau(t)$ is such that $T_\tau \leq T_\sigma$. 

Let $\mathbf{w}_\tau(t)$ and $\mathbf{w}_\sigma(t)$ be the system trajectories for the controls $\sigma_\tau(t)$ and $\sigma(t)$ respectively with $\mathbf{w}_\tau(0)=\mathbf{w}_\sigma(0)$. By definition, for $t \leq \tau$, 
$\sigma_\tau(t) \geq \sigma(t)$ and hence, by Lemma~\ref{lemma:order-preserving-SIRSI}, $\mathbf{w}_\tau(\tau)  \geq \mathbf{w}_\sigma(\tau) $. 

Suppose, contrary to the claim, $T_\tau > T_\sigma$. Evidently, this cannot happen if for all $t$, $0 \leq t \leq T_{\sigma}$, $y_\tau(t)  \geq y_\sigma(t)$. This is because, by Lemma~\ref{lemma:order-preserving-SIRSI}, we will have $x_\tau(t) \geq x_\sigma(t)$ and hence $T_\tau \leq T_\sigma$.

Hence, there exists $t_0$, $\tau \leq t_0 < T_\sigma$, such that $y_\tau(t_0) = y_\sigma(t_0)$. Then we have $y_\tau(t_0) = y_\sigma(t_0) > \rho$ (since $y_\tau(T_\tau)=\rho$ and $\tau \leq t_0 < T_\tau$, and for $t \geq \tau$, $\dot{y}_\tau(t) < 0$).
Also, for $t\geq t_0$, $\sigma_\tau(t) = 0 \leq \sigma(t)$. Thus from (\ref{eqn:system-end}), $\forall s \in \{ s: s> t_0, y_\tau(s) = y_\sigma(s)\}$, $\dot{y}_\tau(s) \leq \dot{y}_\sigma(s)$. Thus for $s\geq t_0$, $y_\tau(s) \leq y_\sigma(s)$. 
Since $y_\tau(T_\tau) = y_\sigma(T_\sigma) = \rho$, this implies $T_\tau \leq T_\sigma$.

\end{proof}

\begin{lemma}
\label{lemma:rho-greater-than-rhomax}
 Consider a non-threshold policy $\sigma(t)$ which achieves $\rho > \rho_{max}$. Consider the time-threshold policy $\hat{\sigma}(t)$ of the form, 
\begin{eqnarray*}
 \hat{\sigma}(t) = \left\{\begin{array}{cl}
                              1, & 0 \leq t < \sup\{t:\sigma(t) > 0\} \\
			      0, & \mbox{otherwise}
			      \end{array}\right.  
\end{eqnarray*}
Then $\hat{\sigma}(t)$ policy achieves a smaller total cost (given by Equation~\ref{eqn:obj-function}) than $\sigma(t)$.
\end{lemma}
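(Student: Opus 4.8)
The plan is to exploit the monotonicity machinery already in place: I would show that the constructed threshold policy $\hat{\sigma}$ dominates the given policy $\sigma$ pointwise in time, lift this to a pointwise ordering of the entire trajectory through Lemma~\ref{lemma:order-preserving-SIRSI}(a), and then close the argument using the ceiling $\rho_{\max}$ on wasted copies that Lemma~\ref{lemma:tau-rho-continuity} provides for \emph{any} threshold policy.

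First I would set $\hat{\tau} := \sup\{t : \sigma(t) > 0\}$ and note that $\hat{\sigma}$ is exactly the threshold policy $\sigma_{\hat{\tau}}$ and that it satisfies $\hat{\sigma}(t) \geq \sigma(t)$ for (almost) every $t$: for $t < \hat{\tau}$ we have $\hat{\sigma}(t) = 1 \geq \sigma(t)$, while for $t > \hat{\tau}$ the definition of the supremum forces $\sigma(t) = 0 = \hat{\sigma}(t)$ (the single point $t=\hat{\tau}$ is irrelevant to the o.d.e.). Starting both trajectories from the common initial condition $\mathbf{w}_{\hat{\sigma}}(0) = \mathbf{w}_{\sigma}(0)$, Lemma~\ref{lemma:order-preserving-SIRSI}(a) then gives $\mathbf{w}_{\hat{\sigma}}(t) \geq \mathbf{w}_{\sigma}(t)$ componentwise for all $t$, and in particular $x_{\hat{\sigma}}(t) = x_{b,\hat{\sigma}}(t) + x_{d,\hat{\sigma}}(t) \geq x_{\sigma}(t)$. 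Since the target level $\alpha a(\infty)$ is fixed and $x(\cdot)$ is monotone, pointwise domination of the content fraction yields $T_{\hat{\sigma}} \leq T_{\sigma}$.

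Next I would bound the wasted copies. Because $\hat{\sigma}$ is a threshold policy, Lemma~\ref{lemma:tau-rho-continuity} guarantees $y_{\hat{\sigma}}(T_{\hat{\sigma}}) \in [0,\rho_{\max}]$, so $y_{\hat{\sigma}}(T_{\hat{\sigma}}) \leq \rho_{\max} < \rho = y_{\sigma}(T_{\sigma})$ by hypothesis. Combining the two bounds with $\psi > 0$,
\[ C_{\hat{\sigma}} = \psi\, y_{\hat{\sigma}}(T_{\hat{\sigma}}) + T_{\hat{\sigma}} \leq \psi \rho_{\max} + T_{\sigma} < \psi \rho + T_{\sigma} = C_{\sigma}, \]
which is the claimed strict improvement.

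The step that deserves the most care, and which I regard as the crux, is the bound $y_{\hat{\sigma}}(T_{\hat{\sigma}}) \leq \rho_{\max}$. The very same ordering from Lemma~\ref{lemma:order-preserving-SIRSI}(a) also gives $y_{\hat{\sigma}}(t) \geq y_{\sigma}(t)$ for all $t$, which naively seems to say $\hat{\sigma}$ wastes \emph{more} copies, contradicting what we want. The resolution is that the two policies are evaluated at \emph{different} target times: $\hat{\sigma}$ reaches its target earlier, at $T_{\hat{\sigma}} \leq T_{\sigma}$, and at its own target a threshold policy can never breach the ceiling $\rho_{\max}$. In other words, the non-threshold policy $\sigma$ manages $\rho > \rho_{\max}$ only by ``over-copying'' beyond the threshold-policy regime by the time it hits its target. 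I would also flag the harmless degenerate case $\hat{\tau} = \infty$ (where $\hat{\sigma} \equiv 1$ still sits in the saturated threshold regime and attains exactly $\rho_{\max}$), so that the comparison remains valid in every case.
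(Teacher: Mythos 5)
Your proof is correct and follows essentially the same route as the paper's: both invoke Lemma~\ref{lemma:tau-rho-continuity} to bound the threshold policy's wasted copies by $\rho_{\max}$, and Lemma~\ref{lemma:order-preserving-SIRSI} (via the pointwise domination $\hat{\sigma}(t) \geq \sigma(t)$) to conclude $T_{\hat{\tau}} \leq T_\sigma$, then combine the two in the cost. Your additional remarks --- the explicit cost inequality, the resolution of the apparent conflict with $y_{\hat{\sigma}}(t) \geq y_\sigma(t)$, and the degenerate case $\hat{\tau}=\infty$ --- are sound elaborations of what the paper leaves implicit.
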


\begin{proof}
Let $\sigma(t)$ and $\sigma_{\hat{\tau}}(t)(t)$ be as above. Let $\mathbf{w}_{\hat{\tau}}(t)$ and $T_{\hat{\tau}}$ be as defined earlier for time-threshold policies, corresponding to $\sigma_{\hat{\tau}}(t)$. Then by Lemma~\ref{lemma:tau-rho-continuity},  $y_{\hat{\tau}}(T_{\hat{\tau}}) \leq \rho_{max} < \rho = y_\sigma(T_\sigma)$, and by Lemma~\ref{lemma:order-preserving-SIRSI}, $T_{\hat{\tau}} \leq T_\sigma$, and hence the $\sigma_{\hat{\tau}}(t)$ policy achieves a smaller total cost than $\sigma(t)$.
\end{proof}

\bibliographystyle{IEEEtran}
\bibliography{bib-infocom12}

\end{document}